\newcommand{\spanv}{{\mathrm{span}}}
\title{\vspace{-1cm}Asymptotic Localization in the Bose-Hubbard Model}
\author{
Alex Bols\\
\normalsize
Instituut voor Theoretische Fysica, K.U.Leuven\\
\normalsize\normalsize
Celestijnenlaan 200 D\\
\normalsize
3001 Leuven, Belgium\\
\normalsize
E-mail: \texttt{alexander.bols@kuleuven.be}
\vspace{0.5cm}\\
Wojciech De Roeck\\
\normalsize
Instituut voor Theoretische Fysica, K.U.Leuven\\
\normalsize
Celestijnenlaan 200 D\\
\normalsize
3001 Leuven, Belgium\\
\normalsize
E-mail: \texttt{wojciech.deroeck@kuleuven.be}\\
}
\date{}
\begin{document}

\maketitle

\abstract{We consider the Bose-Hubbard model.  Our focus is on many-body localization, which was described by many authors in such models, even in the absence of disorder.  Since our work is rigorous, and since we believe that the localization in this type of models is not strictly valid in the infinite-time limit, we necessarily restrict our study to `asymptotic localization': We prove that transport and thermalization are small beyond perturbation theory in the limit of large particle density. Our theorem takes the form of a many-body Nekhoroshev estimate. An interesting and new aspect of this model is the following: even though our analysis proceeds by perturbation in the hopping, the localization can not be inferred from a lack of hybridization between zero-hopping eigenstates. 
}

\section{Introduction}

\subsection{Many-Body Localization and Asymptotic localization}
Many-body localization (MBL) is a  tantalizing  phenomenon, which can be considered a breakdown of basic thermodynamics in strongly disordered quantum spin systems
\cite{basko2006metal,gornyi2005interacting,oganesyan2007localization,vznidarivc2008many,pal2010many,potter2015universal,vosk2015theory,ros2015integrals, khemani2016obtaining,kjall2014many,luitz2015many}.
According to us, the most powerful and useful definition is via the existence of a complete set of quasi-local conserved quantities (aka LIOM's, `local integrals of motion'), see \cite{serbyn2013local,imbrie2016many,huse2014phenomenology}.  This definition stresses that MBL systems are robustly integrable. A lot of interesting characteristics, like absence of transport and thermalization, area-law of entanglement, logarithmic spreading of entanglement, etc. can be deduced from this definition. All of this is discussed at length in the literature, see e.g.\ \cite{nandkishore2014many,eisert2015quantum,imbrie2016review}

Several authors have proposed that MBL should apply as well in systems without quenched disorder, in particular in translation-invariant systems, see \cite{carleo2012localization,de2014asymptotic,de2015asymptotic,
hickey2016signatures,kagan1984localization,schiulaz2015dynamics,schiulaz2013ideal,grover2014quantum,
yao2014quasi,pino2016nonergodic,he2016possibility,kim2016localization,singh2016localization}, as the one studied in this paper.
Whereas this issue is not settled yet, and in particular, we believe that this claim is false when taken literally \cite{de2014scenario,de2016absence} (see also the numerics in \cite{papic2015many,van2015dynamics,bera2016density}),  it is clear that a lot of localization phenomenology is present in such systems. To put it succinctly; they behave for a very long time as if they were localized, even if their eigenvectors might eventually be ergodic (i.e.\ satisfy ETH, see \cite{deutsch1991quantum,srednicki1994chaos}).
In earlier work, we dubbed this property `asymptotic localization' (often one encounters also the term 'quasi-MBL') stressing the fact that transport and thermalization in such models are non-perturbative effects (weaker than any order in perturbation theory) and hence they are visible only on very long timescales. 

There has already been rigorous work establishing this localization phenomenology  \cite{huveneers2013drastic,de2014asymptotic,de2015asymptotic}. It should be noted that, unlike for Anderson localization, rigorous results on MBL are scarce, with the notable exception of \cite{imbrie2016many}, see however also \cite{stolz2016many,seiringer2016decay,abdul2016mathematical,mastropietro2016localization}.
The present paper establishes asymptotic MBL for the Bose Hubbard model, which is a natural model due to the interest in cold atom systems. 
However, the main motivation for this work is that the (asymptotic) localization in the Bose-Hubbard model is far more subtle than in most of the translation-invariant systems referred to above. This is explained next.

%
%
%


\subsection{The `quantum localized'  versus  `classical localized' regime}
In this section, for the sake of focus, we write `localization' without distinguishing between genuine and asymptotic localization, but simply referring to `localization phenomena'.   We also rectrict to chains (dimension one).
Let us start with a general bosonic chain Hamiltonian of the type
\beq \label{eq: basic ham}
H=  \sum_x U N_x^{q} +   J (a_xa^*_{x+1}  + a_{x+1}a^*_{x})
\eeq
with $[a_x,a^*_{x'}]=\delta_{x,x'}$ (bosonic fields) and $N_x=a^*_xa_x$.
For $q=1$, this Hamiltonian describes non-interacting bosons, whereas for $q=2$, we recover the Bose-Hubbard model studied in this paper.
The main idea behind localization in this model applies at large mean density $ \langle N_x \rangle $:  Consider the energy difference $\Delta E$ resulting from one boson hopping from $x$ to $x+1$:
$$
\Delta E =  U\left((N_{x+1}+1)^q+(N_x-1)^q \right)
-  U \left(N^q_{x+1}+N_x^q \right)  \approx Uq (N_{x+1}^{q-1}- N_{x}^{q-1})    $$
Since, in most configurations at large density,  $N_{x+1}-N_x$ is typically large as well, we see that $\Delta E$ is typically increasing with the mean density  as soon as $q>1$. 
The energy difference $\Delta E$ should of course be compared with a typical matrix element of the hopping. The latter is not just of order $J$ because the operators $a_x,a_x^*$ grow with density as well, as $\sqrt{N_x}$, and hence the matrix element for the one-boson hop goes roughly as 
$ g \sqrt{N_xN_{x+1}} 
$. So 
\beq \label{eq: naive hybridization}
\frac{\text{matrix element}}{\Delta E}  \approx  (J/U)  \langle N_x \rangle^{2 - q},
\eeq
which suggests that localization occurs for $q>2$ and the Bose-Hubard model $q=2$ is the critical case, where localization could still be expected for $(J/U) \ll 1$. 

However, what the above reasoning shows, is simply that there is no localization in the $N_x$ basis for $q \leq 2$. That does not exclude localization in some other basis. 
This type of pitfalls is always an issue with heuristic approaches to localization: one argues for localization/delocalization by convergence/divergence of a certain locator expansion, but that is necessarily tied to a choice of basis, which might not be the correct one.  A discussion of examples where this happens can be found e.g.\ in \cite{de2016absence}, the most striking example is the `quantum percolation problem', see \cite{shapir1982localization,veselic2005spectral}, though one might also consider one-particle localization in the continuum as an instance of this.  The above model for $1< q\leq 2$ furnishes in fact a new example of this kind, i.e.\ a case of `non-obvious' localization.

Indeed, let us discuss at least two indications why the no-localization conclusion for $q\leq 2$ is suspicious:
\paragraph{The non-interacting case $q=1$}
 At $q=1$, one has non-interacting bosons. It is then well-known that if $U=U_x$ is a random field, then the one-boson problem  is exponentially localized. If we second-quantize, then obviously the localization persists in an appropriate sense. It is instructive to look at an extreme toy model where the lattice consists of two points $x=1,2$ and we have two eigenmodes $\phi_{1,2} \in \bbC^2$ strongly localized in the sense that 
 $$\str\phi_1(2)\str^2=\str\phi_2(1)\str^2=\epsilon\ll 1,$$
 The second-quantized, localized base states are 
$\str n_{\phi_1},n_{\phi_2}\rangle \equiv  (a^*_{\phi_1})^{n_{\phi_1}}a^*_{\phi_2})^{n_{\phi_2}} \Omega$ where  $a^*_{\phi_i}=\sum_x \phi_i(x) a^*_x$ are the creation operators of the modes $\phi_i$ and $\Omega$ is the vacuum state (no bosons).  The states are rather delocalized in the basis of `bare' number states $\str n_{1},n_{2}\rangle \equiv  (a^*_{1})^{n_{1}}(a^*_{2})^{n_{2}} \Omega$. They are typically spread out over order of $\sqrt{n}$ states, when $n \approx n_1 \approx n_2$.   This means that the hybridization between zero-hopping eigenstates, as predicted from \eqref{eq: naive hybridization} indeed happens, but it does not tell the full story, namely that there is localization in some other basis.
%

\paragraph{Semi-classical limit} The classical problem resembling \eqref{eq: basic ham}, namely the nonlinear Schr\"{o}dinger chain,
\beq \label{eq: schrodinger chain}
H(\psi_x) =  \sum_x  U\str\psi_x\str^{2q}  + J(\psi_x \overline\psi_{x+1} + \overline\psi_x \psi_{x+1})
\eeq
shows `asymptotic localization' all the way down to $q>1$. This was proved in \cite{de2014asymptotic} for $q=2$ but the argument remains valid for all $q>1$, see  \cite{de2014asymptotic} for the relevant heuristics.  
The model \eqref{eq: schrodinger chain} can indeed be viewed as a semi-classical limit of \eqref{eq: basic ham} because at large density, the commutator $[a_x,a^*_x]=1$ is small with respect to the typical values of these operators.  We do not develop this theme, started in \cite{hepp1974classical}, see e.g.\ \cite{benedikter2016mean,hepp1974classical} for references.\\

These two observations suggest that the localization can be exhibited in a basis that is reminiscent of coherent states  (\cf the semi-classical limit alluded to above), in particular different from the basis in which the $N_x$ are diagonal.
 This is indeed what happens.   For the sake of simplicity, the present paper deals with the case $q=2$ only, (Bose Hubbard model) and so we do not focus on this interesting dichotomy between number basis and coherent-like states. Indeed, at the critical case $q=2$, it is the ratio $J/U$ that tunes the hybridization between zero-hopping states. 
 However, the details of the `localization basis' can be read off from the generators of the unitary transformations that approximatively diagonalize the Hamiltonian. The fact that there is hybridization between $N_x$-eigenstates then corresponds simply to the statement that, even locally, these generators have large norm when $J/U \gg 1$, see the discussion around equation \eqref{eq:size of generator} at the end of section \ref{sec_removal of non-resonant terms}.  In fact, in line with the above remarks, the proofs in this paper are quite similar to those in \cite{de2015asymptotic}.
 
\subsection{Many-Body Nekhoroshev estimates}
 
Nekhoroshev estimates \cite{poschel1993nekhoroshev} in classical mechanics express the fact that dynamics is very slow, also away from KAM tori. In this light, our work can be seen as providing a many-body version of these concepts, though there is no exact analogue: Typical Nekhoroshev estimates express that action variables remain close to their original values for long times. We do not prove nor do we believe this to be true in our system. Instead we focus on the energy content $H_I$ of intervals $I$ and we show, see Theorem \ref{thm_nekhoroshev}, that this remains close to its typical value, where the notion of closeness resides in the fact that the difference $H_I(t)-H_I(0)$ is independent of the length $\str I \str$. Moreover, this estimate does not apply deterministically, but only typically with respect to the Gibbs state. Such a restriction is clearly unavoidable in view of the fact that the localization phenomena depend on high density. 
 \subsection{Acknowledgements}

The authors benefit from funding by the InterUniversity Attraction Pole DYGEST (Belspo, Phase VII/18), from the  DFG
(German Research Fund) and the FWO (Flemish Research Fund).



%
%

\section{Model and Results}
We define our model with more care for details and we state the results.
\subsection{Model}

Let $N \geq 1$ be an odd integer and let $\Z_N = \{-(N-1)/2, \cdots, (N-1)/2 \}$. We define the Hilbert space
\begin{equation}
\caH_N := \otimes_{x \in \Z_N} l^2 (\N) \sim l^2 (\N{^{\Z_N}}),
\end{equation}
\ie at each site labelled by $x \in \Z_N$ there is an infinite-dimensional spin-space. For an operator $\hat O$ acting on $\caH_N$ we denote by $s(\hat O)$ the minimal set  $A \subset \Z_N$ such that $\hat O = \hat O_A \otimes \I_{\Z_N \setminus A}$ for some $\hat O_A$ acting on $\caH_A$, and $\I_{A'}$ the identity on $\caH_{A'}$ for any $A' \subset \Z_N$. We do not distinguish between $\hat O_A$ and $\hat O$, and we will denote both by the same symbol.

Let $\hat a, \hat a^*$ be the bosonic annihilation/creation operators on $l^2(\N)$:
\begin{equation}
(\hat a f)(n) = \sqrt{n+1} f(n+1), \hspace{1.5cm} (\hat a^* f)(n+1) = \sqrt{n+1} f(n), \hspace{1.5cm} \text{for } n \in \N.
\end{equation}
We write $\hat a_x, \hat a^*_x$ for the annihilation/creation operators acting on site $x$. We also define the number operators
\begin{equation}
\hat N_x := \hat a^*_x \hat a_x
\end{equation}
and the total particle operator
\begin{equation}
\hat N := \sum_{x \in \Z_N} \hat N_x.
\end{equation}
The vectors diagonalizing the operators $\hat N_x$ play a distinguished role in our analysis. For a finite set $A$ we define the \emph{phase space} $\Omega_A := \N^A$ with elements
\begin{equation}
\eta = (\eta(x))_{x \in A}, \hspace{1.5cm} \eta(x) \in \N
\end{equation}
such that $\caH_A \sim l^2(\Omega_A)$ and we often use $\eta$ as a label for the function $\delta_{\eta}$ \ie $\delta_{\eta}(\eta') = \delta_{\eta, \eta'}$ for $\eta, \eta' \in \Omega_A$. We will also write $\Omega_N := \Omega_{\Z_N}$.

For each odd integer $N$ we consider the Bose-Hubbard Hamiltonian on $\caH_N$ with a chemical potential $\mu$ and free boundary conditions:
\begin{equation}
\hat H = \hat D + g\hat V = \sum_{x \in \Z_N} \hat H_x = \sum_{x \in \Z_N} (\hat D_x + g\hat V_x) 
\end{equation}
with $g \in \R$ a fixed coupling constant and 
\begin{align}
\hat D_x = \hat N_x^2 \;\;\; &\text{ for all } x \in \Z_N, \\
\hat V_x = \hat a_x^* \hat a_{x+1} + \hat a_x \hat a_{x+1}^* \;\;\; &\text{ for } x \in \Z_N \setminus \{(N-1)/2\} \;\;\; \text{ and } \;\;\; \hat V_{(N-1)/2} = 0.
\end{align}
The dimensionless coupling constant $g$ plays to role of the ratio $J/U$ mentioned in the introduction.
\subsection{States and correlations}

All operators appearing in the proof will belong to the algebra $\caB$ finitely generated by the field operators $\hat a_x, \hat a^{\dag}_x$ and bounded functions of the number operators $\hat N_x$. We define states on this algebra corresponding to the grand canonical ensemble at infinite temperature and chemical potential $\mu$ by setting for each operator $\hat O \in \caB$
\begin{equation} \label{def_themal states}
\omega_{\mu} \big( \hat O \big) := \frac{\Tr \ed^{-\mu \hat N} \hat O}{\Tr \ed^{-\mu \hat N}}.
\end{equation}
As was explained in the introduction, we expect localization phenomena in the regime of high occupation numbers. We realise this regime by considering the measure $\omega_{\mu}$ at $\mu \ll 1$.

\subsection{Results}

First we state a result expressing the invariance of the energy content of finite sub-volumes for long times (again polynomial in $\mu^{-1}$). This result is reminiscent of Nekhoroshev estimates for classical systems with a finite number of degrees of freedom. Let $I = \{a_1, a_1 + 1, \cdots, a_2 \} \subset \Z_N$ be a discrete interval and let $\hat H_I = \sum_{x \in I} \hat H_x$. Then
\begin{theorem}\label{thm_nekhoroshev}
For any $n \geq 1$ there is a number $C_n < +\infty$ which is independent of the chain length $N$, such that for any interval $I$ as above we have
\begin{equation}\label{eq:nekhoroshev}
\omega \left( \big( \hat H_I(t) - \hat H_I(0) \big)^2 \right) \leq C_n \mu^{-4} \qquad \text{for any} \quad 0 \leq t \leq \mu^{-n}
\end{equation}
if $\mu$ is sufficiently small.
\end{theorem}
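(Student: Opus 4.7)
The plan is to carry out an iterative unitary normal-form transformation in the spirit of classical Nekhoroshev proofs, adapted to the quantum many-body bosonic setting, and then to exploit locality together with the near-commutation of the rotated Hamiltonian to obtain a boundary-only estimate in time. First I would equip the algebra $\mathcal{B}$ with weighted seminorms that track both the geometric support of an operator and its polynomial growth in the occupation numbers; these should control $\omega_\mu$-expectations with only polynomial-in-$\mu^{-1}$ losses and behave submultiplicatively under the commutator manipulations below.

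The core construction is an iteration of bond-local anti-Hermitian generators $\hat S_1, \ldots, \hat S_n$ solving the homological equations $[\hat D, \hat S_k] = -g\, \hat V^{(k-1)}_{\mathrm{off}}$, where $\hat V^{(k)}$ is the effective perturbation produced at order $g^k$ by the Baker--Campbell--Hausdorff expansion of the preceding conjugation and the subscript $\mathrm{off}$ denotes its non-resonant part in the simultaneous $\hat N_y$-eigenbasis. For an elementary hop the energy denominator is $\Delta E = 2(\hat N_x - \hat N_{x+1} + 1)$, which vanishes on the codimension-one resonance $\{\hat N_{x+1} = \hat N_x + 1\}$; those matrix elements are retained in the diagonal part rather than divided through. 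Setting $\hat W_n := e^{\hat S_n}\cdots e^{\hat S_1}$ one arrives at
\[
\hat W_n^* \hat H \hat W_n \;=\; \hat D^{(n)} + g^{n+1} \hat R^{(n)},
\]
where $\hat D^{(n)} = \sum_x \hat D^{(n)}_x$ is a sum of strictly local operators that pairwise commute and commute with every $\hat N_y$ (each being $\hat D$-block-diagonal), and $\hat R^{(n)}$ is a controlled quasi-local remainder.

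Now set $\hat K := \hat W_n^* \hat H_I \hat W_n = \hat H_I + \hat B_I$, where $\hat B_I$ is supported in a bounded neighbourhood of the two-point boundary $\partial I = \{a_1,a_2\}$ because each $\hat S_k$ is bond-local; in particular its $\omega_\mu$-variance is $|I|$-independent. In the rotated frame $\hat H_I(t) - \hat H_I(0) = \hat W_n(\hat K(t) - \hat K)\hat W_n^*$ with $\hat K(t)$ evolved by $\hat H' = \hat D^{(n)} + g^{n+1}\hat R^{(n)}$. Because the $\hat D^{(n)}_y$ pairwise commute and differ from $\hat H_y$ only by $O(g)$ local corrections, $[\hat D^{(n)}, \hat K]$ is supported only near $\partial I$, and the Heisenberg derivative splits as
\[
\tfrac{d}{dt}\hat K(t) \;=\; i\,[\hat D^{(n)}_{\partial I}, \hat K(t)] \,+\, i g^{n+1}[\hat R^{(n)}, \hat K(t)],
\]
with the first commutator having $\omega_\mu$-variance of order $\mu^{-4}$ (the typical scale of $\hat H_x^2$ at a boundary site) and the second being of order $g^{n+1}$ times a polynomial in $\mu^{-1}$. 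Integrating to $t \leq \mu^{-n}$ and choosing $n$ large enough that the $g^{n+1}t$ contribution is dominated by the first term closes the bound $\omega_\mu((\hat H_I(t)-\hat H_I(0))^2) \leq C_n \mu^{-4}$; all constants are $N$-independent because $\omega_\mu$ is a product state and every tracked quantity is a finite sum of strictly local operators.

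The main obstacle is the second step: constructing the $\hat S_k$ as genuine local operators despite the unboundedness caused by factors of $\sqrt{\hat N_x \hat N_{x+1}}$, cleanly absorbing the nontrivial resonance $\hat N_{x+1} = \hat N_x + 1$ into $\hat D^{(n)}$ rather than dividing through it, and controlling the growth of the weighted norms of $\hat S_k$, of the BCH remainders, and of $\hat R^{(n)}$ at each order, uniformly in the chain length $N$, while transferring these bounds to $\omega_\mu$-expectations with only polynomial losses in $\mu^{-1}$.
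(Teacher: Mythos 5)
Your overall skeleton --- conjugate by a quasi-local unitary, write the rotated $\hat H_I$ as a slowly varying bulk part plus a boundary term of variance $O(\mu^{-4})$, and use stationarity of the Gibbs state to make the boundary contribution $t$-independent --- is indeed the architecture of the paper's argument (there the current is decomposed as $\hat J_a = i\,\ad_{\hat H}\hat U_a + \mu^{n+1}\hat G_a$ with $\omega(\hat U_a^2)\le C\mu^{-4}$, and $\hat U_a(t)-\hat U_a(0)$ is the temporal boundary term). But the claim on which everything rests is false: the normal form $\hat D^{(n)}$ is \emph{not} a sum of pairwise commuting local operators commuting with every $\hat N_y$. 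The matrix elements you ``retain in the diagonal part'' because their denominators (nearly) vanish are off-diagonal hoppings $\langle \eta+\rho|\cdot|\eta\rangle$ with $\rho\neq 0$; retaining them does not make them diagonal. Moreover one must retain not only the exact resonance $\hat N_{x+1}=\hat N_x+1$ but a whole slab of configurations with $|\Delta_\rho E|$ up to order $\mu^{-\gamma}$ (otherwise the generators $\hat S_k$ are uncontrolled), and the Gibbs probability that such a resonant hopping is active across a fixed bond is only a fixed positive power of $\mu$ (namely $\mu^{\gamma'}$ in the paper's notation), not $\mu^{n+1}$. Hence $[\hat D^{(n)},\hat D^{(n)}_I]$ does not vanish, the residual current is not beyond all orders, and your bound saturates at one power of $\mu$ no matter how many normal-form steps you take. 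The paper's sections on the geometry of resonances exist precisely to repair this: one searches a window of $n_3$ bonds near $\partial I$ for a resonance-free bond, makes the left/right splitting state-dependent through indicator functions $\theta_x$, and --- the genuinely hard step --- proves those indicators are approximately invariant under the resonant dynamics by constructing the flattened cylinders $\mathsf B(\rho_1,\dots,\rho_p)$, so that the failure event is an $n_2$-fold multi-resonance of probability $O\big((\mu\delta^{-\gamma})^{n_2}\big)$. None of this mechanism appears in your proposal, and without it the theorem does not follow.

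A second, independent defect is the bookkeeping of the small parameter. You expand in powers of $g$ and propose to beat $t\le\mu^{-n}$ by pushing the remainder to order $g^{n+1}$. But $g$ is a fixed order-one constant --- the paper stresses that the generator is \emph{not} small, $\mu\hat u^{(1)}\sim g$ --- so a remainder of size $g^{n+1}\cdot\mathrm{poly}(\mu^{-1})$ integrated up to $t=\mu^{-n}$ only gets worse as $n$ grows. The gain per order must come from powers of $\mu$: in the paper each order costs $\delta^{-2\gamma'}$ against a prefactor $\mu$, netting $\mu^{1-2\gamma'}$ with $\gamma'<1/2$, and the truncation order is chosen as $n_1=\lceil (n_0+4-\gamma)/(1-2\gamma')\rceil$. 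As written, your final ``choose $n$ large enough'' step cannot close.
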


For a thermalizing system, one expects that, during the process of equilibration, the energy content of the interval $I$ changes with time at a rate roughly set by the thermal diffusivity. Once equilibrium is attained after some characteristic time $\tau_{eq}$, the energy in the interval $I$ fluctuates around its equilibrium value, at which point the left hand side of \eqref{eq:nekhoroshev} is expected to scale with the length $\abs{I}$ of the interval. What theorem \ref{thm_nekhoroshev} shows is that the persistent flow of energy into or out of the region $I$ required to equilibrate the system is absent for times that are polynomially long in $\mu^{-1}$. Indeed, we can take the interval $I$ so large that the expected difference of the energy content of the sub-volume $I$ from its equilibrium value is much larger than the bound $\mu^{-4}$ in the right hand side of \eqref{thm_nekhoroshev}. For an equilibrating system we would then expect that at some point in time the bound is surpassed because there should be a persistent energy current until the equilibrium energy content is reached, but the theorem shows that these persistent currents are so small that the bound is not passed at times that are polynomially long in $\mu^{-1}$.  A fortiori, this shows that the timespan $\tau_{eq}$ needed for the system to reach equilibrium goes to infinity faster than any power of $\mu^{-1}$.

Next, we consider in more detail the occurrence of persistent energy currents at long but finite times. For this purpose we first introduce some definitions.
We define the energy current through the bond $(x, x+1)$ for $x \in \Z_N$ by
\begin{equation}\label{def:current}
\hat J_x := i \; \ad_{\hat H} \hat H_{>x} \qquad \text{ where } \qquad \hat H_{>x} := \sum_{y > x} \hat H_y
\end{equation}
\ie $\hat J_x$ is the observable corresponding to the rate of change of the energy contained in the part of the chain to the right of the site $x$, which is indeed the energy current across the bond $(x, x+1)$.

The theorem above is a direct consequences of the following abstract result expressing that, to all orders in perturbation in $\mu$, no persistent energy currents can be produced by the dynamics. Only local oscillations of the energy density can be seen at any finite order in $\mu$.

\begin{theorem}\label{thm_currentdecomposition}
For any integer $n \geq 1$ we can find a number $C_n < +\infty$ independent of $N$ such that the current across any bond $(a, a+1)$ with $a \in \Z$ can be decomposed as
\begin{equation*}
\hat J_{a} = i \; \ad_{\hat H} \hat U_a + \mu^{n+1} \hat G_a
\end{equation*}
if $\mu$ is small enough. The required smallness of $\mu$ is independent of $N$.

The self-adjoint operators $\hat U_a, \hat G_a$ are of zero average, $\omega(\hat U_a) = \omega(\hat G_a) = 0$, and they are local in the sense that they depend only on sites labelled by $z \in \Z_N$ with $\abs{z - a} \leq C_n$, and they satisfy the bounds
\begin{equation}
\omega(\hat U_a^2) \leq C_n \mu^{-4} \; , \qquad \omega(\hat G_a^2) \leq C_n.
\end{equation}
\end{theorem}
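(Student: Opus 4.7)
The plan is to construct, via an iterative KAM-type scheme, a quasi-local unitary $\hat W$ that approximately block-diagonalizes $\hat H$ with respect to $\hat D = \sum_x \hat N_x^2$, in the form $\hat W^* \hat H \hat W = \hat K + \hat R$ with $[\hat K, \hat D] = 0$ and $\hat R$ of $\omega$-size $O(\mu^{n+1})$. At each step $k$, one chooses an antihermitian quasi-local generator $\hat S_k$ by solving the homological equation $[i\hat D, \hat S_k] = \hat V_{k-1}^{\mathrm{NR}}$, done on each $\hat D$-eigenspace by dividing matrix elements by the appropriate energy denominator; the exact resonances where this denominator vanishes (e.g.\ $\hat N_y = \hat N_{y+1} + 1$) are instead absorbed into $\hat K$ as off-diagonal swap terms that still commute with $\hat D$. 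The regime $\mu \ll 1$ furnishes the smallness needed so that $O(n)$ iterations suffice to reach the claimed remainder size.

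The operator $\hat U_a$ is constructed in tandem with the KAM scheme. Start from $\hat U_a^{(0)} := 0$ and $\hat J_a^{(0)} := \hat J_a$, the latter being a local operator near the bond $a$, since a direct computation gives $\hat J_a = i g [\hat V_a, \hat D_{a+1}] + i g^2 [\hat V_a, \hat V_{a+1}]$ (all other contributions to $i[\hat H, \hat H_{>a}]$ vanish by disjointness of supports). At step $k$, decompose the residue $\hat J_a^{(k)} = \hat J_a^{(k),\mathrm{res}} + \hat J_a^{(k),\mathrm{NR}}$ into its $\hat D$-resonant and non-resonant parts, solve $[i\hat D, \hat u_a^{(k)}] = \hat J_a^{(k),\mathrm{NR}}$ for a local $\hat u_a^{(k)}$ by the same division procedure, and set $\hat U_a^{(k+1)} := \hat U_a^{(k)} + \hat u_a^{(k)}$ with updated residue
\[
\hat J_a^{(k+1)} = \hat J_a^{(k),\mathrm{res}} - i g [\hat V, \hat u_a^{(k)}],
\]
which is again a local operator (the commutator stays finite-range). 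The resonant piece $\hat J_a^{(k),\mathrm{res}}$ is matched with the ambient $\hat K$-update at the same order so that it shrinks step by step. After $O(n)$ iterations, $\hat U_a := \hat U_a^{(n)}$ is a local operator and $\hat J_a^{(n)} = \mu^{n+1}\hat G_a$ for a local $\hat G_a$. The bound $\omega(\hat U_a^2) \leq C_n \mu^{-4}$ follows since $\hat U_a$ is a finite-range operator dominated by $\hat N_y^2$-type contributions with $\omega(\hat N_y^4) = O(\mu^{-4})$; the bound $\omega(\hat G_a^2) \leq C_n$ follows once the $\mu^{n+1}$ prefactor has been extracted.

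The main obstacle is the KAM iteration itself. Because the bosonic field operators $\hat a_y, \hat a_y^*$ are unbounded, operator-norm estimates are unavailable and one must work throughout in $\omega$-weighted norms. Each step enlarges the support of the generators, introduces higher powers of $\hat N_y$, and requires division by an energy denominator that vanishes on the sparse resonance set $\{\hat N_y = \hat N_{y+1} + 1\}$; these small-denominator contributions must be absorbed into $\hat K$ as off-diagonal swap terms and tracked consistently through all later steps. Controlling the $\omega$-norms of all operators generated by this iteration, in the presence of unbounded fields and sparse resonances, is the heart of the argument and closely parallels the classical-mechanics analysis of~\cite{de2015asymptotic}.
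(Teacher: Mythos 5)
There is a genuine gap, and it sits exactly at the point you dispatch in one sentence: ``The resonant piece $\hat J_a^{(k),\mathrm{res}}$ is matched with the ambient $\hat K$-update at the same order so that it shrinks step by step.'' It does not shrink, and no purely perturbative iteration will make it shrink. The homological equation can only remove the non-resonant part of the current; the resonant part consists of hopping terms whose energy denominator is small, and these genuinely transport energy across the bond $(a,a+1)$ whenever the configuration $\eta$ happens to be resonant there. The Gibbs probability of a resonance at a \emph{fixed} bond is only polynomially small in $\mu$ of some fixed degree (roughly $\mu^{1-\gamma}$ for the single-resonance condition $|\eta\cdot\rho|\lesssim\delta^{-\gamma}$), so your scheme terminates with $\omega(\hat g_a^2)\leq C\mu^{c}$ for a fixed $c$ of order one, not with the arbitrary power $\mu^{2(n+1)}$ the theorem requires. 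The paper's way around this is non-perturbative and geometric: after the KAM step it redefines the left/right cut in a \emph{state-dependent} way, moving it to the nearest resonance-free bond $x\in\mathrm B(a,n_3)$; failure to find such a bond forces $\eta$ to be multi-resonant with $n_2$ independent hopping vectors, an event of Gibbs measure $O((\mu\delta^{-\gamma})^{n_2})$, which can be made smaller than any prescribed power. Making this work requires the indicator functions $\theta_x$ to be approximately invariant under the resonant flow (otherwise the term $\widetilde h_{>x}\,[\widetilde h,\hat P_x]$ is not small), and that is what the construction of the flattened cylinders $\mathsf B(\rho_1,\dots,\rho_p)$ and Lemmas \ref{lem_close to two subspaces implies close to the intersection}--\ref{lem_extension B} accomplish. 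None of this machinery, or a substitute for it, appears in your proposal.

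A secondary but also substantive issue: you treat resonances as exact degeneracies ($\hat N_y=\hat N_{y+1}+1$) to be absorbed as swap terms commuting with $\hat D$. This is not enough. The denominators $\Delta_\rho E(\eta)=2\eta\cdot\rho+|\rho|^2$ are typically of order $\delta^{-1}$ but can be nonzero and of order one; dividing by them produces generators that are not small, and the perturbative ordering collapses. The paper cuts off all denominators below $\delta^{-\gamma}$ with $1/2<\gamma<1$ (the smooth cutoffs $\hat\zeta_\rho$), accepts a loss of $\delta^{-2\gamma'}$ per order with $\gamma'=1-\gamma<1/2$, and checks that this loss is beaten by the gain of $\mu$ per order. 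Your near-resonant terms would have to be kept as ``resonant'' and fed into the geometric argument above, which brings you back to the missing step. Finally, the locality and the bound $\omega(\hat U_a^2)\leq C\mu^{-4}$ are not as automatic as you suggest: in the paper $\hat u_a$ is the difference $\hat h^{\mathrm O}_{>a}-\hat h_{>a}$ of two \emph{extensive} operators, local only because the state-dependent cut stays within $\mathrm B(a,n_3)$, and its leading term is controlled by showing it vanishes off a set of small Gibbs measure.
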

The proof of this theorem occupies the largest part of this work.

Let us see how it suggests slowness of transport, even beyond Theorem \ref{thm_nekhoroshev}, in particular out of equilibrium. 

The time-integrated current trough the bond $(a,a+1)$, is, by integrating \ref{thm_currentdecomposition}, given by 
$$
\int_0^t \hat J_a =   \hat U_a(t)-\hat U_a(0)+ \mu^{n+1} \int_0^t \hat G_a.
$$
This seems to be small regardless of the initial state of the system, the point being that the first term on the right-hand side is a temporal boundary term. However, to make this smallness into a mathematical statement, we have to deal with the fact that $U_a$ and  $G_a$ are unbounded operators that can only be controlled on typical states, in casu states of not too high occupation number. Even if we were to assume that the initial state at time $0$ has moderate occupations around $a$ (which suffices because the operators $U_a,G_a$ are local, then we cannot mathematically rule out that this remains so up to time $t$.   However, this sounds very much like a technical problem and it seems reasonable to conclude that 
$$
\str \omega_{\text{non.\ eq}}(\int_0^t \hat J_a) \str \leq C_n \mu^{n+1}t
$$
for almost any reasonable initial state $\omega_{\text{non.\ eq}}$, in particular a non-equilibrium steady state connected to heat reservoirs at different high temperatures (so that the particle density remains high). 

Finally, we note that we formulated our results for energy currents and fluctuations, even though there is another conserved quantity, namely particle number $\sum_x \hat N_x$. Indeed, analogous theorems can  also be formulated for particle number flucutations and currents. 
Also, the restriction to infinite temperature and finite $\mu$ is for simplicity, and we could also have taken finite but large temperature. The only important thing is that the average density remains high $\langle N_x\rangle_{\mu,\beta} \gg 1$.

\section{Outline of the proof}

The proof is inspired by \cite{de2015asymptotic} which in turn picks up on the ideas introduced in \cite{imbrie2016many} .

The first step in the proof is the construction of a change of basis $\hat \Omega$ that almost diagonalizes the Hamiltonian $\hat H$. We want the unitary operator $\hat \Omega$ to be quasi-local in the sense that it is generated by some anti-hermitian matrix $\hat K$ which can be written as a sum of local terms. If we succeed in fully diagonalizing the Hamiltonian in this way, then the transformed Hamiltonian $\hat \Omega^{\dag} \hat H \hat \Omega$ still has a local structure. Moreover, since it is now also diagonal, its eigenstates $\psi_x$ are all localized around some site $x$ in the sense that the decomposition of $\psi_x$ in the basis of perfectly localized states $\delta_y$ has negligible contributions from localized states that are far away from $x$. It would follow that the Hamiltonian $\hat H$ cannot transport energy over a distance longer than the spread of the eigenstates $\psi_x$ around the site $x$, \ie the conductivity would be zero.

Let's try to construct such a change of basis. Note first that the thermal expectation values of the field operators go as $\omega_{\mu}(\hat a_x^{\dag}), \omega_{\mu}(\hat a_x) \sim \mu^{-1/2}$. Hence $\omega_{\mu}(\hat N_x) \sim \mu^{-1}$, $\omega_{\mu}(\hat H_0) \sim \mu^{-2}$ and $\omega_{\mu}(\hat V) \sim \mu^{-1}$. Therefore, if $\mu$ is small, in a typical state we have $\hat H_0 \gg \hat V$ and we can apply perturbation theory for such states. Now, we want to transform away the interaction terms $\hat V = \sum_x \hat a_x^{\dag} \hat a_{x+1} + h.c.$ through the change of basis $\hat \Omega = \ed^{-\hat K}$ and therefore assume that $\hat K$ will be of the same order as $\hat V$. Let us then expand $\hat \Omega^{\dag} \hat H \hat \Omega$ to first order in $\hat V$ and $\hat K$:
\begin{equation}\label{eq:expansion of change of basis}
\hat \Omega^{\dag} \hat H \hat \Omega = \ed^{\hat  K} \big( \hat H_0 + \hat V \big) \ed^{-\hat K} = \hat H_0 + \hat V + [\hat K, \hat H_0] + \text{higher order terms}.
\end{equation} 
We want $\hat K$ to solve the equation
\begin{equation}
\hat V = [\hat H_0, \hat K].
\end{equation}
Lets write $\hat V = \sum_{\rho} \hat V_{\rho}$ where $\hat V_{\rho}$ is a hopping term whose only non-vanishing matrix elements are $\Bra{\eta + \rho} \hat V_{\rho} \Ket{\eta}$. Then the equation is solved by $\hat K = \sum_{\rho} \hat K_{\rho}$ with
\begin{equation}
\langle \eta' , \hat K_{\rho} \eta \rangle =  \frac{\langle \eta' , \hat V_{\rho} \eta \rangle}{\hat H_0(\eta') - \hat H_0(\eta)}.
\end{equation}
This is all fine as long as the denominator is not too small. Since the only non-vanishing matrix elements of $\hat K_{\rho}$ are $\Bra{\eta + \rho} \hat K_{\rho} \Ket{\eta}$, and at small $\mu$ the expectation value of $\hat N_x$ is large we have to consider
\begin{equation}
\hat H_0(\eta + \rho) - \hat H_0(\eta) = \abs{\eta + \rho}^2 - \abs{\eta}^2 = 2 \eta \cdot \rho + \abs{\rho}^2 \approx 2 \eta \cdot \rho.
\end{equation}
If this quantity is too small, we say that $\rho$ is resonant at $\eta$ and we simply set $\Bra{\eta + \rho} \hat K_{\rho} \Ket{\eta} = 0$. In this way we get rid of all non-resonant matrix elements $\Bra{\eta + \rho} \hat V_{\rho} \Ket{\eta}$ to first order. On the other hand, the higher order terms in \eqref{eq:expansion of change of basis} give new local interaction terms in the transformed Hamiltonian beyond the nearest neighbour interactions of the original Hamiltonian. The long range interactions generated in this way are small and therefore we can restrict our attention to those terms which may induce hopping over some finite range $r$. (In the actual proof we truncate the expansion of $\hat \Omega = \ed^{-\hat K}$ and we don't generate interaction terms of arbitrary range at all, but this is a technical point.)

We can now iterate this procedure, at each step trying to transform away the next order non-resonant interactions of the Hamiltonian obtained from the previous change of basis.

We end up with a Hamiltonian
\begin{equation}
\widetilde{H} = \hat H_0 + \sum_{\rho} \widetilde{V}_{\rho}
\end{equation}
where
\begin{equation}
\langle \eta + \rho , \widetilde V_{\rho} \eta \rangle = 0 \qquad \text{ if } \qquad \abs{\eta \cdot \rho} > C.
\end{equation}
and all hopping vectors $\rho$ are supported on sites lying in some interval of length at most $r$ and $\str\rho_x\str \leq r$.
For each such finite-range hopping $\rho$ in $\eta$-space, the condition $\abs{\eta \cdot \rho} > C$ defines a thickened hyperplane in $\eta$-space where transport is possible. Note that the projection on this hyperplane has small Gibbs measure if $\mu$ is small, so in this sense resonances are rare. Since the transformation to the resonant Hamiltonian is local, it is now sufficient to show that the energy current defined by the resonant Hamiltonian is small. This current is given by
\begin{equation}
\widetilde J_a = [\widetilde H, \widetilde H_{>a}]
\end{equation}
where for any $x \in \Z$ we split the resonant Hamiltonian into a left part $\widetilde H_{\leq x}$ and a right part $\widetilde H_{>x}$ by assigning the local terms in $\widetilde H$ either to the left or to the right. The only ambiguity in this splitting is for the terms $\widetilde V_{\rho}$ with $\rho$ a resonance that spans the bond $(x, x+1)$, these terms we can arbitrarily assign to the right part of the splitting.

The probability that there is no resonance across the bond $(a , a+1)$, and thus no current across this bond, is small in $\mu$. But it is only polynomially small in $\mu$ and so we need to do better. The idea is to look for a bond $(x, x+1)$ as close to the bond $(a, a+1)$ as possible, and across which no instantaneous current flows. There are of course very resonant states for which the closest bond without resonance is arbitrarily far from the bond $(a, a+1)$ so we'll have to impose some limit on how far to look for a resonance free bond. Let's say that if we can find a resonance free bond $(x, x+1)$ with $x \in B(a, n)$, then we define $x(\eta) = x$. If there is no resonance free bond $(x, x+1)$ with $x \in B(a, n)$ then we give up and simply put $x(\eta) = a$. Note however that ``giving up'' requires there to be resonance across $2n$ bonds. In the latter scenario we only bound the current across the bond $(a, a+1)$ by a quantity of order one, but the probability of this occurring goes as $\mu^{nc}$ with $c$ some constant of order one. Since $n$ is arbitrary, this is sufficiently improbable for us to bound the Green-Kubo conductivity by an arbitrary power of $\mu$.

Let's try to put some more flesh on this idea. For each $x \in B(a, n)$ we construct an indicator function $\theta_x$ on $eta$-space which equals one if the current through the bond $(x, x+1)$ vanishes and equals zero if it does not. We define $x(\eta)$ to be the $x \in B(a, n)$ such that $\theta_x(\eta) = 1$ which is closest to $a$ if such an $x$ exists, and $x(\eta) = a$ otherwise. In this way we obtain a state-dependent splitting of the Hamiltonian:
\begin{equation}
\widetilde H_L := \sum_{\eta} \widetilde H_{\leq x(\eta)} \hat P_{\eta}, \qquad \widetilde H_R := \sum_{\eta} \widetilde H_{>x(\eta)} \hat P_{\eta}
\end{equation}
and the current becomes
\begin{equation}\label{eq:good expression for J}
\widetilde J_a = [\widetilde H, \widetilde H_{>a}] = [\widetilde H, (\widetilde H_{>a} - \widetilde H_R)] + [\widetilde H, \widetilde H_R].
\end{equation}
The first term is the flow of the local operator $\widetilde H_{>a} - \widetilde H_R$. This operator is sparse in the sense that for most $\eta$ the operators $\widetilde H_{>a}$ and $\widetilde H_R$ are equal, so when we integrate the current over long times we get a small contribution that does not scale with time. The first term is therefore harmless. To analyse the second term we define projectors
\begin{equation}
\hat P_x := \sum_{\eta : x(\eta) = x} \hat P_{\eta}
\end{equation}
so that
\begin{equation}
\widetilde H_R = \sum_{x} \widetilde H_{>x} \hat P_x.
\end{equation}
We then find
\begin{equation}\label{eq:to be made small}
[\widetilde H, \widetilde H_R] = \sum_{x} \left(  [\widetilde H, \widetilde H_{>x}] \hat P_x + \widetilde H_{>x} [\widetilde H, \hat P_x]  \right).
\end{equation}
The states that are not annihilated by the projector $\hat P_x$ have no resonances across the bond $(x, x+1)$ and therefore the commutator $[\widetilde H, \widetilde H_{>x}]$ vanishes on those states, so the first term in the summand is under control. The second term in the summand is not a priori small. It would be small however if the projectors $\hat P_x$ are invariant under the flow, but these projectors are functions of the indicators $\theta_x$. Our goal is therefore to construct indicators $\theta_x$ that at the same time tell us whether there are resonances around the bond $(x, x+1)$ and are (almost) invariant under the flow of $\widetilde H$.

To this end we will investigate the geometry of resonances around the site $x$. Remember that the states at which a given $\rho$ is resonant lie within a thickened hyperplane in $\eta$-space defined by $\abs{\eta \cdot \rho} \leq C$. \ie The hopping vector $\eta$ is resonant in some neighbourhood of the hyperplane $\pi(\rho) := \{ \eta : \eta \cdot \rho = 0 \}$. States that are multi-resonant, say with hopping vectors $\rho_1, \cdots, \rho_p$, are to be found in the neighbourhood of $\pi(\rho_1, \cdots, \rho_p) := \pi(\rho_1) \cap \cdots \cap \pi(\rho_p)$. We will construct sets $\mathsf B(\rho_1, \cdots, \rho_p)$ that contain the $p$-fold resonances around $\pi(\rho_1, \cdots, \rho_p)$ and are invariant under moves $\eta \rightarrow \eta + \rho$ for any hopping vector $\rho \in \spanv\{\rho_1, \cdots, \rho_p\}$.

Let's denote by $P(\eta, E)$ the orthogonal projection of $\eta$ on the subspace $E$. Let $L$ be a large number, we say that $\eta \in \mathsf B (\rho_1, \dots , \rho_p)$ if
\begin{equation*}
\big| \eta - P \big( \eta , \pi(\rho_1, \cdots, \rho_p) \big) \big|_2
\; \le \; C L^p
\end{equation*}
and if, for every linearly independent set of hopping vectors $\rho'_1, \dots , \rho'_{p'} \in \spanv \{ \rho_1, \dots ,\rho_p\}$, 
\begin{equation*}
\big| P \big( \eta , \pi (\rho_1', \cdots, \rho'_{p'}) \big) - P \big( \eta , \pi (\rho_1, \cdots, \rho_p) \big) \big|_2
\; \le \;
C \big( L^p - L^{p'} \big).
\end{equation*}

The purpose of the first condition is clear, the set $\mathsf B (\rho_1, \dots , \rho_p)$ consists of states $\eta$ in a very broad cylinder around $\pi(\rho_1, \cdots, \rho_p)$. In particular, any state that has resonant hopping vectors $\rho_1, \dots , \rho_p$ will be contained in the set $\mathsf B (\rho_1, \dots , \rho_p)$.

But this cylinder is clearly not invariant under resonant moves by $\rho \in \spanv\{\rho_1, \cdots, \rho_p\}$. Indeed, for any set of hopping vectors $\rho'_1, \dots , \rho'_{p'} \in \spanv \{ \rho_1, \dots ,\rho_p\}$, the set $\pi(\rho'_1, \dots , \rho'_{p'})$ intersects the boundary of $\mathsf B (\rho_1, \dots , \rho_p)$. The surface of the cylinder is curved, so around these intersections there are states that can hop from the inside of the cylinder to the outside or vice versa. The remedy is to flatten the surface of the cylinder in the neighbourhood of the intersection, this is the purpose of the second condition defining the set $\mathsf B (\rho_1, \dots , \rho_p)$. This is illustrated in figure \ref{fig:3d flattening} d) where the resonant flow is indicated by green arrows. It is clear that jumps along the green arrows leave the flattened sphere invariant.

By taking $L$ large enough the flattened parts of the the cylinder can be guaranteed to be much longer than the distance of any hopping vector. Furthermore, the flattenings are constructed such that the with of the flattening produced by hopping vectors $\rho'_1, \dots , \rho'_{p'}$ is of order $L^{p'/2}$. \ie higher order resonances give rise to wider flattenings that cut deeper into the cylinder. To see why this is important, consider the construction of $\mathsf B (\rho_1, \rho_2, \rho_3)$ for $\rho_1, \rho_2, \rho_3$ three independent hopping vectors which is illustrated in figure \ref{fig:3d flattening}. First, as in figure a), we peel off thin ring-shaped shells to create the flattenings for each individual hopping vector in $\spanv \{ \rho_1, \rho_2, \rho_3 \}$. This makes the set invariant under all those hoppings except around the places where two or more of such rings intersect as in figure b). But this situation is rectified by the flattening corresponding to sets of two hopping vectors in $\spanv \{ \rho_1, \rho_2, \rho_3 \}$ which is constructed in figure c).

We now construct such a set $\mathsf B(\rho_1, \cdots, \rho_p)$ for each set of linearly independent hopping vectors $\{\rho_1, \cdots, \rho_p\}$ with $p \leq n$ whose supports percolate and such that at least one of the hopping vectors has support near to the site $x$ for which we want to construct the invariant indicator $\theta_x$. We call such sets of hopping vectors clusters around $x$. We then put $\theta_x(\eta) = 0$ if there is a cluster $\{\rho_1, \cdots, \rho_p\}$ around $x$ such that $\eta \in \mathsf B(\rho_1, \cdots, \rho_)$ and $\theta_x(\eta) = 1$ otherwise.

First of all, if $\rho$ is a hopping vector whose support spans the bond $(x, x+1)$, then $\rho$ by itself is a cluster around $x$. So if $\rho$ is resonant at $\eta$, then $\eta \in \mathsf B(\rho)$ hence $\theta_x(\eta) = 0$. If on the other hand $\eta$ has no resonances around the bond $(x, x+1)$ then $\theta_x(\eta) = 1$. We see then that $\theta_x$ indeed tells us whether or not there is a resonance spanning the bond $(x, x+1)$ as was required. It is also clear that the probability with respect to the Gibbs measure of having $\theta_x(\eta) = 0$ is polynomially small in $\mu$.

We turn now to the invariance properties of the function $\theta_x$. Fix $\eta$ and let $\rho$ be resonant at $\eta$. We want to show that $\theta_x(\eta) = \theta_x(\eta + \rho)$. There are three possible scenarios:
\begin{enumerate}
\item There is a cluster $\{\rho_1, \cdots, \rho_p\}$ around $x$ such that $\rho \in \spanv\{\rho_1, \cdots, \rho_p\}$ and $\eta \in \mathsf B(\rho_1, \cdots, \rho_p)$. But the set $\mathsf B(\rho_1, \cdots, \rho_p)$ was constructed specifically to be invariant under hoppings in $\spanv\{\rho_1, \cdots, \rho_p\}$ so $\theta_x(\eta) = \theta_x(\eta+\rho) = 0$.
\item There is a cluster $\{ \rho_1, \cdots, \rho_p \}$ around $x$ such that $\rho \perp \spanv\{ \rho_1, \cdots, \rho_p \}$ and $\eta \in \mathsf B(\rho_1, \cdots, \rho_p)$. Then the set $\mathsf B(\rho_1, \cdots, \rho_p)$ is manifestly invariant under translations in the direction of $\rho$ and so we have invariance.
\item If neither scenario 1 nor scenario 2 is the case, then for each cluster $\{ \rho_1, \cdots, \rho_p \}$ around $x$ such that $\eta \in \mathsf B(\rho_1, \cdots, \rho_p)$ we have neither $\rho \in \spanv\{ \rho_1, \cdots, \rho_p \}$ nor $\rho \perp \spanv\{ \rho_1, \cdots, \rho_p \}$. In this case we must have $p=n$ for if $p < n$ we see that $\rho \cup \{ \rho_1, \cdots, \rho_p \}$ is also a cluster around $x$, and since $\rho$ is a resonance we have $\eta$ close enough to $\pi(\rho, \rho_1, \cdots, \rho_p)$ to conclude that also $\eta \in \mathsf B(\rho, \rho_1, \cdots, \rho_p)$ which contradicts the assumption. So in this scenario we have $\eta \in \mathsf B(\rho_1, \cdots, \rho_n)$, \ie $\eta$ is close to resonance with $n$ linearly independent hopping vectors. The probability of this happening is exponentially small in $n$, it goes to zero as some power of $\mu^n$.
\end{enumerate}
We conclude that the indicators $\theta_x$ are invariant up to a very sparse set. Since the projectors $\hat P_x$ depend only on a finite number these indicators, they are also invariant up to a very sparse set and it follows that the commutator $[\widetilde H, \hat P_x]$ appearing in \eqref{eq:to be made small} has a very small expectation value with respect to the Gibbs measure at small chemical potential.

Let's take stock of what we have achieved. In \eqref{eq:good expression for J} we wrote the current in the form
\begin{equation}
\widetilde J_a = [\widetilde H, \widetilde U_a] + \widetilde G_a
\end{equation}
with $\widetilde U_a$ a local observable whose support is centred on the site $a$ and which has a finite expectation value with respect to the Gibbs measure. The operator $\widetilde G_a$ is also local because it is the difference of two local operators $\widetilde J_a$ and $[\widetilde H, \widetilde U_a]$. Furthermore, the Gibbs expectation value of the operator $\widetilde G_a$ is bounded by $\mu^{cn}$ for some number $c>0$. This is the content of Theorem \ref{thm_currentdecomposition}.

\begin{figure}[!htb]
\begin{center}
\includegraphics[width=0.7\textwidth]{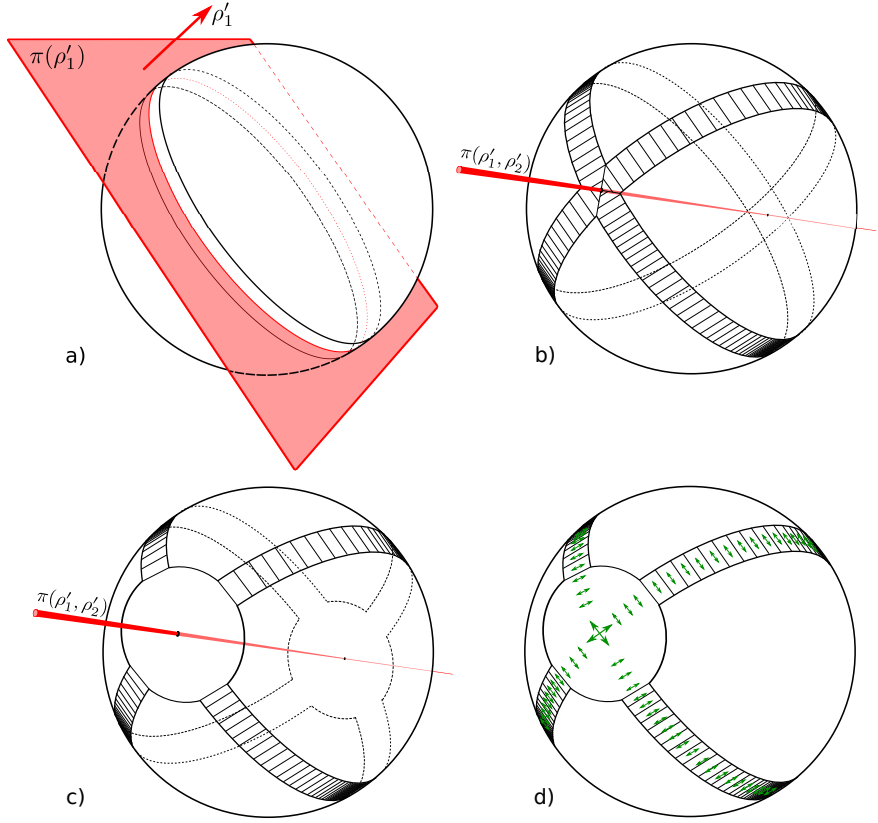}
\end{center}
\caption{Construction of $\mathsf B(\rho_1, \rho_2, \rho_3)$. We start with a sphere of radius $CL^3$ centred on the multi-resonance $\pi(\rho_1m \rho_2, \rho_3)$. In a) we consider a hopping vector $\rho'_1 \in \spanv\{ \rho_1, \rho_2, \rho_3 \}$ and we flatten the sphere where the resonant hyperplane $\pi(\rho'_1)$ intersects the sphere. The width of the flattened strip is of order $L$. The result of doing this for one more hopping vector $\rho'_2$ is shown in b). The flattenings overlap in the neighbourhood of the multi-resonance $\pi(\rho'_1, \rho'_2)$ and it is easily seen that there the flattened sphere is not invariant under hoppings by $\rho'_1$ or $\rho'_2$. This situation is rectified in c) by an extra flattening around the multi-resonance $\pi(\rho'_1, \rho'_2)$. The radius of this flattened disc is of order $L^2$. Finally, the end product is displayed in d) where we also indicated the resonant hopping vectors near the surface of the set in green.}
\label{fig:3d flattening}
\end{figure}

\section{Removal of non-resonant terms from the Hamiltonian} \label{sec_removal of non-resonant terms}

\subsection{Preliminary definitions}

Let $\de \in (0, 1)$ be a number, we define reduced annihilation and creation operators by $\hat \al_x := \de^{1/2} \hat a_x$ and $\hat \al^*_x := \de^{1/2} \hat a^*_x$. We also define reduced number operators by $\hat  n_x := \hat \al^*_x \hat \al_x = \de \hat N_x$.

Suppose $\mu \in (0, 1)$ and consider the Hamiltonian
\begin{equation}
\hat h = \hat d + \mu \hat v = \sum_{x \in \Z_N} \hat h_x = \sum_{x \in \Z_N} (\hat d_x + \mu \hat v_x)
\end{equation}
with
\begin{align}
\hat d_x = \hat n_x^2 \;\;\; &\text{ for all } x \in \Z_N, \\
\hat v_x = g(\hat \al_x^* \hat \al_{x+1} + \hat \al_x \hat \al_{x+1}^*) \;\;\; &\text{ for } x \in \Z_N \setminus \{(N-1)/2\} \;\;\; \text{ and } \;\;\; \hat v_{(N-1)/2} = 0.
\end{align}
By putting $\de = \mu$ and multiplying the reduced Hamiltonian $\hat h$ by $\mu^{-2}$ we recover the Bose-Hubbard Hamiltonian,
but for now we want to think of $\mu$ and $\de$ as independent variables.

Throughout all this work we will deal with operators $\hat f$ in a subspace $\caS$ of the space of linear operators on $\caH$. An operator $\hat f$ belongs to $\caS$ if the following conditions are realized for some number $r = r(\hat f) > 0$.

\begin{enumerate}
\item $\hat f$ can be written as a sum of local terms:
\begin{equation}
\hat f = \sum_x \hat f_x \hspace{0.7cm} \text{with} \hspace{0.7cm} s(\hat f_x) \subset \mathrm B(x, r). 
\end{equation}
where $\mathrm B(x, r) := \{ y \in \Z_N  :  \abs{x-y} \leq r \}$.

\item $\hat f$ has a limited range in the phase space:
\begin{equation}
\hat P_{\eta'} \hat f \hat P_{\eta} = 0 \hspace{0.7cm} \text{whenever} \hspace{0.7cm} \str \eta - \eta' \str_1 > r.
\end{equation}
Also, it follows from spacial locality that $\hat P_{\eta'} \hat f \hat P_{\eta} = 0$ if $\eta - \eta'$ is not supported on a ball of radius $r$.

\item The local terms $\hat f_x$ can be written in the form
\begin{equation}
\hat f_x = \sum_{\substack{\rho \in M_r \\ \supp(\rho) \subset \mathrm B(x, r)}} \hat f_x^{(\rho)} = \sum_{\substack{\rho \in M_r \\ \supp(\rho) \subset \mathrm B(x, r)}} \hat A_{\rho} \hat b_{x, \rho}
\end{equation}
where $\supp(\rho) := \{x \in \Z_N  :  \rho_x \neq 0 \}$ is the support of the vector $\rho$, the set $M_r := \{ \rho \in \Z^N  :  \supp(\rho) \subset \mathrm B(x, r) \text{ for some } x \in \Z_N \text{ and } \max_{x\in Z_N} \abs{\rho_x} \leq r \}$ is the set of moves of range $r$, the operators $\hat A_{\rho}$ are monomials of reduced annihilation and creation operators whose degree is bounded independently of $\delta$ and such that $\hat P_{\eta'} \hat A_{\rho} \hat P_{\eta} = 0$ unless $\eta' - \eta = \rho$ and the operators $\hat b_{x, \rho}$ are diagonal in the number basis.
We also write
\begin{equation}
\hat f^{(\rho)} = \sum_{x} \hat f_x^{(\rho)}
\end{equation}
so that
\begin{equation}
\hat f = \sum_{\rho \in M_r} \hat f^{(\rho)}.
\end{equation}
Note that while the decomposition in local terms may not be unique, this decomposition according to moves is unique.

\end{enumerate}

Let $\xi \in C^{\infty}(\R, [0, 1])$ be a smooth cut-off function : $\xi(-x) = \xi(x)$ for every $x \in \R$, $\xi(x) = 1$ for every $x \in [-1, 1]$ and $\xi(x) = 0$ for every $x \notin [-2, 2]$. For any $a > 0$ we define also $\xi_a$ by $\xi_a(x) = \xi(x/a)$.

Considering the phase space $\Omega_N = \N^N$ as a subset of $\R^N$, we associate to any function $b : \R^N \rightarrow V \subset \C$ a diagonal operator $\hat{b}$ defined by $\hat{b} P_{\eta} = b(\eta) P_{\eta}$ for all $\eta \in \Omega_N$. Conversely, any diagonal operator $\hat{b}$ can be considered associated to some function $b$ on $\R^N$ and this fact will be used without comment. 

For any diagonal operator $\hat{b}$ and any $\rho \in \Z^N$ we define the discrete derivative of $\hat{b}$ in the direction $\rho$ as the diagonal operator $\Delta_{\rho} \hat{b}$ given by 
\begin{equation}
(\Delta_{\rho} \hat{b}) \hat P_{\eta} = \left( b(\eta + \rho) - b(\eta) \right) \hat P_{\eta}
\end{equation}
for all $\eta \in \Omega_N$.

We associate to the operator $\Delta_{\rho} \hat{b}$ a function
\begin{equation}
\Delta_{\rho} b : \R^N \rightarrow \C : \eta \mapsto b(\eta + \rho) - b(\eta).
\end{equation}

Note that the commutator of a product of annihilation and creation operators with a diagonal operator produces a discrete derivative:
\begin{equation}
[\hat A_{\rho}, \hat{b}] = (\Delta_{\rho} \hat{b}) \hat A_{\rho}.
\end{equation}

We define $E(\eta) = \sum_x \eta_x^2$ corresponding to the on-site energy of the Bose-Hubbard Hamiltonian and associated to this function the diagonal operators $\hat{E}$ and $\Delta_{\rho} \hat{E}$ for any $\rho \in \Z^N$ as described above.

Let $1/2 < \gamma < 1$ and define for any $\rho \in \Z^N$ the composite function $\zeta_{\rho} : \R^N \rightarrow [0, 1] : \eta \mapsto \xi_{\delta^{-\gamma}}(\Delta_{\rho}E(\eta))$ and the associated diagonal operator $\hat{\zeta}_{\rho}$.

We define an operator $\caR$ acting on $\caS$ as
\begin{equation}
\caR \hat f = \sum_{\rho \in M_r} \hat f^{(\rho)} \hat{\zeta}_{\rho},
\end{equation}
\ie $\caR \hat f$ consists of the resonant terms of $\hat f$.

Given $\hat f \in \caS$, the equation
\begin{equation}\label{eq_KAM}
\ad_{\hat d} \hat u = \left( \Id - \caR \right) \hat f
\end{equation}
has a solution in $\caS$ given by
\begin{equation}\label{eq_KAMsolution}
\hat u = \delta^{-2} \sum_{\rho \in M_{r(f)}}  \hat f^{(\rho)} \frac{(\I - \hat\zeta_{\rho})}{\Delta_{\rho} \hat E}
\end{equation}
where it is understood that the summand on the right hand side maps $\Ket{\eta}$ to zero if $\Delta_{\rho} E(\eta) = 0$. We will refer to the operator $\hat u$ as \emph{the} solution to equation (\ref{eq_KAM}). It is easily checked that if $\hat f$ is self-adjoint, then $\hat u$ is skew-adjoint.

Given a vector space $X$, a formal power series in $\mu$ is an expression of the form $Y = \sum_{k \geq 0} \mu^k Y^{(k)}$ where $Y^{(k)} \in X$ for every $k \geq 0$. We naturally extend all algebraic operations in $X$ to operations on formal series. Given $l \geq 0$ and given a formal series $Y$, we define the truncation
\begin{equation}
\caT_l(Y) := \sum_{k=0}^{l} \mu^k Y^{(k)}.
\end{equation}
If a formal power series $Y$ is such that $Y^{(k)} = 0$ for all $k > l$ for some $l \in \N$, we will allow ourselves to identify $Y$ with its truncation $\caT_l(Y) \in X$.

\subsection{Perturbative diagonalization}

Given $k \geq 1$, let $\pi(k) \subset \N^k$ be the collection of $k$-tuples $\un{j} = (j_l)_{l=1 \cdots k}$ of non-negative integers satisfying the constraint
\begin{equation}
\sum_{l=1}^k l j_l = k.
\end{equation}
In particular, we have $0 \leq j_l \leq k$.

For $k \geq 0$ we recursively define operators $Q^{(k)}$, $R^{(k)}$ and $S^{(k)}$ on $\caS$, as well as operators $\hat u^{(k)} \in \caS$. Here and below, let us adopt the convention $A^0 = \Id$ for an operator $A$ on $\caS$. We first set $Q^{(0)} = R^{(0)} = \Id$, $S^{(0)} = 0$ and $\hat u^{(0)} = 0$. Next, for $k \geq 1$, we define $\hat u^{(k)}$ as the solution to the equation
\begin{equation}\label{eq_def u}
\ad_{\hat d} \hat u^{(k)} = (\Id - \caR) \left( S^{(k-1)} \hat d + Q^{(k-1)} \hat v \right)
\end{equation}
and then set
\begin{align}
Q^{(k)} &= \sum_{\un{j} \in \pi(k)} \frac{1}{j_1 ! \cdots j_k !} \ad_{\hat u^{(k)}}^{j_k} \cdots \ad_{\hat u^{(1)}}^{j_1} \label{eq_defQ} \\
R^{(k)} &= \sum_{\un{j} \in \pi(k)} \frac{(-1)^{j_1 + \cdots + j_k}}{j_1 ! \cdots j_k !} \ad_{\hat u^{(1)}}^{j_1} \cdots \ad_{\hat u^{(k)}}^{j_k} \label{eq_defR} \\
S^{(k)} &= \sum_{ \substack{ \un{j} \in \pi(k+1) \\ j_{k+1} = 0}  } \frac{1}{j_1 ! \cdots j_k !} \ad_{\hat u^{(k)}}^{j_k} \cdots \ad_{\hat u^{(1)}}^{j_1}. \label{eq_defS}
\end{align}
Note that since $\hat v$ is self-adjoint, $\hat u^{(1)}$ is skew-adjoint so $S^{(1)} \hat d + Q^{(1)} \hat v$ is self-adjoint and it follows in turn that $\hat u^{(2)}$ is skew-adjoint. Continuing in this way we establish inductively that all $\hat u^{(k)}$ are skew-adjoint and all $S^{(k)} \hat d$ and all $Q^{(k)} \hat v$ are self-adjoint operators.
For $n_1 \geq 1$ we define
\begin{equation}\label{eq_defhtilde}
\widetilde{h} = \widetilde{h}_{n_1} = \hat d + \sum_{k=1}^{n_1} \mu^k \caR \left( S^{(k-1)} \hat d + Q^{(k-1)} \hat v \right).
\end{equation}

The following proposition will be shown in subsection \ref{subsec_ProofofProp1} below.

\begin{proposition}\label{prop_resonant hamiltonian}
Let us consider the formal series $R = \sum_{k \geq 0} \mu^k R^{(k)}$ of operators on $\caS$. We have
\begin{enumerate}
\item $\hat h = \caT_{n_1}(R \widetilde{h}_{n_1})$.

\item For every $\hat f = \sum_{k=0}^{n_1} \mu^k \hat f^{(k)}$ with $\hat f^{(k)} \in \caS$ for all $k$, it holds that
\begin{equation}
\ad_{\hat h} (\caT_{n_1}(R \hat f)) = \caT_{n_1} (R \; \ad_{\widetilde{h}_{n_1}} \hat f) + \mu^{n_1 + 1} \; \ad_{\hat v} \sum_{k=0}^{n_1} R^{(n_1 - k)} \hat f^{(k)}.
\end{equation}
\end{enumerate}
\end{proposition}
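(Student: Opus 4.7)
The plan is to recognize $R$ and $Q$ as mutually inverse formal algebra automorphisms and then convert both statements into truncation bookkeeping in the formal-series ring $\caS[[\mu]]$. Inspection of the multinomial coefficients in \eqref{eq_defQ}--\eqref{eq_defR} shows that $Q$ and $R$ are, formally in $\mu$, the ordered products of exponentials
\[
Q = \cdots e^{\mu^{2} \ad_{\hat u^{(2)}}} e^{\mu \ad_{\hat u^{(1)}}}, \qquad R = e^{-\mu \ad_{\hat u^{(1)}}} e^{-\mu^{2} \ad_{\hat u^{(2)}}} \cdots.
\]
Since each factor is the exponential of a derivation, it is an algebra automorphism, and this property passes to the formal series: $R(\hat a \hat b) = (R\hat a)(R\hat b)$, and in particular $R[\hat a, \hat b] = [R\hat a, R\hat b]$. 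A term-by-term check also gives $RQ = QR = \Id$ as formal series in $\mu$.

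The next ingredient is the algebraic identity $Q^{(k)} = S^{(k-1)} + \ad_{\hat u^{(k)}}$, read off \eqref{eq_defQ} by splitting the sum according to whether $j_k = 0$ (recovering $S^{(k-1)}$) or $j_k \geq 1$ (forced by $kj_k \leq k$ to be $j_k = 1$ with all other $j_l = 0$). Applying this to $\hat d$ and using the defining equation \eqref{eq_def u} for $\hat u^{(k)}$ gives
\[
Q^{(k)} \hat d + Q^{(k-1)} \hat v = \caR\bigl( S^{(k-1)} \hat d + Q^{(k-1)} \hat v \bigr) \qquad (k \geq 1),
\]
while $Q^{(0)} \hat d = \hat d$. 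Comparing with \eqref{eq_defhtilde}, this says precisely that $\widetilde h_{n_1} = \caT_{n_1}\bigl( Q(\hat d + \mu\hat v) \bigr)$. Statement~1 then follows: writing $Q(\hat d + \mu\hat v) = \widetilde h_{n_1} + O(\mu^{n_1+1})$, multiplying by $R$, and using $RQ = \Id$ gives $R\widetilde h_{n_1} = \hat h + O(\mu^{n_1+1})$, whose $\caT_{n_1}$-truncation is exactly $\hat h$.

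For statement~2, set $\widetilde h_\infty := Q(\hat d + \mu\hat v)$. The automorphism property yields, as formal series,
\[
R \, \ad_{\widetilde h_\infty} \hat f \; = \; \ad_{R\widetilde h_\infty} R\hat f \; = \; \ad_{\hat h} R\hat f.
\]
Since $\hat f$ has $\mu$-degree $\leq n_1$, replacing $\widetilde h_\infty$ by $\widetilde h_{n_1}$ on the left costs $O(\mu^{n_1+1})$, which vanishes under $\caT_{n_1}$; likewise, replacing $R\hat f$ by $\caT_{n_1}(R\hat f)$ inside the commutator on the right costs another $O(\mu^{n_1+1})$ piece that also vanishes under $\caT_{n_1}$. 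The subtle point is that $\ad_{\hat h}\,\caT_{n_1}(R\hat f)$ genuinely extends one $\mu$-order further than $\caT_{n_1}(R\hat f)$ itself: because $\ad_{\hat d}$ preserves the $\mu$-degree while $\ad_{\mu\hat v}$ raises it by one, the $\mu^{n_1+1}$ coefficient of $\ad_{\hat h}\,\caT_{n_1}(R\hat f)$ is precisely $[\hat v, (R\hat f)^{(n_1)}]$, and $(R\hat f)^{(n_1)} = \sum_{k=0}^{n_1} R^{(n_1-k)} \hat f^{(k)}$. Accounting for this one leftover term gives the explicit correction in the statement.

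The whole argument lives in $\caS[[\mu]]$ and uses no analytic estimates. The only place requiring real care is the truncation bookkeeping for statement~2 --- separating the $O(\mu^{n_1+1})$ tails killed by $\caT_{n_1}$ from the single $\mu^{n_1+1}$ contribution that survives on the left-hand side because $\mu\hat v$ raises degree by one.
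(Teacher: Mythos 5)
Your proposal is correct and follows essentially the same route as the paper: you identify $Q$ and $R$ as the ordered exponential products and mutual formal inverses, use $Q^{(k)}=S^{(k-1)}+\ad_{\hat u^{(k)}}$ together with \eqref{eq_def u} to get $\widetilde h_{n_1}=\caT_{n_1}(Q\hat h)$ for part~1, and for part~2 use the conjugation identity $R\,\ad_{Q\hat h}=\ad_{\hat h}R$ plus the observation that $\ad_{\hat d}$ preserves and $\mu\,\ad_{\hat v}$ raises the $\mu$-degree, so the only surviving discrepancy is the $\mu^{n_1+1}\ad_{\hat v}\sum_k R^{(n_1-k)}\hat f^{(k)}$ term. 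No gaps.
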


The resonant Hamiltonian $\widetilde{h}$ and the formal operator $R$ have several characteristics that are good to remember.

\begin{enumerate}
\item Both $\widetilde{h}$ and $R$ are expressed as power series in $\mu$, as is seen from (\ref{eq_defhtilde}) and from the definition of $R$ given in Proposition \ref{prop_resonant hamiltonian}. We introduce also the notation
\begin{equation}
\widetilde{h} = \sum_{k=0}^{n_1} \mu^k \widetilde{h}^{(k)} \hspace{0.3cm} \text{ with } \hspace{0.3cm} \widetilde{h}^{(0)} = \hat d \hspace{0.3cm} \text{ and } \hspace{0.3cm} \widetilde{h}^{(k)} = \caR \left( S^{(k-1)} \hat d + Q^{(k-1)} \hat v \right) \;\;\; \text{ for } \;\;\; k \geq 1.
\end{equation}

\item For each $k \geq 0$, the operator $\widetilde{h}^{(k)}$ is an element of $\caS$, and $R^{(k)}$ is an operator on $\caS$. Let $\hat f = \sum_{x \in \Z_N} \hat f_x \in \caS$ be given. The operators $\widetilde{h}^{k}$ and $R^{(k)} \hat f$ can be decomposed as a sum of local terms with, for example, for $k \geq 1$,
\begin{equation}
\widetilde{h}_x^{(k)} = \caR \left( S^{(k-1)} \hat d_x + Q^{(k-1)} \hat v_x \right) \hspace{0.7cm} \text{ and } \hspace{0.7cm} (R^{(k)} \hat f)_x = R^{(k)} \hat f_x.
\end{equation}
Moreover, we will show in subsection \ref{subsec_ProofofProp1} below, that there exists an integer $r_k$ such that
\begin{equation}\label{eq_boundedrange}
r(\widetilde{h}^{(k)}) \leq r_k \hspace{0.7cm} \text{ and } \hspace{0.7cm} r(R^{(k)} \hat f) \leq r_k + r(f),
\end{equation}
where $r$ is the parameter introduced in the definition of the vector space $\caS$.

\item The operators $\widetilde{h}^{(k)}$ and $R^{(k)}$ depend on $\delta$ if $k \geq 1$. Let thus $k \geq 1$. In what follows, we will use the symbol $\hat A$ for polynomials of reduced annihilation and creation operators, the symbol $b$ to denote smooth, bounded functions on $\R^N$ with bounded derivatives of all orders, and the symbol $\hat{b}$ to denote the diagonal operator associated to a function of the form $\eta \mapsto b(\delta^{\gamma} \eta)$. We refer to diagonal operators $\hat{b}$ of this form as \emph{smooth diagonal operators}.  We will show the following assertions in subsection \ref{subsec_ProofofProp1} below. First, there is an integer $m_k$ such that, given $x \in \Z_N$, $\widetilde{h}_x^{(k)}$ can be expressed as a sum of the type
\begin{equation}\label{eq_deltadependencetildeh}
\widetilde{h}_x^{(k)} = \delta^{-2(k-1) \gamma'} \sum_{j=1}^{m_k} \hat A_{j, x} \hat{b}_{j, x}
\end{equation}
where $0 < \gamma' = 1-\gamma < 1/2$ and such that $s(\hat A_{j, x})$ and $s(\hat{b}_{j, x})$ are subsets of $s(\widetilde{h}_x^{(k)})$ and the bounds on the functions $b_{j, x}$ and its derivatives can be chosen uniformly in $x$.

Second, consider an operator $\hat g \in \caS$ such that $\hat g_x = \hat f_x \hat{b}$ with $\hat f = \sum_{x \in \Z_N} \hat f_x \in \caS$. Then there is an integer $m_{k, g}$ such that $(R^{(k)} \hat g)_x$ can be expressed as a sum of the type
\begin{equation}\label{eq_deltadependenceR}
(R^{(k)} \hat g)_x = \delta^{-2k\gamma'} \sum_{j = 1}^{m_{k, g}} \hat A_{j, x} \hat{b}_{j, x}
\end{equation} 
such that $s(\hat A_{j, x})$ and $s(\hat{b}_{j, x})$ are subsets of $s(\widetilde{h}_x^{(k)})$ and the bounds on the functions $b_{j, x}$ and its derivatives can be chosen uniformly in $x$.
\end{enumerate}

\subsection{Proof of Proposition \ref{prop_resonant hamiltonian} and relations (\ref{eq_boundedrange} - \ref{eq_deltadependenceR}) } \label{subsec_ProofofProp1}

\begin{proofof}[Proposition \ref{prop_resonant hamiltonian}]
Given an operator  $\hat u \in \caS$, a formal transformation, seen as an operator on $\caS$, is defined through
\begin{equation}
\ed^{\mu \ad_{\hat u}} = \sum_{k \geq 0} \frac{\mu^k}{k!} \ad_{\hat u^k}.
\end{equation}

Given a sequence $(\hat u^{(k)})_{k \geq 1} \subset \caS$, that we will later identify with the sequence defined by (\ref{eq_def u}), we construct the formal unitary transformation
\begin{align} 
Q &= \cdots \ed^{\mu^n \ad_{\hat u^{(n)}}} \cdots \ed^{\mu^2 \ad_{\hat u^{(2)}}} \; \ed^{\mu \ad_{\hat u^{(1)}}} = \sum_{j_1 \geq 0, \cdots, j_n \geq 0, \cdots} \frac{\mu^{j_1 + \cdots + n j_n + \cdots}}{j_1 ! \cdots j_n ! \cdots } \left( \cdots \ad_{\hat u^{(n)}}^{j_n} \cdots  \ad_{\hat u^{(1)}}^{j_1}  \right) \\
&= \Id + \sum_{k \geq 1} \mu^k \sum_{\un{j} \in \pi(k)} \frac{1}{j_1 ! \cdots j_k !} \ad_{\hat u^{(k)}}^{j_k} \cdots \ad_{\hat u^{(1)}}^{j_1} = \sum_{k \geq 0} \mu^k Q^{(k)}. \label{eq:change of basis}
\end{align}
The formal inverse of $Q$ is given by
\begin{align*}
R &= \ed^{-\mu \ad_{\hat u^{(1)}}}  \ed^{-\mu^2 \ad_{\hat u^{(2)}}} \cdots \ed^{-\mu^n \ad_{\hat u^{(n)}}} \cdots \\
&= \Id + \sum_{k \geq 1} \mu^k \sum_{\un{j} \in \pi(k)} \frac{(-1)^{j_1 + \cdots + j_k}}{j_1 ! \cdots j_k !} \ad_{\hat u^{(1)}}^{j_1} \cdots \ad_{\hat u^{(k)}}^{j_k} = \sum_{k \geq 0} \mu^k R^{(k)}.
\end{align*}

Let us show the first part of Proposition \ref{prop_resonant hamiltonian}. The operators $Q$ and $R$ are formal inverses of each other, so that, for every $\hat f \in \caS$ such that $\hat f = \caT_{n_1} f$, it holds that
\begin{equation}
\hat f = \caT_{n_1} (R \; \caT_{n_1}(Q \hat f)),
\end{equation}
as can be checked by a direct computation with formal series. We will thus be done if we show that
\begin{equation}\label{eq_done}
\widetilde{h}_{n_1} = \caT_{n_1} (Q \hat h).
\end{equation}
We compute
\begin{equation}
Q \hat h = \sum_{k \geq 0} \mu^k Q^{(k)} (\hat d + \mu \hat v) = \hat d + \sum_{k \geq 1} \mu^k \left( Q^{k} \hat d + Q^{(k-1)} \hat v \right).
\end{equation}
It holds that
\begin{equation}
Q^{(k)} = S^{(k-1)} + \ad_{\hat u^{(k)}} \hspace{0.7cm} \text{ for } \hspace{0.7cm} k \geq 1.
\end{equation}
Since $\ad_{\hat u^{(k)}} \hat d = -\ad_{\hat d} \hat u^{(k)}$ for every $k \geq 1$, and taking now $\hat u^{(k)}$ as defined by (\ref{eq_def u}) we obtain
\begin{equation}
Q \hat h = \hat d + \sum_{k \geq 1} \mu^k \left( S^{(k-1)} \hat d - \ad_{\hat d} \hat u^{(k)} + Q^{(k-1)} \hat v \right) = \hat d + \sum_{k \geq 1} \mu^k \caR \left( S^{(k-1)} \hat d + Q^{(k-1)} \hat v \right).
\end{equation}
From this, we derive (\ref{eq_done}).

Let us then show the second part of Proposition \ref{prop_resonant hamiltonian}. The operators $Q$ and $R$ are formal unitary transformations, inverse of each other. Therefore
\begin{equation}\label{eq_adhR=RadQh}
\ad_{\hat h} R = R \ad_{Q \hat h}
\end{equation}
as a direct but lengthy computation with formal series can confirm. Let us next take $\hat f \in \caS$ such that $\hat f = \caT_{n_1}(\hat f)$. By (\ref{eq_done}), we find that
\begin{equation}
\caT_{n_1} (R \; \ad_{\widetilde{h}} \hat f) = \caT_{n_1} (R \; \ad_{\caT_{n_1} (Q \hat h) } \hat f) = \caT_{n_1} ( R \; \ad_{Q \hat h} \hat f),
\end{equation}
since higher order terms do not contribute due to the overall truncation $\caT_{n_1}$. Therefore, by (\ref{eq_adhR=RadQh}),
\begin{align*}
\ad_{\hat h} ( \caT_{n_1} (R \hat f)) - \caT_{n_1} (R \; \ad_{\widetilde{h}} \hat f) &= \ad_{\hat h} (\caT_{n_1} (R \hat f)) - \caT_{n_1} (R \; \ad_{Q \hat h} \hat f) \\
&= \ad_{\hat h} (\caT_{n_1} (R \hat f)) - \caT_{n_1} (\ad_{\hat h} R \hat f).
\end{align*}
Since $\ad_{\hat h} = \ad_{\hat d} + \mu \; \ad_{\hat v}$, it is finally computed that
\begin{equation}
\ad_{\hat h} (\caT_{n_1} (R \hat f)) - \caT_{n_1} (\ad_{\hat h} R \hat f) = \mu^{n_1 + 1} \; \ad_{\hat v} \sum_{k=0}^{n_1} R^{(n_1 - k)} \hat f^{(k)}.
\end{equation}
This completes the proof.
\end{proofof}

\begin{proofof}[(\ref{eq_boundedrange} - \ref{eq_deltadependenceR})]
Given two operators $\hat f, \hat g \in \caS$ we can decompose $\ad_{\hat f} \hat g$ as a sum of local terms $(\ad_{\hat f} g)_x = \ad_{\hat f} \hat g_x$ so that $r(\ad_{\hat f} \hat g) \leq 2r(\hat f) + r(\hat g)$.
If we write $\hat u = \ad_{\hat d}^{-1} (\Id - \caR) \hat f$ for the solution to
\begin{equation}
\ad_{\hat d} \hat u = (\Id - \caR) \hat f
\end{equation}
given by (\ref{eq_KAMsolution}), then we see that $r ( \ad_{\hat d}^{-1} (\Id - \caR) \hat f ) = r(\hat f)$. From this and (\ref{eq_defQ} - \ref{eq_defS}) we readily deduce (\ref{eq_boundedrange}).

Let us now show (\ref{eq_deltadependencetildeh}) and (\ref{eq_deltadependenceR}). Since we are only interested in tracking the dependence on $\delta$ we may simplify notations as much as possible in the following way. We use symbols $\hat A$ and $\hat{b}$ with the same meanings as in the paragraph where (\ref{eq_deltadependencetildeh}) and (\ref{eq_deltadependenceR}) are stated. Let $n \geq 0$. For $\hat g \in \caS$, we write $\hat g \sim \delta^{-n}$ if $\hat g = \sum_{x \in \Z_N} \hat g_x$ with
\begin{equation}
\hat g_x = \delta^{-n} \sum_j \hat A_{\rho_j} \hat{b}_{j, x}
\end{equation}
with all bounds on the functions $b_{j, x}$ and their derivatives uniform in $x$.

For operators $B$ on $\caS$ we write $B \sim \delta^{-n}$ if for any $\hat f \in \caS$ such that $\hat f \sim \delta^{-m}$ we have $B \hat f \sim \delta^{-n-m}$.

We now observe that if $\hat g \sim \delta^{-n}$ and $\hat f \sim \delta^{-m}$ then $\ad_{\hat g} \hat f \sim \delta^{-(n+m) + \gamma}$. Indeed, $\ad_{\hat g} \hat f$ is a sum whose terms take the following form:
\begin{align*}
\delta^{-(n+m)} [\hat A_{\rho} \hat{b}, \hat A_{\rho'} \hat{b}'] &= \delta^{-(n+m)} \left(  \hat A_{\rho} \hat A_{\rho'} (\Delta_{\rho'} \hat{b}) \hat{b}' + [\hat A_{\rho}, \hat A_{\rho'}] \hat{b}' \hat{b} + \hat A_{\rho'} \hat A_{\rho} (\Delta_{\rho} \hat{b}') \hat{b}  \right) \\
&= \delta^{-(n+m) + \gamma} \sum_{i=1}^{l} \hat A_{i, \rho + \rho'} \hat{b}_i
\end{align*}
for some number $l$. The last step is obtained by noticing that discrete derivatives of the smooth diagonal operators have matrix elements of order $\delta^{\gamma}$ and commutators of monomials of reduced annihilation operators can always be written as $\delta$ times a polynomial in the reduced annihilation and creation operators whose terms all effect the move obtained by summing the moves of the commutants.

The diagonal operators $\hat{b}_i$ are products of smooth diagonal operators and discrete derivatives of smooth diagonal operators and are therefore themselves smooth.

We observe also that if $\hat u$ is the solution to the equation
\begin{equation}
\ad_{\hat d} \hat u = (\Id - \caR) \hat f
\end{equation}
with $\hat f \sim \delta^{-n}$, then since we can take
\begin{equation}
\hat u_x = \delta^{-2} \sum_{\rho \in M_{r(f)}} \hat f_x^{(\rho)} \frac{\I - \hat{\zeta}_{\rho}}{\Delta_{\rho} \hat{E}} = \delta^{-n-2+\gamma} \sum_{\rho \in M_{r(f)}} \sum_j \hat A_{\rho_j} \hat{b}_j  \frac{\I - \hat{\zeta}_{\rho}}{ \delta^{\gamma} \Delta_{\rho} \hat{E}}
\end{equation}
where the operators
\begin{equation}
\hat{b}_j  \frac{\I - \hat{\zeta}_{\rho}}{ \delta^{\gamma} \Delta_{\rho} \hat{E}}
\end{equation}
are smooth diagonal operators, we have $\hat u \sim \delta^{-n-2+\gamma}$.

Remembering that we defined $\gamma' = 1 - \gamma$, let us now establish recursively that for $k \geq 1$, we have
\begin{equation}
Q^{(k-1)},\;\; R^{(k-1)}, \;\; S^{(k-1)} \hat d \sim \delta^{-2(k-1)\gamma'} \hspace{0.3cm} \text{ and } \hspace{0.3cm} \hat u^{(k)} \sim \delta^{-2k\gamma' - \gamma}.
\end{equation}
It is easily checked from the definitions that these relations hold for $k = 1$. Let us see that the claim for $1, \cdots, k \geq 1$ implies the claim for $k+1$.

First, from the definitions (\ref{eq_defQ}) and (\ref{eq_defR}) and the fact that if $\un{j} \in \pi(k)$ then $j_1 + 2 j_2 + \cdots + k j_k = k$ we obtain
\begin{equation}
Q^{(k)}, \;\; R^{(k)} \sim (\delta^{-2k \gamma' - \gamma + \gamma})^{j_k} \cdots (\delta^{-2 \gamma' - \gamma + \gamma})^{j_1} = \delta^{-2 \gamma' \sum_l l j_l} = \delta^{-2k\gamma'}.
\end{equation}

Let us then treat $S^{(k)} \hat d$. We decompose $S^{(k)} \hat d = \sum_{\un{j} \in \pi(k+1)} S_{\un{j}}^{(k)} \hat d$. Fixing a $\un{j} \in \pi(k+1)$ and letting $l \geq 1$ be the smallest integer  such that $j_l \geq 1$ we get for some constant $C(\un{j})$ that
\begin{align*}
S_{\un{j}}^{(k)} &= C(\un{j}) \ad_{\hat u^{(k)}}^{j_k} \cdots \ad_{\hat u^{(l)}}^{j_l - 1} (\ad_{\hat u^{(l)}} d) \\
&= - C(\un{j}) \ad_{\hat u^{(k)}}^{j_k} \cdots \ad_{\hat u^{(l)}}^{j_l - 1} \left( (\Id - \caR) ( S^{(l-1)} \hat d + Q^{l-1} \hat v ) \right) \\
&\sim (\delta^{-2k \gamma'})^{j_k} \cdots (\delta^{-2l \gamma'})^{j_l} \delta^{2l \gamma'} \delta^{-2(l-1)\gamma'} \\
&= \delta^{-2k \gamma'}.
\end{align*}

Finally, we have from the definition (\ref{eq_def u}) and the induction hypothesis that
\begin{equation}
\hat u^{(k+1)} \sim \delta^{-2k \gamma' - 2 + \gamma} = \delta^{-2(k+1)\gamma' - \gamma}
\end{equation}
as required.
\end{proofof}

It is instructive to couple back to the introduction and see that the Bose-Hubbard model is indeed a critical case in the sense described there. \ie we want to see explicitly that the generator of the transformation $Q$ constructed above isn't small in $\mu$. From \eqref{eq:change of basis} we see that the leading contribution to the generator is $\mu \hat u^{(1)}$. Lets write the reduced hopping $\hat v = \sum_{\rho} \hat v^{(\rho)}$ as a sum over nearest neighbour hoppings, it then follows from \eqref{eq_KAMsolution} and \eqref{eq_def u} that
\begin{equation} \label{eq:size of generator}
\hat u^{(1)} = \mu^{-2} \sum_{\rho} \hat v^{(\rho)} \frac{(\I - \hat \zeta_{\rho})}{\Delta_{\rho} \hat E} 
\end{equation}
where we also put $\delta = \mu$ because we are dealing with the Bose-Hubbard model. For a typical state with respect to the Gibbs measure we have $v^{(\rho)} \sim g$ and $\Delta_{\rho} \hat E \sim \mu^{-1}$. Therefore $\mu \hat u^{(1)} \sim g$, \ie the leading contribution to the generator is a constant and is not small in the limit $\mu \rightarrow 0$.

\section{Geometry of Resonances}

Given a point $x\in \Z_N$ and considering the phase space as a subset of $\R^N$, we construct a subset $\mathsf R(x)$ of $\R^N$ with the two following characteristics.  
First, if a point of the phase space does not belong to this set, then the energy current for the Hamiltonian $\widetilde h$ vanishes through the bonds near $x$.
Second, it is approximately invariant under the dynamics generated by $\widetilde{h}$, 
meaning that the commutator of $\widetilde h$ and the projector on the space spanned by states in the set vanishes everywhere except on a subspace spanned by classical states in a subset $\mathsf S(x) \subset \R^N$ which is of small probability with respect to the Gibbs measure. 

The ideas of this Section are best understood visually. 
We hope that figure \ref{fig:3d flattening} will help in that respect.
We let 
\begin{equation}\label{definition du parametre r}
r \; = \; r(n_1) \; = \; \max_{1 \le k \le n_1} r_k, 
\end{equation}
where the numbers $r_k$ are defined in (\ref{eq_boundedrange}).
We let $\delta > 0$ be as in Section \ref{sec_removal of non-resonant terms}.

\subsection{Preliminary definitions}

We recall that, given $\rho \in \Z^{N}$, we denote by $\supp (\rho) \subset \Z^{N}$ the set of points $x$ such that $\rho_x \ne 0$ and we have defined the set $M_r \subset \Z^{N}$ of vectors $\rho = (\rho_x)_{x\in\Z_N}$ such that 
$\max_{x\in\Z_N} |\rho_x| \le r$ and $\supp (\rho) \subset \mathrm B (x, r)$ for some x\ in $\Z_N$.
We write $\abs{\rho}^2_2=\sum_x\str \rho_x \str^2$.
One easily checks that for any $\rho \in M_r$ and $r>1$, we have $\str \rho \str_2\leq 2r^2$ and this will be used without further comment.

Given $x\in\Z^d$, we say that a subset $\{ \rho_1, \dots , \rho_p \} \subset M_r$ is a cluster around $x$ if 
\begin{enumerate}
\item 
the vectors $\rho_1, \dots , \rho_p$ are linearly independent,
\item 
if $p\ge 2$, for all $1 \le i \ne j \le p$, there exist $1 \le i_1, \dots , i_m \le p$ 
such that $i_1=i$, $i_m=j$ and $\supp (\rho_{i_s}) \cap \supp (\rho_{i_{s+1}}) \ne \varnothing$ for all $1 \le s \le m-1$,


\item 
$\supp (\rho_j)\subset \mathrm B(x,4r)$ for some $1 \le j \le p$.  
\end{enumerate}

Finally, given $\rho\in M_r$, we define
\begin{equation*}
\pi (\rho) \; = \; \{ \eta \in \R^{N} : \rho\cdot \eta = 0 \}.
\end{equation*}
Given a subspace $E \subset \R^{N}$, and given $\eta \in \R^{N}$, we denote by $P(\eta, E)$ the orthogonal projection of $\eta$ on the subspace $E$.

\subsection{Approximately invariant resonant zones}

Let $L > 0$, 
let $n_2\ge 1$, 
and let $x\in \Z_N$. 
Let us define two subsets of $\R^{N}$: 
a set $\mathsf R_{\delta,n_2}(x) \subset\R^{N}$ of resonant points, 
and a small set $\mathsf S_{\delta,n_2}(x) \subset\R^{N}$ of ``multi-resonant" points.

To define $\mathsf R_{\delta,n_2}(x)$, let us first define the sets $\mathsf B_\delta (\rho_1, \dots , \rho_p) \subset \R^{N}$, 
where $\{ \rho_1, \dots , \rho_p \}$ is a cluster around $x$.
We say that $\eta \in \mathsf B_\delta (\rho_1, \dots , \rho_p)$ if
\begin{equation}\label{first condition B set}
\big| \eta - P \big( \eta , \pi (\rho_1) \cap \dots \cap \pi (\rho_p) \big) \big|_2
\; \le \; L^p \delta^{-\gamma}
\end{equation}
with $\gamma$ as in section \ref{sec_removal of non-resonant terms}
and if, for every linearly independent $\rho'_1, \dots , \rho'_{p'} \in M_r \cap \spanv \{ \rho_1, \dots ,\rho_p\}$, 
\begin{equation*}
\big| P \big( \eta , \pi (\rho_1') \cap \dots \cap \pi (\rho'_{p'}) \big) - P \big( \eta , \pi (\rho_1) \cap \dots \cap \pi (\rho_p) \big) \big|_2
\; \le \;
\big( L^p - L^{p'} \big) \delta^{-\gamma} .
\end{equation*}
We next define $\mathsf R_{\delta,n_2}(x)$ as the union of all the sets $\mathsf B_\delta (\rho_1, \dots , \rho_p) \subset \R^{N}$ with $p \le n_2$.

We then define $\mathsf S_{\delta,n_2} (x)$ as the set of points $\eta\in \R^{N}$ for which there exists a cluster $\{ \rho_1 , \dots , \rho_{n_2} \}$ around $x$, 
such that $|\rho_j \cdot \eta| \le L^{n_2+1} \delta^{-\gamma}$ for every $1 \le j \le n_2$.

We finally define a smooth indicator function on the complement of $\mathsf R_{\delta,n_2}(x)$ by means of a convolution: 
\begin{equation}\label{smooth indicator good set}
\theta_{x,\delta,n_2} (\eta) 
\; = \;
1 - 
\frac{1}{\Big( \int_\R \xi_{\delta^{-\gamma}} (z) \, \dd z \Big)^{N}} \;
\int_{\R^{N}} \chi_{\mathsf R_{\delta,n_2} (x)} (\eta + \eta') \Big( \prod_{x\in \Z_N} \xi_{\delta^{-\gamma}} (\eta'_x) \Big) \, \dd \eta'. 
\end{equation}

\begin{proposition}\label{prop_invariance resonant set}
Let $n_1$ be given, and so $r(n_1)$ defined by \eqref{definition du parametre r} be fixed as well. 
Let then $n_2 \ge 1$ be fixed. 
The following holds for $L$ large enough and $\delta$ and $\mu$ small enough.
\begin{enumerate}
\item
If
$\theta_{x,\delta,n_2} (\eta) > 0$
then 
$\zeta_{\rho}(\eta) = 0$
for all
$\rho\in M_r$
such that 
$\supp (\rho) \subset \mathrm B(x,4r)$.
\item
$ \big(\ad_{\widetilde{h}} \hat{\theta}_{x, \delta, n_2} \big)  \hat P_{\eta} = 0$
for all 
$\eta \in \Omega_N \subset \R^N$
such that 
$\eta \notin \mathsf S_{n_2} (x)$.
\end{enumerate}
\end{proposition}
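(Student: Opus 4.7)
I proceed by contrapositive. Suppose $\zeta_\rho(\eta)\neq 0$ for some $\rho\in M_r$ with $\supp(\rho)\subset \mathrm B(x,4r)$. Since $\zeta_\rho=\xi_{\delta^{-\gamma}}\circ \Delta_\rho E$ and $\Delta_\rho E(\eta)=2\eta\cdot\rho+|\rho|_2^2$, nonvanishing yields $|\eta\cdot\rho|\leq \delta^{-\gamma}+2r^4$. For any $\eta^*$ in the box $\eta+[-2\delta^{-\gamma},2\delta^{-\gamma}]^{N}$ that enters the convolution (\ref{smooth indicator good set}), $|(\eta^*-\eta)\cdot\rho|\leq 2(2r+1)r\,\delta^{-\gamma}$, hence $|\eta^*\cdot\rho|\leq C_r\delta^{-\gamma}$. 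The singleton $\{\rho\}$ is a cluster around $x$, and for $p=1$ the second defining condition of $\mathsf B(\rho)$ is vacuous (every $\rho'\in\spanv\{\rho\}\cap M_r$ is parallel to $\rho$, so $\pi(\rho')=\pi(\rho)$); the first reduces to $|\eta^*\cdot\rho|/|\rho|_2\leq L\delta^{-\gamma}$, which holds for $L\geq C_r$. Thus every $\eta^*$ in the kernel support lies in $\mathsf B(\rho)\subset\mathsf R_{\delta,n_2}(x)$, forcing the convolution to equal $1$ and hence $\theta_{x,\delta,n_2}(\eta)=0$.

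\textbf{Plan for part 2, reduction.} Decompose $\widetilde h=\sum_{\rho\in M_r}\widetilde h^{(\rho)}$, each summand carrying the factor $\hat\zeta_\rho$ coming from $\caR$. Because $\hat\theta_{x,\delta,n_2}$ is diagonal,
\[
\ad_{\widetilde h}(\hat\theta_{x,\delta,n_2})\hat P_\eta\;=\;\sum_{\rho\in M_r}\bigl(\theta_{x,\delta,n_2}(\eta+\rho)-\theta_{x,\delta,n_2}(\eta)\bigr)\widetilde h^{(\rho)}\hat P_\eta.
\]
A term survives only when $\widetilde h^{(\rho)}\hat P_\eta\neq 0$, which forces $\rho$ to be resonant at $\eta$ (so $|\eta\cdot\rho|\leq C\delta^{-\gamma}$) and also $\supp(\rho)$ to meet the bounded neighborhood of $x$ on which $\theta_{x,\delta,n_2}$ depends. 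By the convolution formula it then suffices to verify the pointwise invariance $\chi_{\mathsf R_{\delta,n_2}(x)}(\eta^*)=\chi_{\mathsf R_{\delta,n_2}(x)}(\eta^*+\rho)$ for every $\eta^*$ in the sup-norm $2\delta^{-\gamma}$-neighborhood of $\eta$; by symmetry under $\eta^*\leftrightarrow \eta^*+\rho$, $\rho\leftrightarrow -\rho$ it is enough to show $\eta^*\in\mathsf R_{\delta,n_2}(x)\Rightarrow \eta^*+\rho\in\mathsf R_{\delta,n_2}(x)$.

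\textbf{Three-scenario argument.} Write $\eta^*\in \mathsf B(\rho_1,\dots,\rho_p)$ for some cluster around $x$, and set $V=\spanv\{\rho_1,\dots,\rho_p\}$. \emph{Case 1:} $\rho\in V$. Split $V=\mathbb R\rho\oplus (V\cap\rho^\perp)$; the second defining condition of $\mathsf B$ applied with $\{\rho\}$ yields $|P(\eta^*_V,V\cap\rho^\perp)|_2\leq (L^p-L)\delta^{-\gamma}$, and near-resonance of $\rho$ at $\eta^*$ gives $|P(\eta^*_V,\mathbb R\rho)|_2\leq C\delta^{-\gamma}$. Pythagoras together with $|\rho|_2\leq 2r^2$ then produces $|\eta^*_V+\rho|_2\leq L^p\delta^{-\gamma}$ once $L$ is large enough, and analogous bookkeeping recovers all other second conditions for $\eta^*+\rho$. \emph{Case 2:} $\rho\perp V$; then $\rho\in V^\perp\subset \pi(\rho'_1,\dots,\rho'_{p'})$ for every subfamily, so all projections appearing in the defining conditions of $\mathsf B(\rho_1,\dots,\rho_p)$ are manifestly $\rho$-translation invariant. \emph{Case 3} (neither): vectors with disjoint supports are orthogonal, so $\rho\not\perp V$ forces $\supp(\rho)$ to meet $\bigcup_i\supp(\rho_i)$, making $\{\rho,\rho_1,\dots,\rho_p\}$ a cluster around $x$. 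If $p<n_2$, combining $|\eta^*\cdot\rho|\leq C\delta^{-\gamma}$ with $\eta^*\in\mathsf B(\rho_1,\dots,\rho_p)$ places $\eta^*\in\mathsf B(\rho,\rho_1,\dots,\rho_p)$, reducing to Case 1 for this enlarged cluster. If $p=n_2$, the first defining inequality of $\mathsf B(\rho_1,\dots,\rho_{n_2})$ and $|\rho_j|_2\leq 2r^2$ give $|\rho_j\cdot\eta^*|\leq 2r^2 L^{n_2}\delta^{-\gamma}$, and transferring from $\eta^*$ to $\eta$ via the triangle inequality produces $|\rho_j\cdot\eta|\leq L^{n_2+1}\delta^{-\gamma}$ for all $j$ and $L$ sufficiently large, putting $\eta\in\mathsf S_{n_2}(x)$ and contradicting the hypothesis.

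\textbf{Main obstacle.} The technical crux is Case 1: showing that \emph{every} defining inequality of $\mathsf B(\rho_1,\dots,\rho_p)$, not merely the outer cylinder bound, survives the translation $\eta^*\mapsto \eta^*+\rho$. This is exactly what the tiered flattening widths $L^{p'}\delta^{-\gamma}$ are engineered to absorb, but carrying out the estimates for every independent subfamily $\{\rho'_1,\dots,\rho'_{p'}\}\subset V\cap M_r$ and ensuring all inequalities close simultaneously for a common $L$ chosen large in terms of $r$, $n_1$ and $n_2$ is where the real work sits.
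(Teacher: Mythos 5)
Your overall architecture matches the paper's: part 1 is the paper's argument run as a contrapositive and is complete; for part 2 you make the same reduction to pointwise invariance of $\chi_{\mathsf R_{\delta,n_2}(x)}$ under resonant moves, and the same three-scenario split ($\rho\in\spanv$, $\rho\perp\spanv$, neither), with the third case correctly terminating either in a cluster extension or in $\eta\in\mathsf S_{n_2}(x)$. However, there is a genuine gap exactly where you flag the "main obstacle": the invariance of $\mathsf B_\delta(\rho_1,\dots,\rho_p)$ under a shift by a resonant $\rho\in\spanv\{\rho_1,\dots,\rho_p\}$ (the paper's Lemma \ref{lem_invariance of the sets B}) is not proved, and the "analogous bookkeeping" you propose for the flattening conditions does not close. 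Concretely: for a subfamily $\rho_1',\dots,\rho'_{p'}$ with $\rho\notin\spanv\{\rho_1',\dots,\rho'_{p'}\}$, translating by $\rho$ changes the left-hand side of the flattening inequality by $P\big(\rho,\pi(\rho_1')\cap\dots\cap\pi(\rho'_{p'})\big)$, which costs up to $|\rho|_2\le 2r^2$ — but the hypothesis only gives the bound $(L^p-L^{p'})\delta^{-\gamma}$ with \emph{zero slack}. One must first show that, under the additional hypotheses $|\rho\cdot\eta|\le \mathrm K\delta^{-\gamma}$ and $|\eta'|_2\le L^p\delta^{-\gamma}$, the flattening bound self-improves to $(L^p-L^{p'}-2r^2)\delta^{-\gamma}$; this is the paper's inequality \eqref{thing to show in invariance of the sets B}, whose proof combines the radial bound with the quantitative transversality estimate of Lemma \ref{lem_close to two subspaces implies close to the intersection} (a uniform lower bound on $\big|P\big(\rho_{p+1},\pi(\rho_1)\cap\dots\cap\pi(\rho_p)\big)\big|_2$ over the finite set $M_r$) and a non-obvious algebraic comparison of the two Pythagorean decompositions of $|\eta'|_2^2$. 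Your Pythagoras argument only recovers the outer cylinder bound \eqref{point 1 in proof invariance set B}, which is the easy half.

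Two smaller omissions of the same kind: in Case 3 with $p<n_2$ you assert that near-resonance of $\rho$ plus $\eta^*\in\mathsf B(\rho_1,\dots,\rho_p)$ implies $\eta^*\in\mathsf B(\rho,\rho_1,\dots,\rho_p)$ — this is the paper's Lemma \ref{lem_extension B}, which again rests on Lemma \ref{lem_close to two subspaces implies close to the intersection} and needs the factor $L^{p+1}-L^p$ of extra room, so it should be proved rather than asserted; and the transversality constant $\mathrm C(r,p)$ itself (finiteness uniform over choices in $M_r$) must be established, which the paper does via the explicit identification \eqref{v is one dimensional} of the unit vector $v$ and a finiteness argument. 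Your identification of the connectivity of the enlarged cluster (disjoint supports imply orthogonality, so $\rho\cancel{\bot}\spanv$ forces support overlap) is a nice explicit point that the paper leaves implicit, and the rest of Case 3 is correct.
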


\subsection{Proof of Proposition \ref{prop_invariance resonant set}}\label{subs_proof of the invariance of the resonant set}

We start by a series of lemmas. 
The first one simply expresses, in a particular case, that if a point is close to two vector spaces, then it is also close to their intersection. 
The uniformity of the constant $\mathrm C$ comes from the fact that we impose the vectors to sit in the set $M_r$.

\begin{lemma}\label{lem_close to two subspaces implies close to the intersection}
Let $p\ge 1$.
There exists a constant $\mathrm C = \mathrm C (r,p) < + \infty$ such that, 
given linearly independent vectors $\rho_1, \dots , \rho_p,\rho_{p+1} \in M_r$ and given $\eta \in \R^{N}$, it holds that 
\begin{equation*}
\big| \eta - P\big( \eta, \pi (\rho_1) \cap \dots \cap \pi (\rho_{p}) \cap \pi (\rho_{p+1}) \big) \big|_2
\; \le \; \mathrm C \,
\Big( 
|\rho_{p+1} \cdot \eta | + \big| \eta - P\big( \eta, \pi (\rho_1) \cap \dots \cap \pi (\rho_{p}) \big) \big|_2
\Big) .
\end{equation*}
\end{lemma}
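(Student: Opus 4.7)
The plan is to reduce the estimate to the geometry inside the subspace $E_p := \pi(\rho_1)\cap\cdots\cap\pi(\rho_p)$, and then to exploit the integrality of the vectors $\rho_i \in M_r \subset \Z^N$ to obtain a constant depending only on $r$ and $p$.

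First I would write $\eta = P(\eta, E_p) + \eta^\perp$ with $\eta^\perp \perp E_p$, so that $|\eta^\perp|_2$ equals the second summand on the right-hand side of the claim. Since $E_{p+1} := E_p \cap \pi(\rho_{p+1}) \subset E_p$, orthogonal projection factors as $P(\eta, E_{p+1}) = P(P(\eta, E_p), E_{p+1})$, and the Pythagorean identity gives
$$|\eta - P(\eta, E_{p+1})|_2^2 \;=\; |\eta^\perp|_2^2 \,+\, |P(\eta, E_p) - P(\eta, E_{p+1})|_2^2,$$
so it suffices to bound the second term on the right. Let $q \in E_p$ denote the orthogonal projection of $\rho_{p+1}$ onto $E_p$. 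Linear independence of $\rho_1,\ldots,\rho_{p+1}$ forces $q \neq 0$, and since $\rho_{p+1}\cdot y = q\cdot y$ for every $y \in E_p$, the set $E_{p+1}$ is the hyperplane of $E_p$ orthogonal to $q$. Hence the distance inside $E_p$ from $P(\eta, E_p)$ to $E_{p+1}$ equals $|\rho_{p+1}\cdot P(\eta, E_p)|/|q|_2$. Expanding $\rho_{p+1}\cdot P(\eta, E_p) = \rho_{p+1}\cdot\eta - \rho_{p+1}\cdot\eta^\perp$ and using $|\rho_{p+1}|_2 \leq 2r^2$ yields
$$|P(\eta, E_p) - P(\eta, E_{p+1})|_2 \;\leq\; \frac{1}{|q|_2}\bigl(|\rho_{p+1}\cdot\eta| + 2r^2\,|\eta^\perp|_2\bigr).$$

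The main, and essentially only, obstacle is to produce a uniform lower bound on $|q|_2$; this is precisely where the hypothesis $\rho_i \in M_r$ enters. I would invoke the Gram determinant identity
$$|q|_2^2 \;=\; \frac{\det G(\rho_1,\ldots,\rho_{p+1})}{\det G(\rho_1,\ldots,\rho_p)},$$
where $G(v_1,\ldots,v_k) = (v_i\cdot v_j)_{i,j}$. Since the $\rho_i$ are linearly independent integer vectors, $\det G(\rho_1,\ldots,\rho_{p+1})$ is a positive integer and therefore $\geq 1$, while Hadamard's inequality combined with $|\rho_i|_2 \leq 2r^2$ bounds the denominator by $(2r^2)^{2p}$. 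Thus $|q|_2 \geq (2r^2)^{-p}$, a bound depending only on $r$ and $p$. Substituting this back into the previous display, using $|\eta^\perp|_2 = |\eta - P(\eta, E_p)|_2$, and collecting the $r$- and $p$-dependent prefactors into a single constant $\mathrm C = \mathrm C(r,p)$ completes the proof.
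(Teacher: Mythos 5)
Your proof is correct, and its skeleton matches the paper's: both reduce the problem to bounding $\big|P(\eta,E_p)-P(\eta,E_{p+1})\big|_2$, and both identify the relevant direction as the projection $q=P(\rho_{p+1},E_p)$ of $\rho_{p+1}$ onto $E_p=\pi(\rho_1)\cap\cdots\cap\pi(\rho_p)$, so that everything hinges on a lower bound for $|q|_2$. Where you genuinely diverge is in how that lower bound is obtained. The paper argues by finiteness: since there are only finitely many relevant vectors in $M_r$, the nonzero quantity $\big|P(\rho_{p+1},E_p)\big|_2$ is bounded below by a positive constant. You instead use the Gram determinant identity $|q|_2^2=\det G(\rho_1,\dots,\rho_{p+1})/\det G(\rho_1,\dots,\rho_p)$ together with integrality (numerator a positive integer, hence $\ge 1$) and Hadamard's inequality (denominator $\le (2r^2)^{2p}$), yielding the explicit bound $|q|_2\ge (2r^2)^{-p}$. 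This buys two things: an explicit constant $\mathrm C(r,p)$, and a cleaner justification that the constant is independent of $N$ — the set $M_r$ itself grows with $N$, so the paper's finiteness argument implicitly relies on there being only finitely many possible Gram matrices of vectors in $M_r$, a point your computation makes transparent. Your use of the Pythagorean identity and the explicit hyperplane-distance formula $\dist(z,E_{p+1})=|\rho_{p+1}\cdot z|/|q|_2$ inside $E_p$ is also a slightly more direct route than the paper's computation of the unit vector $v$ in the basis $\{\rho_1,\dots,\rho_N\}$, but the content is the same.
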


\begin{proof}
First, 
\begin{multline}\label{first inequality in close to two subspaces implies close to the intersection}
\big| \eta - P\big( \eta, \pi (\rho_1) \cap \dots \cap  \pi (\rho_{p+1}) \big) \big|_2
\; \le \; 
\big| \eta - P\big( \eta, \pi (\rho_1) \cap \dots \cap \pi (\rho_{p})  \big) \big|_2
\\
+ 
\big| P\big( \eta, \pi (\rho_1) \cap \dots \cap  \pi (\rho_{p+1}) \big) - P\big( \eta, \pi (\rho_1) \cap \dots \cap \pi (\rho_{p})  \big) \big|_2.
\end{multline}
The lemma is already shown if the second term in the right hand side is zero. We further assume this  not to be the case. 
Next, since $\rho_{p+1} \cdot P\big( \eta, \pi (\rho_1) \cap \dots \cap \pi (\rho_{p+1}) \big) = 0$, we obtain
\begin{multline*}
\rho_{p+1} \cdot \eta 
\; = \; 
\rho_{p+1} \cdot \Big( \eta - P\big( \eta, \pi (\rho_1) \cap \dots \cap \pi (\rho_{p})  \big) \Big)
\; + \\ 
\rho_{p+1} \cdot \Big( 
P\big( \eta, \pi (\rho_1) \cap \dots \cap \pi (\rho_{p}) \big) - P\big( \eta, \pi (\rho_1) \cap \dots \cap \pi (\rho_{p+1}) \big)
\Big).
\end{multline*}
This implies
\begin{multline}\label{second inequality in close to two subspaces implies close to the intersection}
\Big| \rho_{p+1} \cdot \Big( 
P\big( \eta, \pi (\rho_1) \cap \dots \cap \pi (\rho_{p}) \big) - P\big( \eta, \pi (\rho_1) \cap \dots \cap \pi (\rho_{p+1}) \big) \Big)
\Big|
\; \le 
\\
|\rho_{p+1} \cdot \eta |
+ 
|\rho_{p+1}|_2
\Big|
\eta - P\big( \eta, \pi (\rho_1) \cap \dots \cap \pi (\rho_{p})  \big) 
\Big|_2.
\end{multline}
The vector
\begin{equation}\label{definition of v in close to two subspaces implies close to the intersection}
v \; = \; \frac{P\big( \eta, \pi (\rho_1) \cap \dots \cap \pi (\rho_{p}) \big) - P\big( \eta, \pi (\rho_1) \cap \dots \cap \pi (\rho_{p+1}) \big) }
{\big| P\big( \eta, \pi (\rho_1) \cap \dots \cap \pi (\rho_{p}) \big) - P\big( \eta, \pi (\rho_1) \cap \dots \cap \pi (\rho_{p+1}) \big) \big|_2}
\end{equation}
is well defined since we have assumed that the denominator in this expression does not vanish. 
The bound \eqref{second inequality in close to two subspaces implies close to the intersection} is rewritten as
\begin{multline*}
\big| P\big( \eta, \pi (\rho_1) \cap \dots \cap \pi (\rho_{p}) \big) - P\big( \eta, \pi (\rho_1) \cap \dots \cap \pi (\rho_{p+1}) \big) \big|_2
\; \le 
\\
\frac{|\rho_{p+1} \cdot \eta |
+ 
|\rho_{p+1}|_2
\Big|
\eta - P\big( \eta, \pi (\rho_1) \cap \dots \cap \pi (\rho_{p})  \big) 
\Big|_2}{|\rho_{p+1} \cdot v|}
\end{multline*}
Inserting this last inequality in \eqref{first inequality in close to two subspaces implies close to the intersection}, we arrive at 
\begin{equation*}
\big| \eta - P\big( \eta, \pi (\rho_1) \cap \dots \cap  \pi (\rho_{p+1}) \big) \big|_2
\; \le \\ 
\frac{|\rho_{p+1} \cdot \eta |}{|\rho_{p+1} \cdot v|}
+ 
\bigg( 1 + \frac{|\rho_{p+1}|_2}{|\rho_{p+1} \cdot v|} \bigg) 
\Big|
\eta - P\big( \eta, \pi (\rho_1) \cap \dots \cap \pi (\rho_{p})  \big) 
\Big|_2 .
\end{equation*}

To finish the proof, it remains to establish that $|\rho_{p+1} \cdot v|$ can be bounded from below by some strictly positive constant, 
where $v$ is given by \eqref{definition of v in close to two subspaces implies close to the intersection}. 
Let us show that
\begin{equation}\label{v is one dimensional}
v = \pm \frac{P \big(\rho_{p+1}, \pi (\rho_1) \cap \dots \cap \pi (\rho_p)\big)}{\big| P \big(\rho_{p+1}, \pi (\rho_1) \cap \dots \cap \pi (\rho_p)\big) \big|_2}.
\end{equation}
We can find vectors $\rho_{p+2}, \dots , \rho_N$ so that $\{ \rho_1, \dots, \rho_N \}$ forms a basis of $\R^N$
and so that every vector $\rho_j$ with $p+2 \le j \le N$ is orthogonal to $\spanv \{ \rho_1, \dots , \rho_{p+1} \}$. 
We express the vector $\eta$ in this basis, $\eta = \sum_{j=1}^N \eta^j \rho_j$, and, from \eqref{definition of v in close to two subspaces implies close to the intersection}, 
we deduce that, for some non-zero constant $R$, we have
\begin{equation*}
v \; = \; R\sum_{j=1}^N \eta^j \Big\{ P\big( \rho_j, \pi (\rho_1) \cap \dots \cap \pi (\rho_{p}) \big) - P\big( \rho_j, \pi (\rho_1) \cap \dots \cap \pi (\rho_{p+1}) \big)  \Big\}. 
\end{equation*}
All the terms corresponding to $1 \le j \le p$ vanish since $\rho_j \perp \pi (\rho_j)$, 
the term $P\big( \rho_{p+1}, \pi (\rho_1) \cap \dots \cap \pi (\rho_{p+1}) \big)$ vanishes for the same reason, 
and all the terms corresponding to $j \ge p+2$ vanish too, as they read in fact $\rho_j - \rho_j = 0$. 
So, the only term left is $R \eta^{p+1}P\big( \rho_{p+1}, \pi (\rho_1) \cap \dots \cap \pi (\rho_{p}) \big)$ and, since $|v|=1$, we arrive at \eqref{v is one dimensional}. 

From \eqref{v is one dimensional} we deduce that 
\begin{equation*}
|v\cdot \rho_{p+1}| \; = \;
\big| P \big(\rho_{p+1}, \pi (\rho_1) \cap \dots \cap \pi (\rho_p)\big) \big|_2 .
\end{equation*}
If $\rho_{p+1} \bot \spanv \{ \rho_1, \dots , \rho_p \}$, then the right hand side just becomes $|\rho_{p+1}|_2$.
This quantity is bounded from below by a strictly positive constant since so is the the norm of any non-zero vector in $M_r$. 
Otherwise, if $\rho_{p+1} \cancel{\bot} \spanv \{ \rho_1, \dots ,\rho_p \}$, 
we know however that the quantity cannot vanish since $\rho_{p+1} \notin \spanv\{ \rho_1, \dots , \rho_p \}$.
Because they are only finitely many vectors $\rho \in M_r$ with the property that $\rho \cancel{\bot} \spanv \{ \rho_1, \dots , \rho_p \}$, we conclude that 
the quantity is bounded from below by a strictly positive constant.
\end{proof}

The next Lemma describes the crucial geometrical properties of the sets $\mathsf B_\delta (\rho_1, \dots ,\rho_p)$ 
that allows to establish the second assertion of Proposition \ref{prop_invariance resonant set}.

\begin{lemma}\label{lem_invariance of the sets B}
Let $\{ \rho_1 , \dots , \rho_p \}$ be a cluster around $x$. 
If, given $\mathrm K < + \infty$, $L$ is taken large enough, then, 
for every $\rho \in M_r \cap \spanv \{\rho_1, \dots ,\rho_p\}$, it holds that if
\begin{equation*}
\eta \in \mathsf B_\delta (\rho_1 , \dots , \rho_p) \quad \text{and} \quad  |\rho \cdot \eta| \; \le \; \mathrm K \delta^{-\gamma}
\end{equation*}
then
\begin{equation*}
\eta + t \rho \in \mathsf B_\delta (\rho_1 , \dots , \rho_p)
\quad \text{as long as} \quad  |t| \le \delta^{-\gamma}. 
\end{equation*}
\end{lemma}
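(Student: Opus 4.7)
I verify both defining conditions of $\mathsf B_\delta(\rho_1, \dots, \rho_p)$ for the shifted point $\eta + t\rho$. Set $V := \spanv\{\rho_1, \dots, \rho_p\}$ and $\pi := V^\perp = \pi(\rho_1) \cap \cdots \cap \pi(\rho_p)$. Since $\rho \in V$ we have $\rho \perp \pi$, so $P(\eta + t\rho, \pi) = P(\eta, \pi)$, and both conditions reduce to statements about shifts inside $V$. For the radius condition, split $V$ orthogonally as $\langle\rho\rangle \oplus V_\perp$ with $V_\perp := V \cap \langle\rho\rangle^\perp$. Applying the second defining property of $\eta$ to the singleton $\{\rho\} \subset M_r \cap V$ yields $|P(\eta, V_\perp)|_2 \le (L^p - L)\delta^{-\gamma}$, while the $\langle\rho\rangle$-component of $P(\eta + t\rho, V) = P(\eta, V) + t\rho$ has norm $|(\rho\cdot\eta)/|\rho| + t|\rho||$, which is at most $(K + 2r^2)\delta^{-\gamma}$ by $|\rho\cdot\eta| \le K\delta^{-\gamma}$, $|t| \le \delta^{-\gamma}$ and $1 \le |\rho|_2 \le 2r^2$. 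Pythagoras together with $\sqrt{a^2+b^2}\le a+b$ give $|P(\eta + t\rho, V)|_2 \le (L^p - L + K + 2r^2)\delta^{-\gamma} \le L^p\delta^{-\gamma}$ once $L \ge K + 2r^2$.

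For the flattening condition, fix linearly independent $\rho'_1,\dots,\rho'_{p'} \in M_r \cap V$, put $V' := \spanv\{\rho'_1,\dots,\rho'_{p'}\}$, $\pi' := (V')^\perp$, $W := V\cap\pi'$. If $\rho \in V'$ both $P(t\rho, \pi')$ and $P(t\rho, \pi)$ vanish and the difference is invariant. Otherwise $\{\rho'_1,\dots,\rho'_{p'},\rho\}$ is linearly independent of size $p'+1 \le p$. Decomposing $\rho = P(\rho, V') + \rho''$ with $\rho'' \in W$, the shift by $tP(\rho,V')\in V'$ contributes nothing to either projection, so it suffices to bound $|P(\eta, W) + t\rho''|_2$. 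Orthogonally splitting $W = \langle\rho''\rangle \oplus W''$ with $W'' := V \cap (V' \oplus \langle\rho''\rangle)^\perp$, the $W''$-part of $P(\eta+t\rho, W)$ equals $P(\eta, W'')$ (since $\rho \perp W''$) and is bounded by applying the defining property of $\eta$ to the augmented set, giving $|P(\eta, W'')|_2 \le (L^p - L^{p'+1})\delta^{-\gamma}$. Because $\rho\cdot\rho'' = |\rho''|^2$, the $\langle\rho''\rangle$-component is $\bigl((\eta+t\rho)\cdot\rho''/|\rho''|^2\bigr)\rho''$, of norm $|(\eta+t\rho)\cdot\rho''|/|\rho''|$.

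\textbf{Hard step.} The crux is to bound $|(\eta+t\rho)\cdot\rho''|/|\rho''|$ by $C_1\delta^{-\gamma}$ with $C_1 = C_1(r, K, p)$ independent of $L$. Writing $\rho'' = \rho - P(\rho,V')$ and applying Lemma \ref{lem_close to two subspaces implies close to the intersection} to $\eta+t\rho$ with the initial set $\{\rho'_1,\dots,\rho'_{p'}\}$ augmented by $\rho$ controls $|P(\eta+t\rho, V'')|_2$ by $|(\eta+t\rho)\cdot\rho| + |P(\eta+t\rho, V')|_2$ up to a constant depending only on $r$ and $p$; combined with the Pythagorean identity $|P(\eta+t\rho, V'')|_2^2 = |P(\eta+t\rho, V')|_2^2 + |(\eta+t\rho)\cdot\rho''|^2/|\rho''|^2$ and the a priori bound $|(\eta+t\rho)\cdot\rho| \le (K + 4r^4)\delta^{-\gamma}$, this trades the large $V'$-projection for the small inner product $|\rho\cdot\eta|$ and produces the required bound on $|(\eta+t\rho)\cdot\rho''|/|\rho''|$. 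Pythagoras plus $\sqrt{a^2+b^2}\le a+b$ then yield $|P(\eta + t\rho, \pi') - P(\eta + t\rho, \pi)|_2 \le (L^p - L^{p'+1} + C_1)\delta^{-\gamma}$, which is at most $(L^p - L^{p'})\delta^{-\gamma}$ as soon as $L^{p'}(L-1) \ge C_1$. The main obstacle is precisely this estimate: a naive Cauchy--Schwarz would introduce $|P(\eta+t\rho, V')|_2 \lesssim L^p\delta^{-\gamma}$ and spoil the bound, and Lemma \ref{lem_close to two subspaces implies close to the intersection} is essential to convert the resonance hypothesis $|\rho\cdot\eta| \le K\delta^{-\gamma}$ into useful control in the $\rho''$-direction. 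The residual $O(1)$ excess is then absorbed by the gap $L^{p'+1} - L^{p'}$ between augmented and non-augmented flattening bounds, which dominates $C_1$ once $L$ is chosen sufficiently large.
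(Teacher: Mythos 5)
Your setup (reduction to $\eta' = \eta - P(\eta,\pi(\rho_1)\cap\cdots\cap\pi(\rho_p))$, the radius bound via the singleton flattening condition for $\{\rho\}$, the case split on whether $\rho\in V'$, and the use of the augmented set $\{\rho'_1,\dots,\rho'_{p'},\rho\}$ together with Lemma \ref{lem_close to two subspaces implies close to the intersection}) matches the paper's proof. However, the ``hard step'' rests on a false claim: the quantity $|(\eta+t\rho)\cdot\rho''|/|\rho''|_2$ is \emph{not} bounded by $C_1\delta^{-\gamma}$ with $C_1$ independent of $L$. Take $p=2$, $p'=1$, coordinates on $V$ in which $\rho'_1=(1,0)$ and $\rho=(1,1)$, so $\rho''=(0,1)$ and $W=\spanv\{\rho''\}$. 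The point $\eta'=(a,-a)$ with $a=(L^2-L)\delta^{-\gamma}/\sqrt2$ satisfies $|\rho\cdot\eta'|=0\le \mathrm K\delta^{-\gamma}$ and all the defining conditions of $\mathsf B_\delta(\rho_1,\rho_2)$, yet $|(\eta'+t\rho)\cdot\rho''|/|\rho''|_2=|{-a}+t|\sim L^2\delta^{-\gamma}$. Your proposed derivation cannot deliver the claim either: from $|P(\zeta,V'')|_2\le \mathrm C(|\rho\cdot\zeta|+|P(\zeta,V')|_2)$ and the Pythagorean identity you get $(\zeta\cdot\rho'')^2/|\rho''|_2^2\le \mathrm C^2|\rho\cdot\zeta|^2+2\mathrm C^2|\rho\cdot\zeta|\,|P(\zeta,V')|_2+(\mathrm C^2-1)|P(\zeta,V')|_2^2$, and since $|P(\zeta,V')|_2$ can be of order $L^p\delta^{-\gamma}$ and $\mathrm C\ge 1$ is not controlled to equal $1$, the right-hand side is of order $L^{2p}\delta^{-2\gamma}$ (even for $\mathrm C=1$ the cross term leaves $L^{p/2}\delta^{-\gamma}$). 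Consequently your final inequality $|P(\zeta,\pi')-P(\zeta,\pi)|_2\le(L^p-L^{p'+1}+C_1)\delta^{-\gamma}$ is also false in the example above (it would force $|P(\zeta,W)|_2\le C_1\delta^{-\gamma}$ there), and the condition ``$L^{p'}(L-1)\ge C_1$'' cannot be met for $p'<p-1$ once $C_1$ is allowed its true $L$-dependence.

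The paper avoids this by never bounding the $\rho''$-component in isolation. It proves the weaker but sufficient statement that $|P(\eta',\pi(\rho'_1)\cap\cdots\cap\pi(\rho'_{p'}))|_2\le(L^p-L^{p'}-2r^2)\delta^{-\gamma}$, i.e.\ the flattening bound holds with an additive margin of $2r^2\delta^{-\gamma}$ that absorbs the shift $t\rho$. This is obtained by writing the two orthogonal decompositions $|\eta'|_2^2=|P(\eta',V')|_2^2+|P(\eta',W)|_2^2=|P(\eta',V'')|_2^2+|P(\eta',W'')|_2^2$, inserting Lemma \ref{lem_close to two subspaces implies close to the intersection} and the augmented-set bound $|P(\eta',W'')|_2\le(L^p-L^{p'+1})\delta^{-\gamma}$, and then running a normalized comparison argument in which the global constraint $|\eta'|_2\le L^p\delta^{-\gamma}$ and the gap between $L^{p'+1}$ and $L^{p'}$ force the margin for $L$ large. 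You need this global trade-off between $|P(\eta',V')|_2$ and $|P(\eta',W)|_2$; a componentwise bound in the $\rho''$-direction does not exist.
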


\begin{proof}
To simplify some further expressions, let us define
\begin{equation*}
\eta' = \eta - P (\eta, \pi (\rho_1) \cap \dots \cap \pi(\rho_p)) \in \spanv(\rho_1, \dots , \rho_p).
\end{equation*}
The conditions ensuring that $\eta \in \mathsf B(\rho_1, \dots , \rho_p)$ now simply read
\begin{equation}\label{new formulation special set conditions}
|\eta'|_2 \; \le \; L^p \delta^{-\gamma} \qquad \text{and} \qquad
\big|P\big(\eta', \pi (\rho_1') \cap \dots\cap \pi (\rho'_{p'}) \big)\big|_2 \; \le \; \big( L^p - L^{p'} \big) \delta^{-\gamma} \qquad (p' < p),
\end{equation}
for all linearly independent $\rho_1', \dots , \rho_{p'}' \in M_r \cap \spanv \{\rho_1, \dots ,\rho_p\}$.
The condition $|\rho \cdot \eta| \le \mathrm K \delta^{-\gamma}$ implies $|\rho \cdot \eta'|  \le  \mathrm K \delta^{-\gamma}$.
We need to show that
\begin{align}
&|\eta' + t \rho |_2 \; \le \; L^p \delta^{-\gamma}&
 \text{for} \qquad |t|\le \delta^{-\gamma},
\label{point 1 in proof invariance set B}\\
&\big|P\big(\eta' + t\rho, \pi (\rho_1') \cap \dots\cap \pi (\rho'_{p'}) \big)\big|_2 \; \le \; \big( L^p - L^{p'} \big) \delta^{-\gamma}&
 \text{for} \qquad |t|\le \delta^{-\gamma}.
\label{point 2 in proof invariance set B}
\end{align}

Let us start with \eqref{point 1 in proof invariance set B}:
\begin{align*}
|\eta' + t \rho|_2
\; &\le \;
|\eta' - P(\eta',\pi (\rho))|_2 + |P(\eta',\pi (\rho))|_2 + |t| |\rho|_2
\; \le \;
|\eta' \cdot \rho| + |P(\eta',\pi (\rho))|_2 + |t| |\rho|_2 \\
\; &\le \; 
\mathrm K \delta^{-\gamma} + (L^p - L) \delta^{-\gamma} + 2r^2 \delta^{-\gamma}
\; \le \; L^p \delta^{-\gamma}.
\end{align*}
Here, to get the penultimate inequality, we have used \eqref{new formulation special set conditions} and the hypothesis $\rho \in M_r \cap \spanv \{\rho_1, \dots ,\rho_p\}$,
implying in particular $|\rho|_2 \le 2r^2$, 
while the last inequality is valid for large enough $L$.

Let us next move to \eqref{point 2 in proof invariance set B}.
Let us fix $\rho_1', \dots , \rho'_{p'}$. 
It is seen that, if $\rho \in \spanv \{ \rho_1', \dots , \rho'_{p'} \}$, then  \eqref{point 2 in proof invariance set B} is actually satisfied for all $t \in \R$. 
Let us therefore assume $\rho\notin \spanv \{ \rho_1', \dots , \rho'_{p'} \}$.
We write also $\rho = \rho'_{p'+1}$. 
We will show that, because $ |\rho \cdot \eta| \; \le \; \mathrm K \delta^{-\gamma}$, then in fact 
\begin{equation}\label{thing to show in invariance of the sets B}
\big|P\big(\eta', \pi (\rho_1') \cap \dots\cap \pi (\rho'_{p'}) \big)\big|_2 \; \le \; \big( L^p - L^{p'} - 2r^2 \big) \delta^{-\gamma}.
\end{equation}
Since $|\rho|_2 \le 2r^2$, this will imply  \eqref{point 2 in proof invariance set B}.

To establish \eqref{thing to show in invariance of the sets B}, we start by writing the decompositions 
\begin{align}
|\eta'|_2^2 \; = & \; \big| \eta' - P \big(\eta', \pi (\rho'_1) \cap \dots \cap \pi (\rho'_{p'}) \big) \big|_2^2  + \big| P \big(\eta', \pi (\rho'_1) \cap \dots \cap \pi (\rho'_{p'})\big) \big|_2^2, 
\label{decomp 1 in invariance of the sets B}\\
|\eta'|_2^2 \; = & \; \big| \eta' - P \big(\eta', \pi (\rho'_1) \cap \dots \cap \pi (\rho'_{p'+1})\big) \big|_2^2  + \big| P \big(\eta', \pi (\rho'_1) \cap \dots \cap \pi (\rho'_{p'+1})\big) \big|_2^2.
\label{decomp 2 in invariance of the sets B}
\end{align}
We bound the first term in the right hand side of \eqref{decomp 2 in invariance of the sets B} by applying Lemma \ref{lem_close to two subspaces implies close to the intersection}
and then using \eqref{decomp 1 in invariance of the sets B}:
\begin{align*}
\big| \eta' - P \big(\eta', \pi (\rho'_1) \cap \dots \cap \pi (\rho'_{p'+1})\big) \big|_2^2
\; &\le \; 
\mathrm C \Big( |\rho\cdot \eta'|^2 +  \big| \eta' - P \big(\eta', \pi (\rho'_1) \cap \dots \cap \pi (\rho'_{p'})\big) \big|_2^2 \Big) \\
&\le \;
\mathrm C \Big( 
|\rho\cdot \eta'|^2 + |\eta'|^2_2 - \big| P \big(\eta', \pi (\rho'_1) \cap \dots \cap \pi (\rho'_{p'})\big) \big|_2^2
\Big).
\end{align*}
It may be assumed that $\mathrm C \ge 1$.
Reinserting this bound in \eqref{decomp 2 in invariance of the sets B} yields
\begin{align}
|\eta'|_2^2 \; &\le \; 
\mathrm C \Big( 
|\rho\cdot \eta'|^2 + |\eta'|^2_2 - \big| P \big(\eta', \pi (\rho'_1) \cap \dots \cap \pi (\rho'_{p'})\big) \big|_2^2 \Big)
+ 
\big| P \big(\eta', \pi (\rho'_1) \cap \dots \cap \pi (\rho'_{p'+1})\big) \big|_2^2
\nonumber\\
&\le \; 
\mathrm C \Big( 
\mathrm K^2 \delta^{-2\gamma} + |\eta'|^2_2 - \big| P \big(\eta', \pi (\rho'_1) \cap \dots \cap \pi (\rho'_{p'})\big) \big|_2^2 \Big)
+ 
(L^p - L^{p' + 1})^2 \delta^{-2\gamma}.
\label{inequality that cannot be violated in in invariance of the sets B}
\end{align}
where the hypotheses $|\rho\cdot \eta'| \le \mathrm K \delta^{-\gamma}$ and $\eta \in \mathsf B(\rho_1, \dots , \rho_p)$ have been used to get the last line. 

Let us now show that \eqref{inequality that cannot be violated in in invariance of the sets B} implies \eqref{thing to show in invariance of the sets B} for $L$ large enough.
For this let us write
\begin{align*}
|\eta'|_2 \; &= \; (1 - \mu)^{1/2} L^p \delta^{-\gamma}
\qquad \text{with}\qquad 0 \le \mu \le 1, \\
\big| P \big(\eta', \pi (\rho'_1) \cap \dots \cap \pi (\rho'_{p'})\big) \big|_2 \; &= \; (1 - \nu)^{1/2} (L^p - L^{p'} - r^2) \delta^{-\gamma}
\qquad \text{with}\qquad \nu \le 1
\end{align*}
($\mu > 0$ actually, thanks to the hypothesis $|\rho\cdot \eta'| \le \mathrm K \delta^{-\gamma}$).
Showing \eqref{thing to show in invariance of the sets B} amounts showing $\nu \ge 0$.
With these new notations, inequality \eqref{inequality that cannot be violated in in invariance of the sets B} is rewritten as
\begin{align*}
1 + (\mathrm C - 1) \mu 
\; \le & \;
1 - \frac{2}{L^{p-p'-1}}  \\
&+ \frac{1}{L^{2(p-p'-1)}} + \mathrm C \bigg( 
\frac{\mathrm K^2}{L^{2p}} + \frac{2}{L^{p-p'}} + \frac{2r^2}{L^p} - \frac{1}{L^{2(p-p')}} - \frac{r^4}{L^{2p}} - \frac{2 r^2}{L^{2p-p'}}
\bigg) \\
&+ \mathrm C \nu \bigg( 1 - \frac{L^{p'}+r^2}{L^p} \bigg)^2.
\end{align*}
The left hand side is larger or equal to 1. 
But, when $L$ becomes large, the right hand side is larger or equal to 1 only if $\nu > 0$.
\end{proof}

\begin{lemma}\label{lem_extension B}
Let $\{ \rho_1 , \dots , \rho_p \}$ be a cluster around $x$, 
and let $\rho \in M_r$ be such that $\rho \notin \spanv\{ \rho_1 , \dots , \rho_p \}$, but such that $\{ \rho_1, \dots , \rho_p,\rho \}$ is a cluster. 
If, given $\mathrm K < + \infty$, $L$ is taken large enough, then
\begin{equation*}
\eta \in  \mathsf B (\rho_1 , \dots , \rho_p) \quad \text{and} \quad  |\rho \cdot \eta| \; \le \; \mathrm K \delta^{-\gamma}
\qquad  \Rightarrow \qquad 
\eta \in \mathsf B (\rho_1 , \dots , \rho_p,\rho).
\end{equation*}
\end{lemma}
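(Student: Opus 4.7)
The plan is to verify both defining conditions of $\mathsf B_\delta(\rho_1, \ldots, \rho_p, \rho)$ for the given $\eta$. The first condition is obtained directly from Lemma \ref{lem_close to two subspaces implies close to the intersection} applied with $\rho_{p+1} = \rho$:
\begin{equation*}
\big| \eta - P\big(\eta, \pi(\rho_1) \cap \cdots \cap \pi(\rho_p) \cap \pi(\rho)\big) \big|_2
\;\le\; \mathrm C \bigl( |\rho \cdot \eta| + L^p \delta^{-\gamma} \bigr)
\;\le\; \mathrm C (\mathrm K + L^p)\delta^{-\gamma},
\end{equation*}
which is bounded by $L^{p+1}\delta^{-\gamma}$ once $L$ is taken large.

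For the second condition, fix linearly independent $\rho_1'', \ldots, \rho_{p''}'' \in M_r \cap \spanv\{\rho_1, \ldots, \rho_p, \rho\}$ with $p'' \le p+1$. If $p'' = p+1$, these vectors form a basis of $\spanv\{\rho_1, \ldots, \rho_p, \rho\}$, so $\pi(\rho_1'') \cap \cdots \cap \pi(\rho_{p''}'')$ coincides with $\pi(\rho_1) \cap \cdots \cap \pi(\rho_p) \cap \pi(\rho)$ and there is nothing to prove. So assume $p'' \le p$. The strategy is to bound $|\eta - P(\eta, W)|_2$ for $W = \pi(\rho_1'') \cap \cdots \cap \pi(\rho_{p''}'')$, and combine this with the first-condition estimate via the triangle inequality. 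Expand each $\rho_j'' = \sum_{i=1}^p c_{ji}\rho_i + d_j \rho$; since $M_r$ is finite up to translation and the basis $\{\rho_1, \ldots, \rho_p, \rho\}$ is linearly independent, the coefficients $c_{ji}, d_j$ are bounded by a constant depending only on $r$ and $p$. Using $\rho_i \perp \pi(\rho_1) \cap \cdots \cap \pi(\rho_p)$ together with $\eta \in \mathsf B_\delta(\rho_1, \ldots, \rho_p)$,
\begin{equation*}
|\rho_i \cdot \eta| \;=\; \big|\rho_i \cdot \big(\eta - P(\eta, \pi(\rho_1) \cap \cdots \cap \pi(\rho_p))\big)\big| \;\le\; 2r^2 L^p\delta^{-\gamma},
\end{equation*}
which, together with $|\rho \cdot \eta| \le \mathrm K\delta^{-\gamma}$, yields $|\rho_j'' \cdot \eta| \le \mathrm C(L^p + \mathrm K)\delta^{-\gamma}$ for each $j$.

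To recover $|\eta - P(\eta, W)|_2$, write $\eta - P(\eta, W) = \sum_j a_j \rho_j'' \in W^\perp$ and take inner products with each $\rho_k''$, producing the linear system $\mathcal M a = r$ with Gram matrix $\mathcal M_{jk} = \rho_j'' \cdot \rho_k''$ and right-hand side $r_k = \rho_k'' \cdot \eta$. By the same finiteness-up-to-translation argument, $\|\mathcal M^{-1}\|$ is bounded by a constant depending only on $r$ and $p''$, so $|\eta - P(\eta, W)|_2 \le \mathrm C(L^p + \mathrm K)\delta^{-\gamma}$. The triangle inequality then gives
\begin{equation*}
\big| P(\eta, W) - P\big(\eta, \pi(\rho_1) \cap \cdots \cap \pi(\rho_p) \cap \pi(\rho)\big) \big|_2 \;\le\; \mathrm C(L^p + \mathrm K)\delta^{-\gamma},
\end{equation*}
which for $L$ sufficiently large is bounded by $(L^{p+1} - L^{p''})\delta^{-\gamma}$, since $L^{p+1} - L^{p''} \ge L^p(L-1)$ when $p'' \le p$. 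The only delicate point is the uniform bound on the expansion coefficients and on $\|\mathcal M^{-1}\|$; this is a soft compactness statement relying on $M_r$ containing, up to translation, only finitely many vectors, so only finitely many Gram matrices and change-of-basis matrices need to be controlled. Once that is in hand, the rest is a short combination of Lemma \ref{lem_close to two subspaces implies close to the intersection}, the defining estimates for $\mathsf B_\delta(\rho_1, \ldots, \rho_p)$, and elementary linear algebra.
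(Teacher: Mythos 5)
Your proof is correct, and the first half (the bound on $|\eta - P(\eta,\pi(\rho_1)\cap\cdots\cap\pi(\rho_p)\cap\pi(\rho))|_2$ via Lemma \ref{lem_close to two subspaces implies close to the intersection}) is exactly what the paper does. Where you diverge is in the second defining condition. The paper exploits the nesting $W \subset W'$ of the subspaces $W = \pi(\rho_1)\cap\cdots\cap\pi(\rho_p)\cap\pi(\rho)$ and $W' = \pi(\rho_1'')\cap\cdots\cap\pi(\rho_{p''}'')$: since $P(\eta,W') - P(\eta,W)$ is the component of $\eta - P(\eta,W)$ lying in $W'\cap W^\perp$, one has directly
\begin{equation*}
\big| P(\eta, W') - P(\eta, W) \big|_2 \;\le\; \big| \eta - P(\eta, W) \big|_2 \;\le\; (L-1)L^p\delta^{-\gamma} \;=\; (L^{p+1}-L^p)\delta^{-\gamma} \;\le\; (L^{p+1}-L^{p''})\delta^{-\gamma},
\end{equation*}
so the second condition is an immediate corollary of the first, with no new constants. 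You instead bound $|\eta - P(\eta,W')|_2$ from scratch by expanding the $\rho_j''$ in the basis $\{\rho_1,\ldots,\rho_p,\rho\}$, estimating $|\rho_j''\cdot\eta|$, and inverting a Gram system, then conclude by the triangle inequality. This works and gives the same $\mathrm C(L^p+\mathrm K)\delta^{-\gamma}$ bound, but it forces you to carry an extra uniformity claim (boundedness of the expansion coefficients and of $\|\mathcal M^{-1}\|$ over all cluster configurations), which you correctly identify as the delicate point and correctly justify by finiteness of $M_r$ up to translation; the paper only needs that kind of compactness argument once, inside Lemma \ref{lem_close to two subspaces implies close to the intersection}. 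In short: your route is sound but re-proves quantitatively something that the orthogonal-projection geometry gives for free.
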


\begin{proof}
Let us write $\rho = \rho_{p+1}$.
Let $\eta \in  \mathsf B (\rho_1 , \dots , \rho_p)$.
By Lemma \ref{lem_close to two subspaces implies close to the intersection} and by hypothesis, it holds that 
\begin{equation*}
\big| \eta - P \big( \eta, \pi (\rho_1) \cap \dots \cap \pi (\rho_{p+1}) \big) \big|_2
\; \le \;
\mathrm C \, (\mathrm K \delta^{-\gamma} + L^p \delta^{-\gamma}) \; \le \; (L-1)L^p \delta^{-\gamma}
\end{equation*}
if $L$ is large enough. 
Then 
\begin{equation*}
\big| \eta - P \big( \eta, \pi (\rho_1) \cap \dots \cap \pi (\rho_{p+1}) \big) \big|_2
\; \le \;
L^{p+1} \delta^{-\gamma}
\end{equation*}
and, for every $\rho_1', \dots , \rho'_{p'} \in M_r \cap \spanv (\rho_1, \dots ,\rho_{p+1})$, with $p' < p+1$,
\begin{align*}
\big| P \big( \eta, \pi (\rho_1') \cap \dots \cap \pi (\rho_{p'}') \big) - P \big( \eta, \pi (\rho_1) \cap \dots \cap \pi (\rho_{p+1}) \big) \big|_2
\; &\le \;
\big| \eta - P \big( \eta, \pi (\rho_1) \cap \dots \cap \pi (\rho_{p+1}) \big) \big|_2 \\
\; &\le \; \big(L^{p+1} - L^{p}\big) \delta^{-\gamma}
\; \le\; \big(L^{p+1} - L^{p'}\big) \delta^{-\gamma}.
\end{align*}
This shows $\eta \in \mathsf B (\rho_1 , \dots ,\rho_p,\rho)$.
\end{proof}

\begin{proofof}[Proposition \ref{prop_invariance resonant set}]
Let us start with the first claim. 
Let $\rho\in M_r$ be such that $\supp (\rho) \subset \mathrm B(x,4r)$, and let $\eta \in \R^N$ be such that $\theta_x (\eta) > 0$. 
On the one hand, from the definition \eqref{smooth indicator good set} of $\theta_x$, it holds that there exists $\eta' \in \R^N$, with $\max_x |\eta_x'| \le 2 \delta^{-\gamma}$,
such that $\eta + \eta' \notin \mathsf R(x)$. 
On the other hand, since $\supp (\rho) \subset \mathrm B(x,4r)$, we conclude that $\{ \rho \}$ alone is a cluster around $x$ so that, 
if $\eta'' \in \R^N$ is such that 
\begin{equation*}
\frac{|\rho \cdot \eta''|}{|\rho|_2} \; = \; \big| \eta'' - P(\eta'',\pi (\rho)) \big| \; \le \; L \delta^{-\gamma},
\end{equation*}
then $\eta'' \in \mathsf R(x)$.
We thus conclude that $|(\eta + \eta')\cdot \rho| > |\rho|_2 L \delta^{-\gamma} > L \delta^{-\gamma}$, and so that 
\begin{equation*}
|\eta \cdot \rho| \; = \; |(\eta + \eta') \cdot \rho \; - \; \eta' \cdot \rho| \; > \; L \delta^{-\gamma} - 4 \delta^{-\gamma} r^2.
\end{equation*}
It follows that
\begin{equation*}
\abs{\Delta_{\rho} E (\eta)} = \abs{2 \eta \cdot \rho + \abs{\rho}^2} > 2L\delta^{-\gamma} -  8 \delta^{-\gamma} r^2 - 4r^4 > 2 \delta^{-\gamma} 
\end{equation*}
if $L$ is large enough and $\delta$ is small enough. We conclude that $\zeta_{\rho}(\eta) = \xi_{\delta^{-\gamma}} (\Delta_{\rho}E(\eta)) = 0$. 

Let us then show the second part of the Proposition.
Since, by \eqref{eq_defhtilde} and \eqref{eq_boundedrange}, 
the Hamiltonian $\widetilde{h}$ takes the form
\begin{equation*}
\widetilde{h}
\; = \; 
\sum_{\rho \in M_r} \hat A_{\rho} \hat{b}_{\rho} \hat{\zeta}_{\rho},
\end{equation*}
we have
\begin{equation*}
\big( \ad_{\widetilde h} \hat{\th}_x \big) \hat P_{\eta} 
\; = \; 
\sum_{\rho \in M_r} \left[ \hat A_{\rho} \hat{b}_{\rho} \hat{\zeta}_{\rho}, \hat{\theta}_x \right] \hat P_{\eta}
\; = \; 
\sum_{\rho\in M_r} b_{\rho}(\eta) \zeta_{\rho}(\eta) \big(\Delta_{\rho} \theta_x(\eta)\big) \hat A_{\rho} \hat P_{\eta} .
\end{equation*}
It is thus enough to show that for all $\rho \in M_r$ and for all $\eta \notin \mathsf S_{n_2} (x)$ we have
\begin{equation*}
\zeta_{\rho}(\eta) = 0
\quad \text{or} \quad 
\Delta_{\rho} \theta_x (\eta) = 0.
\end{equation*}
Let us thus fix $\eta\in \R^{N}$ and $\rho \in M_r$. We will assume $\zeta_{\rho}(\eta) \neq 0$ and show that $\Delta_{\rho} \theta_x (\eta) = 0$ follows.
Note first that since
\begin{equation*}
\zeta_{\rho} (\eta) = \xi_{\delta^{-\gamma}} \big( \Delta_{\rho} E(\eta) \big) = \xi_{\delta^{-\gamma}} (2 \eta \cdot \rho + \abs{\rho}^2)
\end{equation*}
so $\zeta_{\rho}(\eta) \neq 0$ implies $\abs{\eta \cdot \rho} \leq 2 \delta^{-\gamma}$ if $\delta$ is small enough.
Now, by definition \eqref{smooth indicator good set}, 
we see that $ \Delta_{\rho} \theta_x (\eta) = 0 $ if, for every $\eta'\in \R^{N}$ such that $\max_x|\eta'_x| \le 4\delta^{-\gamma}$, 
it holds that 
\begin{equation*}
\eta + \eta' \in \mathsf R_{n_2} (x)
\qquad \Rightarrow \qquad
\eta + \eta' + t \rho \in \mathsf R_{n_2} (x)
\quad \text{for all $t$ such that $|t| \leq 2r^2$}.
\end{equation*}
Here, the maximal value allowed for $|t|$ is simply the maximal length of of a move $\rho \in M_r$.

We distinguish three cases: In cases 1 and 2 it follows that $\Delta_{\rho} \theta_x (\eta) = 0$ as required, and in case three it follows that $\eta \in S_{n_2}(x)$ so that this case actually does not occur.

1. 
There exists a cluster $\{ \rho_1 , \dots , \rho_p \}$ around $x$, with $p\le n_2$, such that 
$\eta + \eta' \in \mathsf B (\rho_1 , \dots , \rho_p)$ and that 
$\rho\,\bot\,\spanv \{ \rho_1 , \dots ,\rho_p \}$.
It is then seen from the definition of $\mathsf B (\rho_1 , \dots , \rho_p)$ that, 
for every $t\in\R$, $\eta + \eta' + t \rho \in  \mathsf B (\rho_1 , \dots , \rho_p)$. 
Therefore $\eta + \eta' + t \rho \in \mathsf R_{n_2} (x)$ for every $t\in\R$, and we are done.

2.
There exists a cluster $\{ \rho_1 , \dots , \rho_p \}$ around $x$, with $p\le n_2$, such that 
$\eta + \eta' \in \mathsf B (\rho_1 , \dots , \rho_p)$ and that 
$\rho\in\spanv \{ \rho_1 , \dots , \rho_p \}$.
Since $|\rho \cdot \eta | \le 2 \delta^{-\gamma}$ and since $\max_x|\eta_x| \le 4\delta^{-\gamma}$, it holds that $|\rho \cdot (\eta + \eta')| \le (8r^2 + 2) \delta^{-\gamma}$.
Then, by Lemma \ref{lem_invariance of the sets B}, for $|t| \le \delta^{-\gamma}$ we still have $\eta + \eta' + t \rho \in \mathsf B (\rho_1 , \dots , \rho_p)$
if $L$ was chosen large enough. So if we take $\delta$ small enough so that $\delta^{-\gamma} \geq 2r^2$, we are done.

3. 
If neither case 1 or case 2 is realized, then for any cluster $\{ \rho_1 , \dots ,\rho_p \}$ around $x$, with $p\le n_2$ and such that 
$\eta + \eta' \in \mathsf B (\rho_1 , \dots , \rho_p)$, it holds that 
$\rho\,\notin\,\spanv \{ \rho_1 , \dots ,\rho_p \}$, 
and that $\rho\,\cancel{\bot}\,\spanv \{ \rho_1 , \dots , \rho_p \}$. 
Let us see that, since we assume that $\eta \notin \mathsf S_{n_2}(x)$, this case actually does not happen. 
First, for all these clusters, we should have $p=n_2$.
Indeed, otherwise $\{ \rho_1 , \dots , \rho_p, \rho \}$ would form a cluster around $x$ containing $p+1 \le n_2$ independent vectors. 
We would then conclude as in case 2 that  $|\rho \cdot (\eta + \eta')| \le (4r^2 + 2) \delta$,
so that, by Lemma \ref{lem_extension B}, 
$\eta + \eta' \in \mathsf B(\rho_1, \dots ,\rho_p, \rho)$ if $L$ has been chosen large enough. 
This would contradict the assumption ensuring that we are in case 3. 
So $p=n_2$ should hold. 
Writing $\eta'' = \eta + \eta'$, 
we should then conclude from the definition of $\mathsf B (\rho_1 , \dots , \rho_p)$ that, for $1 \le j \le n_2$, 
\begin{equation*}
|\rho_j \cdot \eta''|
\; = \;
|\rho_j|_2
\big| \eta'' - P (\eta'',\pi(\rho_j)) \big|_2
\; \le \; 
|\rho_j|_2
\big| \eta'' - P \big(\eta'',\pi(\rho_1) \cap \dots \cap \pi(\rho_{n_2}) \big) \big|_2
\; \le \; 
|\rho_j|_2 L^{n_2}\delta^{-\gamma}
\end{equation*}
But then 
\begin{equation*}
|\rho_j \cdot \eta|
\; = \; 
|\rho_j \cdot (\eta + \eta') - \rho_j \cdot \eta'|
\; \le \; 
|\rho_j \cdot \eta''| + |\rho_j \cdot \eta'|
\; \le \; 
|\rho_j|_2 L^{n_2}\delta^{-\gamma} + 4 r^2 \delta^{-\gamma}
\; \le \; 
L^{n_2 + 1} \delta^{-\gamma}
\end{equation*}
if $L$ is large enough. 
This would contradict $\eta \notin \mathsf S_{n_2}(x)$.
\end{proofof}

\section{Proof of Theorem \ref{thm_currentdecomposition}}

Let $a \in \Z_N$ be given.

\subsection{New decomposition of the Hamiltonian}

The original splitting of the Hamiltonian in a left and right piece that was used in the definition of the current \eqref{def:current} through the bond $(a, a+1)$ corresponds for the reduced Hamiltonian to the following decomposition:
\begin{equation}\label{original decomposition of the original hamiltonian}
\hat h 
\; = \; 
\hat h_{\le a}^{\mathrm O} + \hat h_{>a}^{\mathrm O}
\; := \; 
\sum_{x \le a} \hat h_x \; + \sum_{x > a} \hat h_x. 
\end{equation}
leading to a reduced current
\begin{equation}\label{def:reduced current}
\hat j_{a,a+1} := i \; \ad_{\hat h} \hat h_{> a}^{\mathrm O} =  \mu^4 \hat J_{a, a+1}.
\end{equation}

We will now obtain a new decomposition of the Hamiltonian that is  equivalent to the one above from the point of view of the conductivity, but leading to an instantaneous current that vanishes for most of the configurations in the Gibbs state at small chemical potential.  
 
Let $n_3 \ge 1$. 
For $x\in \mathrm B(a,n_3)$, we define
\begin{align}
\vartheta_{a,x}
\; &= \;   \frac{1}{\caN}   \bigg(  
\Big(\prod_{y \in \mathrm B(a,n_3)} \theta_y\Big)\,  \delta_{a,x} \, + \, \Big(1 \;\; - \prod_{y \in \mathrm B(a,n_3)} \theta_y \Big) \,  \theta_x    
\bigg), 
\label{definition of vartheta a x}\\[2mm]
\vartheta_{a,*}
\; &= \;    \frac{1}{\caN}    \prod_{y\in\mathrm B(a,n_3)}  (1 - \theta_y).
\label{definition of vartheta a *}
\end{align}
with the normalization factor 
\begin{equation}
\caN \; = \; 
\Big(\prod_{y \in \mathrm B(a,n_3)} \theta_y\Big) \, +  \,  
\Big(1-\prod_{y \in \mathrm B(a,n_3)} \theta_y\Big)\Big(\sum_{x \in \mathrm B(a,n_3)} \theta_x \Big) 
\, + \, 
\prod_{y\in\mathrm B(a,n_3)}  (1 - \theta_y) 
\end{equation}
chosen so that
\begin{equation*}
\sum_{x\in\mathrm B(a,n_3)} \vartheta_{a,x} \; + \; \vartheta_{a,*} \; = \; 1,
\end{equation*}
and satisfying $\caN\geq 1$.   
We then define
\begin{align}
\widetilde{h}_{\le a}
\; &= \; 
\sum_{x\in \mathrm B(a,n_3)} \left( \sum_{y \le x} \widetilde h_y  \right) \hat\vartheta_{a,x}
\; + \;
 \left( \sum_{y \le a} \widetilde h_y \right) \hat\vartheta_{a,*}  , 
\label{h tilde less than a}\\
\widetilde{h}_{> a}
\; &= \; 
\sum_{x\in \mathrm B(a,n_3)} \left(  \sum_{y > x} \widetilde h_y  \right)  \hat\vartheta_{a,x}
\; + \;
\left( \sum_{y > a} \widetilde h_y \right) \hat\vartheta_{a,*}   .
\label{h tilde bigger than a}
\end{align}
It holds that 
\begin{equation*}
\widetilde{h} \; = \; \widetilde{h}_{\le a} + \widetilde{h}_{>a}.
\end{equation*}
By the first point of Proposition \ref{prop_resonant hamiltonian}, we finally define a new decomposition
\begin{equation}\label{new decomposition of the original hamiltonian}
\hat h 
\; = \; 
\hat h_{\le a} + \hat h_{>a}
\; = \; 
\mathcal T_{n_1} (R \widetilde{\hat h}_{\le a}) + \mathcal T_{n_1} (R \widetilde{\hat h}_{> a}).
\end{equation}

\subsection{Definition of $\hat u_a$ and $\hat g_a$}

From the definitions \eqref{original decomposition of the original hamiltonian}, \eqref{def:reduced current} and \eqref{new decomposition of the original hamiltonian}, 
and applying the second point of Proposition \ref{prop_resonant hamiltonian}, we find that
\begin{align}
\hat j_{a,a+1} 
\; &= \; 
i \; \ad_{\hat h} \hat h_{> a}^{\mathrm O} 
\; = \;
i\; \ad_{\hat h} (\hat h_{> a}^{\mathrm O}  - \hat h_{> a}) + i\; \ad_{\hat h} \hat h_{> a} 
\label{redecoupe du courant}\\
\; &= \; 
i \; \ad_{\hat h} (\hat h_{> a}^{\mathrm O}  - \hat h_{> a}) 
+
i\; \mathcal T_{n_1} (R \; \ad_{\widetilde{h}} \widetilde{h}_{> a}) 
+ 
i\; \mu^{n_1+1} \ad_{\hat v} \sum_{k=0}^{n_1} R^{(n_1-k)} \widetilde{h}_{> a}^{(k)}.
\end{align}
Let us call $n_0$ the number $n$ appearing in the statement of the Theorem. 
We define self-adjoint operators
\begin{align}
\hat u_a 
\; &= \; 
\hat h_{> a}^{\mathrm O}  - \hat h_{> a} \; - \; \omega \big(  \hat h_{> a}^{\mathrm O}  - \hat h_{> a} \big) , 
\label{definition of u a}\\
\mu^{n_0+1} \hat g_a 
\; &= \; 
i \; \mathcal T_{n_1} (R \; \ad_{\widetilde{h}} \widetilde{h}_{> a}) 
+ 
i \; \mu^{n_1+1} \ad_{\hat v} \sum_{k=0}^{n_1} R^{(n_1-k)} \widetilde{h}_{> a}^{(k)}.
\label{definition of g a}
\end{align}
Then
\begin{equation}
\hat j_{a,a+1} = i \ad_{\hat h}\hat u_a + \mu^{n_0 + 1} \hat g_a.
\end{equation}
We notice that $\omega(\hat g_a) = 0$ since $\mu^{n_0 + 1} \omega(\hat g_a) = \omega( \ad_{\hat h} \hat h_{>a}) = 0$, by invariance of the Gibbs state.

\subsection{Locality}
Let us show that the operators $\hat u_a$ and $\hat g_a$ are local, meaning that their support consists of sites $z$ with $|z - a| \le \mathrm C_{n_0}$, for some constant $\mathrm C_{n_0} < + \infty$. 
To study $\hat u_a$ we observe that 
\begin{equation*}
\hat h_{>a}^{\mathrm O} - \hat h_{>a} 
\; = \;
- ( \hat h_{\le a}^{\mathrm O} - \hat h_{\le a}). 
\end{equation*}
Let us see that $\hat h_{>a}^{\mathrm O} - \hat h_{>a}$ is supported on sites $z$ with $z \ge a -  (n_3 + (n_2 + 5)r)$.
The operator $\hat h_{>a}^{\mathrm O}$ is supported on sites $z$ with $z \ge a$. 
To analyse $\hat h_{>a}$ defined by \eqref{new decomposition of the original hamiltonian}, 
we first notice that the functions $\vartheta_{a,x}$, with $x\in \mathrm B (a,n_3)$, and $\vartheta_{a,*}$, defined by (\ref{definition of vartheta a x}-\ref{definition of vartheta a *}),
only depend on occupation numbers of sites $z$ with $z \ge a - (n_3 + 4 r + n_2 r)$.
By \eqref{h tilde bigger than a}, the same holds true for $\widetilde{h}_{>a}$, 
since, for any $x\in \Z_N$, the operators $\widetilde h_x$ are supported on sites $z$ with $z\ge x-r$. 
By \eqref{eq_boundedrange}, we conclude that $R\widetilde{h}_{>a}$, and so $\hat h_{>a}$, are supported on sites $z$ with $z \ge a - (n_3 + 6r + n_2 r)$. 
The same holds thus for $\hat h_{>a}^{\mathrm O} - \hat h_{>a}$. 
We can similarly show that $ \hat h_{\le a}^{\mathrm O} - \hat h_{\le a}$ is supported on sites $z$ with $z \le a +   (n_3 + (n_2 + 6)r)$. 
We conclude that $\hat u_a$ defined by \eqref{definition of u a} is supported on sites $z$ with $|z - a| \le  (n_3 + (n_2 + 9)r)$. 

We then readily conclude that $\hat g_a$ is local as well, since, 
going back to \eqref{redecoupe du courant}, we see that $\mu^{n_0+1} g_a$ is the sum of two local functions:
\begin{equation*}
\mu^{n_0+1} \hat g_a 
\; = \; 
i \ad_{\hat h} \hat h_{> a} 
\; = \; 
\hat j_{a,a+1} - i \ad_{\hat h} (\hat h_{>a}^{\mathrm O} - \hat h_{>a}).
\end{equation*}

\subsection{An expression for $\ad_{\widetilde{h}} \widetilde{h}_{>a}$} \label{sec: expression for commutator}

We have
\begin{align}
\ad_{\widetilde h}  \widetilde{h}_{>a}
\; =& \; 
\sum_{x\in \mathrm B(a,n_3)}  \bigg( \ad_{\widetilde h} \sum_{y > x} \widetilde h_y \bigg) \hat\vartheta_{a,x} 
\; + \; 
\bigg( \ad_{\widetilde h} \sum_{y > a} \widetilde h_y \bigg) \hat\vartheta_{a,*} 
\nonumber\\
& \; + \; 
\sum_{x\in \mathrm B(a,n_3)}  \bigg( \sum_{y > x} \widetilde h_y \bigg) \big( \ad_{\widetilde h} \hat\vartheta_{a,x} \big)
\; + \;
\bigg( \sum_{y > a} \widetilde h_y \bigg) \big( \ad_{\widetilde h} \hat\vartheta_{a,*} \big).
\label{first expression for the main commutator}
\end{align}
Let us show that the terms in the first sum in the right hand side vanish, \ie
\begin{equation}\label{crucial cancellation}
\bigg( \ad_{\widetilde h} \sum_{y > x} \widetilde h_y \bigg) \hat\vartheta_{a,x} \; = \; 0
\quad \text{for all} \quad x\in\mathrm B(a,n_3).
\end{equation}
Thanks to the presence of the operator $\mathcal R$ in \eqref{eq_defhtilde}, and thanks to \eqref{eq_boundedrange}, 
we decompose $\widetilde h_x$ as
\begin{equation} \label{eq_splitting of h_x}
\widetilde{h}_x
\; = \;
\hat d_x \;\; + \sum_{\substack{\rho \in M_r \setminus \{0\}: \\ \supp (\rho) \subset \mathrm B(x,r)}} \hat A_{\rho} \hat{b}_{\rho} \hat\zeta_{\rho}.
\end{equation}
We also have
\begin{equation}
\ad_{\widetilde h} \sum_{y > x} \widetilde h_y = \left[ \sum_{z \leq x} \widetilde h_z, \sum_{y > x} \widetilde h_y \right] = \sum_{\substack{ z \leq x < y \\ \abs{z-y} \leq 2r }} \left[ \widetilde h_z, \widetilde h_y \right].
\end{equation}
Fix now $\eta \in \Omega_N$. The claim \eqref{crucial cancellation} is certainly true if $\vartheta_{a, x}(\eta) = 0$. We suppose that this is not the case and proceed to show that then
\begin{equation*}
\bigg( \ad_{\widetilde h} \sum_{y > x} \widetilde h_y \bigg) \hat P_{\eta} = 0.
\end{equation*}
This expression consists entirely of terms of the forms
\begin{equation*}
[\hat d_z, \hat d_y] \hat P_{\eta}, \quad \left[ \hat d_z, \hat A_{\rho} \hat{b}_{\rho} \hat{\zeta_{\rho}} \right] \hat P_{\eta},  \quad \left[ \hat A_{\rho} \hat{b}_{\rho} \hat{\zeta_{\rho}}, \hat d_y \right] \hat P_{\eta} \quad \text{and} \quad \left[ \hat A_{\rho} \hat{b}_{\rho} \hat{\zeta_{\rho}}, \hat A_{\rho'} \hat{b}_{\rho'} \hat{\zeta_{\rho'}} \right] P_{\eta}
\end{equation*}
with $\rho, \rho' \in M_r \setminus \{0\}$ and $\supp(\rho), \supp(\rho') \subset \mathrm B(x, 4r)$. The first of these vanishes because it is a commutator of diagonal operators. The others can all be rearranged so that a factor $\hat \zeta_{\rho} \hat P_{\eta} = \zeta_{\rho}(\eta) \hat P_{\eta}$ with $\rho \in M_r \setminus \{0\}$ and $\supp(\rho) \subset \mathrm B(x, 4r)$ appears. We will thus be done if we can show that $\zeta_{\rho}(\eta) = 0$ for all such $\rho$.

But we have $\vartheta_{a, x}(\eta) > 0$ hence $\theta_x(\eta) > 0$. It then follows from the first point of Proposition \ref{prop_invariance resonant set} that $\zeta_{\rho}(\eta) = 0$ for all $\rho \in M_r \setminus \{0\}$ and $\supp(\rho) \subset \mathrm B(x, 4r)$ and so we are done.

The commutator \eqref{first expression for the main commutator} is now rewritten as
\begin{align}
\ad_{\widetilde h}  \widetilde{h}_{>a}
\; =& \; 
\left[ \sum_{a-r \leq x \leq a} \widetilde h_x, \sum_{a < y \leq a+r} \widetilde h_y \right] \hat \vartheta_{a, *}
\nonumber\\
& \; + \; 
\sum_{x\in \mathrm B(a,n_3)}  \bigg( \sum_{x < y \leq a + n_3} \widetilde h_y   \bigg)  \big( \ad_{\widetilde h} \hat\vartheta_{a,x} \big)
\; + \;
\bigg( \sum_{a < y \leq a + n_3} \widetilde h_y \bigg) \big( \ad_{\widetilde h} \hat\vartheta_{a,*} \big).
\label{expression for ad h tilde after cancellation}
\end{align}

\subsection{Definition of an exceptional set $\mathsf Z\subset \Omega$}
  
Let $\mathsf Z  \subset \R^N$ be such that $\eta \in \mathsf Z$ if and only if
there exists $n_2$ linearly independent vectors $\rho_1, \dots , \rho_{n_2} \in M_r$ such that $( \cup_j \supp (\rho_j) ) \subset \mathrm B (a,2n_3)$
and such that $|\rho_j \cdot \eta | \le L^{n_2 + 1} \delta^{-\gamma}$ for $1\le j \le n_2$. 

We also consider for any $s > 0$ a broadening of the set $\mathsf Z$:
\begin{equation}
\mathsf Z_s := \{ \eta \in \Omega_N \;\; : \;\; B(\eta, s) \cap \mathsf Z \neq \emptyset \}.
\end{equation}

\begin{lemma}\label{lem_exceptional set Z}
If, given $n_1$ and $n_2$, the numbers $L$ and $n_3$ have been taken large enough, then
\begin{enumerate}
\item
If $ad_{\widetilde h}  \widetilde{h}_{>a} \hat P_{\eta} \ne 0$, then $\eta \in \mathsf Z$. 
\item
There exists a constant $\mathrm C = \mathrm C(L,n_1,n_2,n_3, s) < + \infty$ such that 
$\omega(\hat P_{\mathsf Z_s})  \le \mathrm C \mu^{n_2}\delta^{-\gamma n_2}$.
\end{enumerate}
\end{lemma}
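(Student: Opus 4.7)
The plan is to address the two claims separately: claim (1) is an algebraic/geometric consequence of \eqref{expression for ad h tilde after cancellation} together with Proposition \ref{prop_invariance resonant set}, while claim (2) is a direct volume estimate under the Gibbs measure.

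For claim (1), I would treat the three terms on the right-hand side of \eqref{expression for ad h tilde after cancellation} in turn. The second and third terms contain factors $\ad_{\widetilde h} \hat\vartheta_{a,x}$ and $\ad_{\widetilde h} \hat\vartheta_{a,*}$. Since $\vartheta_{a,x}$ and $\vartheta_{a,*}$ are, by \eqref{definition of vartheta a x}--\eqref{definition of vartheta a *}, rational functions of the commuting diagonal operators $\hat\theta_y$ with $y\in\mathrm B(a,n_3)$, a term-by-term application of the second part of Proposition \ref{prop_invariance resonant set} shows that these commutators vanish on $\hat P_\eta$ unless $\eta\in\mathsf S_{n_2}(y)$ for some $y\in\mathrm B(a,n_3)$. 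Unpacking the definition of $\mathsf S_{n_2}(y)$, any such $\eta$ carries a cluster $\{\rho_1,\dots,\rho_{n_2}\}$ of linearly independent vectors satisfying $|\rho_j\cdot\eta|\le L^{n_2+1}\delta^{-\gamma}$; by the percolation condition in the definition of a cluster and the support-size bound from $M_r$, the union of their supports is contained in $\mathrm B(y,4r+(n_2-1)(2r+1))$. For $y\in\mathrm B(a,n_3)$ and $n_3$ chosen large enough in terms of $n_2$ and $r$, this is inside $\mathrm B(a,2n_3)$, certifying $\eta\in\mathsf Z$.

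For the first term of \eqref{expression for ad h tilde after cancellation}, non-vanishing on $\hat P_\eta$ forces $\vartheta_{a,*}(\eta)\neq 0$, hence $\theta_y(\eta)<1$ for every $y\in\mathrm B(a,n_3)$. By the convolution formula \eqref{smooth indicator good set}, for each such $y$ there is $\eta'_y$ with $\max_z|(\eta'_y)_z|\le 2\delta^{-\gamma}$ and a cluster $\{\rho_1^{y},\dots,\rho_{p_y}^{y}\}$ around $y$ with $p_y\le n_2$ such that $\eta+\eta'_y\in\mathsf B(\rho_1^{y},\dots,\rho_{p_y}^{y})$, and, by the last item in the definition of a cluster, with at least one cluster vector $\varrho^{y}$ satisfying $\supp(\varrho^{y})\subset\mathrm B(y,4r)$. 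The first condition \eqref{first condition B set} together with $|\varrho^{y}|_2\le 2r^2$ gives $|\varrho^{y}\cdot(\eta+\eta'_y)|\le 2r^2 L^{n_2}\delta^{-\gamma}$, and absorbing $\eta'_y$ yields $|\varrho^{y}\cdot\eta|\le L^{n_2+1}\delta^{-\gamma}$ for $L$ large enough. I would then select $n_2$ sites $y_1,\dots,y_{n_2}\in\mathrm B(a,n_3)$ pairwise separated by strictly more than $8r$, which is possible provided $n_3\ge 5rn_2$; the corresponding vectors $\varrho^{y_1},\dots,\varrho^{y_{n_2}}$ then have pairwise disjoint supports lying in $\mathrm B(a,n_3+4r)\subset\mathrm B(a,2n_3)$, are therefore linearly independent, and exhibit $\eta\in\mathsf Z$.

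For claim (2), I would proceed by a union bound
\begin{equation*}
\omega(\hat P_{\mathsf Z_s})
\;\le\; \sum_{(\rho_1,\dots,\rho_{n_2})} \omega\!\Big(\big\{\eta\in\Omega_N:\;|\rho_j\cdot\eta|\le L^{n_2+1}\delta^{-\gamma}+2r^2 s\ \text{for all}\ j\big\}\Big),
\end{equation*}
where the sum runs over ordered $n_2$-tuples of linearly independent $\rho_j\in M_r$ whose supports lie in $\mathrm B(a,2n_3)$. The number of such tuples is finite and bounded by some $C(r,n_2,n_3)$. For a fixed tuple, pick sites $x_1,\dots,x_{n_2}$ such that the minor $((\rho_j)_{x_k})_{j,k}$ is invertible; conditioning on $\{\eta_z:z\notin\{x_1,\dots,x_{n_2}\}\}$ confines $(\eta_{x_1},\dots,\eta_{x_{n_2}})$ to an axis-aligned box of side at most a constant times $\delta^{-\gamma}$. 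Since under $\omega$ each $\eta_x$ is an independent geometric variable with parameter $1-\mathrm e^{-\mu}$, the probability that $\eta_x$ lies in an interval of length $K$ is at most $CK\mu$, so the conditional box probability is at most $C(\mu\delta^{-\gamma})^{n_2}$. Summing over the finitely many tuples yields the claimed bound $\omega(\hat P_{\mathsf Z_s})\le C\mu^{n_2}\delta^{-\gamma n_2}$.

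The step I expect to be the main obstacle is extracting $n_2$ honestly linearly independent resonances in the first-term case of claim (1): one must upgrade the purely qualitative information "$\theta_y(\eta)<1$ for every $y$ in an interval of length $2n_3+1$" into a quantitative list of $n_2$ resonant hopping vectors with pairwise disjoint, hence independent, supports contained in $\mathrm B(a,2n_3)$. The disjoint-supports pigeonhole handles this cleanly, but forces a specific minimum size of $n_3$ in terms of $n_2$ and $r$, which then has to propagate consistently into the locality constants for $\hat u_a$ and $\hat g_a$ established in the preceding subsection.
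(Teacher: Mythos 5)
Your proposal is correct and follows essentially the same route as the paper: claim (1) is reduced via \eqref{expression for ad h tilde after cancellation} to either $\vartheta_{a,*}(\eta)\neq 0$ (handled by pigeonholing well-separated sites $y$ with $\theta_y<1$ into $n_2$ resonant vectors with disjoint supports) or to $\eta\in\mathsf S_{n_2}(y)$ via the second part of Proposition \ref{prop_invariance resonant set}, and claim (2) is the same cylinder-volume estimate under the factorized Gibbs measure. Your versions are if anything slightly more explicit than the paper's (the $>8r$ separation guaranteeing genuine linear independence, and the invertible-minor/conditioning argument for the measure of $\mathsf Z_s$), but the ideas are identical.
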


\begin{proof}
Let us start with the first point.
From \eqref{expression for ad h tilde after cancellation}, we conclude that,
if $ \big( \ad_{\widetilde h}  \widetilde{h}_{>a} \big) \hat P_{\eta} \ne 0$, 
then at least one of the following quantities needs to be non-zero: 
$\vartheta_{a,*}  (\eta)$, 
or $\big( \ad_{\widetilde h} \hat\vartheta_{a,*} \big) \hat P_{\eta}$, 
or $\big( \ad_{\widetilde h} \hat\vartheta_{a,x} \big) \hat P_{\eta}$ for some $x\in \mathrm B(a,n_3)$. 
But if $\ad_{\widetilde h} \big( \prod_{x \in \mathrm B(a, n_3)} (1 - \hat{\theta}_x) \big) \hat P_{\eta} \neq 0$ then $\big( \prod_{x \in \mathrm B(a, n_3)} (1 - \hat\theta_x) \big) \hat P_{\eta'} \neq 0$ for some $\eta' \in \Omega_N$ with $\abs{\eta - \eta'}_2 \leq r$.
Therefore, by inspection of the definitions \eqref{definition of vartheta a x} and \eqref{definition of vartheta a *}, the condition $\big( \ad_{\widetilde h}  \widetilde{h}_{>a} \big) \hat P_{\eta} \ne 0$ implies actually 
\begin{equation*}
\Big( \prod_{x \in \mathrm B(a, n_3)} (1 - \hat\theta_x) \Big) P_{\eta'} \neq 0 \quad \text{ for some } \quad \eta' \in \Omega_N \quad \text{ with } \quad \abs{\eta - \eta'}_2 \leq r
\end{equation*}
or
\begin{equation*}
\big( \ad_{\widetilde h} \hat\theta_x \big) \hat P_{\eta} \neq 0 \quad \text{ for some } x \in \mathrm B(a, n_3).
\end{equation*}

In the first case, $\theta_x(\eta') < 1$ for all $x \in \mathrm B(a, n_3)$. But then, by definition \eqref{smooth indicator good set}, there exists some $\rho \in \R^N$ with $\max_y \abs{\rho_y} \leq 2 \delta^{-\gamma}$ such that $\eta'' = \eta' + \rho \in \mathrm R(x)$. There exists therefore a cluster $\{\rho_1, \dots , \rho_p \}$ around $x$, with $p \leq n_2$, such that \eqref{first condition B set} holds. 
This implies 
\begin{equation*}
|\rho_1 \cdot \eta''|
\; = \;
|\rho_1|_2
\big| \eta'' - P (\eta'',\pi(\rho_1)) \big|_2
\; \le \; 
|\rho_1|_2
\big| \eta'' - P \big(\eta'',\pi(\rho_1) \cap \dots \cap \pi(\rho_{p}) \big) \big|_2
\; \le \; 
|\rho_1|_2
L^{p}\delta^{-\gamma}
\end{equation*}
and therefore $|\eta' \cdot \rho_1| \le |\rho_1|_2 L^{p}\delta^{-\gamma} + r^2 \delta^{-\gamma}$ and hence
\begin{equation*}
\abs{\eta \cdot \rho_1} = \abs{ \eta' \cdot \rho_1 + (\eta - \eta') \cdot \rho_1 } \leq \abs{\rho_1}_2 L^p \delta^{-\gamma}  + r^2 \delta^{-\gamma} + r \leq L^{n_2 + 1} \delta^{-\gamma}
\end{equation*}
if $L$ is large enough, and using that $p \leq n_2$.

It holds by definition of a cluster around $x$ that $\supp(\rho_1) \subset \mathrm B(x,4r)$.   Let us now take another $x'$ such that $\theta_{x'}(\eta') < 1$ and $\str x-x'\str >4r$. Then the same reasoning gives a vector $\rho'_1 \neq \rho_1$ satisfying again  $|\eta \cdot \rho'_1|  \le L^{n+1} \delta^{-\gamma}$.   By taking   $n_3$ large enough, we can find $n_2$ linearly independent vectors and thus  guarantee that $\eta \in \mathsf Z$.

Suppose now that $\ad_{\widetilde h} \hat\theta_{x} \hat P_{\eta} \ne 0$ for some $x\in \mathrm B(a,n_3)$.  
It then follows from the second assertion of Proposition \ref{prop_invariance resonant set} that $\eta \in \mathsf S(x)$, 
so that, by definition, there exists a cluster $\{ \rho_1, \dots , \rho_{n_2} \}$ around $x$ such that $|\eta \cdot \rho_j| \le L^{n_2 + 1} \delta^{-\gamma}$. 
This implies $\eta \in \mathsf Z$.

We now move to the second claim of the Lemma.

Since Gibbs measure factorises in the number basis we find
\begin{align}
\omega \big( \hat P_{\mathsf Z_s}) &= \frac{  \sum_{\eta \in \Omega_N}  \chi_{\mathsf Z_s}(\eta)  \ed^{-\mu \abs{\eta}_1}  }{ \sum_{\eta \in \Omega_N}  \ed^{-\mu \abs{\eta}_1}   }  =  \frac{  \sum_{\eta \in \mu \Omega_N}  \chi_{\mathsf Z_s}(\mu^{-1} \eta)  \ed^{-\abs{\eta}_1  }}{ \sum_{\eta \in \mu\Omega_N}  \ed^{-\abs{\eta}_1}   } \\
&\leq   C(n_3)  \frac{  \int   \chi_{\mathsf Z_s}(\mu^{-1}\eta)  \prod_{x \in B(a, 2n_3)} \ed^{-\eta_x}  \dd \eta_x  }{   \int   \prod_{x \in B(a, 2n_3)} \ed^{-\eta_x}  \dd \eta_x   }
\end{align}
for $\mu$ small enough. It then follows by exploiting that the set $\mathsf Z_s$ is a finite union of cylinders whose bases have volume of order $\delta^{-\gamma n_2}$ that
\begin{equation}
\omega\big( \hat P_{\mathsf Z_s} \big) \leq C \big( \mu \delta^{-\gamma} \big)^{n_2}.
\end{equation}
\end{proof}

\subsection{Bounds on the norms of $\hat u_a$ and $\hat g_a$}

In this subsection, we fix $\de = \mu$, so we are dealing again with the Bose-Hubbard Hamiltonian up to an overall factor of $\mu^2$. For $\hat f \in \caS$ we again use the notation $\hat f \sim \delta^{-n}$ with the same meaning as in the proof of \eqref{eq_deltadependencetildeh} and \eqref{eq_deltadependenceR}.

Let us first look at $\hat u_a$. Using \eqref{eq_deltadependencetildeh} and \eqref{eq_deltadependenceR} and the fact that the functions $\vartheta_{a, x}$ and $\vartheta_{a, *}$ are bounded we find that the local function $\hat h^O_{>a} - \hat h_{>a}$ takes the form
\begin{align}
\hat h^O_{>a} - \hat h_{>a}
\; &= \; \sum_{n=0}^{n_1} \mu^{n} (\hat h^O_{> a} - \hat h_{> a})^{(n)} \label{decomposition  h original - h in powers of epsilon} \\
\; &= \; 
\hat h^O_{>a} - \mathcal T_{n_1} \big( R \widetilde h_{>a}\big)
\; = \;
\hat h^O_{> a} - \sum_{n=0}^{n_1} \mu^n \sum_{k=0}^n R^{(n-k)} \widetilde h^{(k)}_{>a}
\; \sim \;
\sum_{n=0}^{n_1} \mu^{n} \delta^{-2n\gamma'}. \label{decomposition  H original - H differently}
\end{align}
with the dominant contribution for each $n$ coming from the $k=0$ term.

But we can still improve on this for the $(\hat h^O_{>a} - \hat h_{>a})^{(0)}$ term. Define $\mathsf W \subset \Omega$ as the set containing all $\eta$ such that $\theta_x(\eta) <1$ for some $x \in \mathrm B(a, n_3)$. By inspection of the definitions \eqref{definition of vartheta a x} and \eqref{definition of vartheta a *} we have
\begin{equation*}
\vartheta_{a,x}(\eta)=\delta_{a,x}, \qquad  \vartheta_{a,*}(\eta)=0, \qquad   \text{for}\, \,  \eta \in \Omega \setminus \mathsf W,
\end{equation*}
so writing
\begin{align*}
(\hat h^O_{> a} - \hat h_{>a})^{(0)} & = 
\sum_{x >a} \hat d_x -  \widetilde h^{(0)}_{>a}  \\
 &=   \sum_{x>a} \hat d_x -   \sum_{x\in \mathrm B(a,n_3)}   \left( \sum_{y >x}  \hat d_y \right)  \hat\vartheta_{a,x} -  \left( \sum_{y > a} \hat d_y \right)  \hat\vartheta_{a,*}
\end{align*}
we see that
\begin{equation*}
(\hat h^O_{>a} - \hat h_{>a})^{(0)} \hat P_{\eta} = 0, \qquad  \text{for}\, \,  \eta \in \Omega \setminus \mathsf W.
\end{equation*}
Therefore
\begin{equation}\label{eq:bounding u}
\omega \left( \left(  (\hat h^O_{>a} - \hat h_{>a})^{(0)}   \right)^2 \right) = \omega \left( \left(  (\hat h^O_{>a} - \hat h_{>a})^{(0)}   \right)^2 \hat P_{\mathsf W} \right) \leq \omega \left( \left(  (\hat h^O_{>a} - \hat h_{>a})^{(0)}   \right)^4 \right) \omega \big( \hat P_{\mathsf W} \big).
\end{equation}
The operator $\big((\hat h^O_{>a} - \hat h_{>a})^{(0)}\big)^4$ is a finite polynomial of reduced field operators and diagonal operators that are uniformly bounded independent of $\mu$ and $\delta$. Its expectation value with respect the the Gibbs measure can therefore be bounded by some constant. Concerning the projector $\hat P_{\mathsf W}$ we note that for any $\eta \in \mathsf W$ there is at least one $\rho \in M_r$ with $\supp(\rho) \subset \mathrm B(a, n_3)$ and such that $\abs{\eta \cdot \rho} \leq L^{n_2 + 1} \delta^{-\gamma}$ and thus $\omega (\hat P_{\mathsf W}) \leq C \mu \delta^{-\gamma} = \mu^{\gamma'}$. Therefore
\begin{equation}\label{eq:bound on h^(0)}
\omega \left( \left(  (h^O_{>a} - h_{>a})^{(0)}   \right)^2 P_{\mathsf W} \right) \leq C \mu^{\gamma'}
\end{equation}
for some constant $C$ depending on $n_1$ but, crucially, not depending on the volume thanks to the locality of $(h^O_{>a} - h_{>a})^{(0)}$.

In a similar way, from \eqref{decomposition  H original - H differently} we obtain
\begin{equation}\label{eq:bound on h^(n)}
\omega \left( \left( \mu^n (\hat h^O_{>a} - \hat h_{>a})^{(n)} \right)^2 \right) \leq C \mu^{2n} \delta^{-2n \gamma'} = C \mu^{2n(1-\gamma')}
\end{equation}
for $1 \leq n \leq n_1$. Using (\eqref{eq:bound on h^(0)}, \eqref{eq:bound on h^(n)}) in \eqref{decomposition  h original - h in powers of epsilon}  and since $\gamma' \in (0, 1/2)$ we therefore find
\begin{equation}
\omega \left( (\hat h^O_{>a} - \hat h_{>a})^2 \right) \leq  C \mu^{\gamma'}.
\end{equation}
The constant $C$ depends on $n_1$ but not on the volume, and the bound holds for $\mu$ small enough. We conclude that $\omega \big( \hat u_a^2 \big) \leq C \mu^{\gamma'}$

Next, we want $\omega(\hat g_a^2) \leq C\mu^8$ so that $\omega(\hat G_a^2) = \omega((\mu^{-4} \hat g_a)^2) = \mu^{-8} \omega((\hat g_a)^2) $ can be bounded by a constant. We start from the definition \eqref{definition of g a} and we note that both terms in this definition are local;
for $\mathcal T_{n_1} (R \ad_{\widetilde{h}} \widetilde{h}_{> a}) $ this follows from the explicit expression in Section \ref{sec: expression for commutator} 
and the other term is then local as a difference of local terms. 
We compute
\begin{equation}\label{second brol preuve theoreme prinicpal}
\omega(\hat g_a^2)
\; \le \; 
2\mu^{-2 (n_0 + 1)} \omega \left( \left(\mathcal T_{n_1}(R \ad_{\widetilde h} \widetilde h_{> a}) \right)^2 \right)
\; + \; 
2\mu^{2(n_1 - n_0)} \omega \left( \left( \ad_{\hat v} \sum_{k=0}^{n_1} R^{(n_1-k)} \widetilde{h}_{> a}^{(k)} \right)^2 \right).
\end{equation}

We look first at the second term and conclude, by means of \eqref{eq_deltadependencetildeh} and \eqref{eq_deltadependenceR} that it is of the form
\begin{equation*}
\ad_{\hat v} \sum_{k=0}^{n_1} R^{(n_1-k)} \widetilde{h}_{> a}^{(k)}
\; \sim \; 
\delta^{-2n_1 \gamma' + \gamma}.  
\end{equation*}
We can thus bound, using locality, the fact that $\gamma' < 1/2$ and taking $\mu$ small enough,
\begin{equation*}
\mu^{2(n_1 - n_0)} \omega\left( \left( \ad_{\hat  v} \sum_{k=0}^{n_1} R^{(n_1-k)} \widetilde{h}_{> a}^{(k)} \right)^2 \right) \leq C_{n_0} \mu^{8}
\qquad \text{if} \qquad 
n_1 = \left\lceil\frac{n_0 + 4 - \gamma}{1 - 2\gamma'}\right\rceil.
\end{equation*}

We then analyse the first term in \eqref{second brol preuve theoreme prinicpal}. By the first point of Lemma \ref{lem_exceptional set Z}, the expression $\ad_{\widetilde h} \widetilde h_{> a} \hat P_{\eta}$ vanishes for all $\eta \in \Omega_N \setminus \mathsf Z$, so, remembering the definition \eqref{definition du parametre r} of the parameter $r$ we see that  $\mathcal T_{n_1} (R \;  \ad_{\widetilde h} \widetilde h_{> a}) \hat P_{\eta}$ certainly vanishes for all $\eta \in \Omega_N \setminus \mathsf Z_r$ with
\begin{equation}
\mathsf Z_r = \{ \eta \in \Omega_N \; : \; \mathrm B(\eta, r) \cap \mathsf Z \neq \emptyset \}.
\end{equation}
so
\begin{equation}\label{eq:bounding g}
\omega \left(  \left(  \mathcal T_{n_1} (R \;  \ad_{\widetilde h} \widetilde h_{> a}) \right)^2  \right) = \omega \left(  \left(  \mathcal T_{n_1} (R \;  \ad_{\widetilde h} \widetilde h_{> a}) \right)^2 \hat P_{\mathsf Z_r} \right).
\end{equation}
Using (\ref{eq_deltadependencetildeh}, \ref{eq_deltadependenceR}) and $\gamma' < 1/2$, we find that
\begin{equation*}
\mathcal T_{n_1} (R \;  \ad_{\widetilde h} \widetilde h_{> a}) = \sum_{n=0}^{n_1} \mu^n \sum_{m = 0}^n R^{(n-m)} \sum_{j = 0}^m \left[ \widetilde h^{(m-j)}, \widetilde h^{(j)}_{> a}  \right] \sim \sum_{n = 1}^{n_1} \mu^n \delta^{-2n\gamma' + 4\gamma' + \gamma} \sim \mu^{1 + 2\gamma' + \gamma}
\end{equation*}
where we have used the fact that $[\widetilde h^{(0)}, \widetilde h^{(0)}_{> a}] = 0$ to start the last sum at $n = 1$. We conclude that
\begin{equation*}
\omega \left(  \left(  \mathcal T_{n_1} (R \;  \ad_{\widetilde h} \widetilde h_{> a}) \right)^4  \right) \leq C \mu^{4 + 8\gamma' + 4\gamma}
\end{equation*}
hence
\begin{equation}
\omega \left(  \left(  \mathcal T_{n_1} (R \;  \ad_{\widetilde h} \widetilde h_{> a}) \right)^2  \right) \leq \omega \left(  \left(  \mathcal T_{n_1} (R \;  \ad_{\widetilde h} \widetilde h_{> a}) \right)^4  \right) \omega \big( \hat P_{\mathsf Z_r} \big) \leq C \mu^{4(1+\gamma) + (8 + n_2)\gamma'}
\end{equation}
Where we used the second point of \ref{lem_exceptional set Z}. By taking $n_2 = \lceil (10 + 2n_0) / \gamma' \rceil$ we conclude from \eqref{second brol preuve theoreme prinicpal} that $\omega(g_a^2) \leq C_{n_0} \mu^8$ if $\mu$ is small enough.

Thus, if we choose $\gamma' = 1/4$ then we have shown that
\begin{equation}
\omega(u_a^2) \leq C \mu^{1/4} \qquad \text{ and } \qquad \omega(g_a^2) \leq \mu^8.
\end{equation}

\subsection{Back to the original Bose-Hubbard Hamiltonian}

Defining the operators
\begin{equation*}
\hat U_a := \mu^{-2} u_a \qquad \text{and} \qquad G_a := \mu^{-4}g_a 
\end{equation*}
we decompose the current across the bond $(a, a+1)$ as
\begin{align*}
J_{a, a+1} &= i \; \ad_{\hat H} \big( \hat H_{> a} \big) = i \; \ad_{\mu^{-2} \hat h} \big( \mu^{-2} \hat h^O_{>a} \big) = \mu^{-4} i \; \ad_{\hat h} \big( \hat h^O_{>a} \big) \\
&=  \mu^{-4} \hat j_{a, a+1} = i \; \ad_{\mu^{-2} \hat h} \big( \mu^{-2} \hat u_a \big) + \mu^{n_0 + 1} \big( \mu^{-4} \hat g_a \big) = i \; \ad_{\hat H} \hat U_a + \mu^{n_0 + 1} \hat G_a.
\end{align*}
Furthermore, we have showed in the previous subsection that
\begin{equation*}
\omega(\hat U_a^2) = \mu^{-4} \omega(\hat u_a^2) \leq C \mu^{-4} \mu^{1/4} < C \mu^{-4}
\end{equation*}
and
\begin{equation*}
\omega(\hat G_a^2) = \mu^{-8} \omega(\hat g_a^2) \leq C_{n_0}.
\end{equation*}
Taking $n = n_0$ concludes the proof of Theorem \ref{thm_currentdecomposition}.  \hfill$\square$\vskip.5cm

\section{Proof of Theorem \ref{thm_nekhoroshev}}

Let $I = \{a_1+1, \cdots, a_2\} \subset \Z_N$ be an interval on the chain and let $\hat H_I = \sum_{x \in I} \hat H_x$ be the energy in the interval $I$. Since $\hat H_I = \hat H_{> a_1} - \hat H_{> a_2}$ we have
\begin{equation*}
i\ad_{\hat H} \big( \hat H_I \big) = i \; \ad_{\hat H} \big( \hat H_{> a_1} \big) - i \; \ad_{\hat H} \big( \hat H_{> a_2} \big) = \hat J_{a_1, a_1 + 1} - \hat J_{a_2, a_2 + 1}.
\end{equation*}
Let us write $\hat H_I(t)$ for the energy in the interval $I$ at time $t$ in the Heisenberg picture, then
\begin{equation*}
\hat H_I(t) - \hat H_I(0) = \int_0^t \dd s \; \frac{\dd \hat H_I(t)}{\dd t} = \int_0^t \dd s \; i \; \ad_{\hat H} \big(\hat H_I \big)(s) = \int_0^t \dd s \; \Big( \hat J_{a_1, a_1 +1}(s) - \hat J_{a_2, a_2 +1}(s) \Big).
\end{equation*}
Using Theorem \ref{thm_currentdecomposition} we find for any $a \in \Z_N$ that
\begin{equation*}
\int_0^t \dd s \; \hat J_{a}(s) = \int_0^t \dd s \; \Big( i \; \ad_{\hat H} (\hat U_a)(s) + \mu^{n + 1} \hat G_a(s) \Big) = \hat U_a(t) - \hat U_a(0) + \mu^{n+1} \int_0^t \dd s \; \hat G_a(s),
\end{equation*}
so
\begin{equation*}
\hat H_I(t) - \hat H_I(0) = \sum_{j \in \{1, 2\}} (-1)^j \left( \hat U_{a_j}(t) - \hat U_{a_j}(0) + \mu^{n+1} \int_0^t \dd s \; \hat G_{a_j}(s) \right).
\end{equation*}
It follows that
\begin{align*}
\omega \left( \big(\hat H_I(t) - \hat H_I(0) \big)^2 \right) \leq 2 \omega &\left( \left( \sum_{j \in \{1, 2\}} (-1)^j \big( \hat U_{a_j}(t) - \hat U_{a_j}(0) \big) \right)^2 \right) \\
&+ 2 \mu^{2n + 2} \omega \left( \left( \sum_{j \in \{1, 2\}} (-1)^j \int_0^t \dd s \; \hat G_{a_j}(s)  \right)^2 \right).
\end{align*}
Using Cauchy-Schwarz, time-translation invariance of the Gibbs state and the bound of Theorem \ref{thm_currentdecomposition}, the first term in this expression is bounded by
\begin{align*}
\sum_{j, k \in \{1, 2\}} (-1)^{j+k} &\Big( \omega \big( \hat U_{a_j}(t) \hat U_{a_k}(t) \big) - \omega \big( \hat U_{a_j}(t) \hat U_{a_k}(0) \big) - \omega \big( \hat U_{a_j}(0) \hat U_{a_k}(t) \big) + \omega \big( \hat U_{a_j}(0) \hat U_{a_k}(0) \big)  \Big) \\
&\leq 4 \sum_{j, k \in \{1, 2\}} \big( \omega(\hat U_{a_j}^2) \omega(\hat U_{a_k}^2) \big)^{1/2}  \leq C \mu^{-4}.
\end{align*}
Taking $0 \leq t \leq \mu^{-n}$, we can bound the second term in the same way by
\begin{align*}
4 \mu^{2n + 2} \sum_{j \in \{1, 2\}} \omega \left( \left( \int_0^t \dd s \; \hat G_{a_j}(s) \right)^2 \right)  &= 4 \mu^{2n + 2} \sum_{j \in \{1, 2\}} \int_0^t \dd s \int_0^t \dd s' \omega\big(\hat G_{a_j}(s) \hat G_{a_j}(s')\big) \\
& \leq 4 \mu^{2n + 2} \sum_{j \in \{1, 2\}} \mu^{-2n} \omega\big(\hat G_{a_j}^2\big) \leq C \mu^2,
\end{align*}
thus proving the theorem. \hfill$\square$\vskip.5cm

\bibliographystyle{plain}

\bibliography{loclibrary}

\begin{thebibliography}{10}

\bibitem{abdul2016mathematical}
Houssam Abdul-Rahman, Bruno Nachtergaele, Robert Sims, and G{\"u}nter Stolz.
\newblock Mathematical results on many-body localization in the disordered xy
  spin chain.
\newblock {\em arXiv preprint arXiv:1610.01939}, 2016.

\bibitem{basko2006metal}
DM~Basko, IL~Aleiner, and BL~Altshuler.
\newblock Metal--insulator transition in a weakly interacting many-electron
  system with localized single-particle states.
\newblock {\em Annals of physics}, 321(5):1126--1205, 2006.

\bibitem{benedikter2016mean}
Niels Benedikter, Marcello Porta, and Benjamin Schlein.
\newblock Mean-field regime for bosonic systems.
\newblock In {\em Effective Evolution Equations from Quantum Dynamics}, pages
  7--16. Springer, 2016.

\bibitem{bera2016density}
Soumya Bera, Giuseppe De~Tomasi, Felix Weiner, and Ferdinand Evers.
\newblock Density propagator for many-body localization: finite size effects,
  transient subdiffusion,(stretched-) exponentials.
\newblock {\em arXiv preprint arXiv:1610.03085}, 2016.

\bibitem{carleo2012localization}
Giuseppe Carleo, Federico Becca, Marco Schir{\'o}, and Michele Fabrizio.
\newblock Localization and glassy dynamics of many-body quantum systems.
\newblock {\em Scientific Reports}, 2:243, 2012.

\bibitem{de2014asymptotic}
Wojciech De~Roeck and Fran{\c{c}}ois Huveneers.
\newblock Asymptotic quantum many-body localization from thermal disorder.
\newblock {\em Communications in Mathematical Physics}, 332(3):1017--1082,
  2014.

\bibitem{de2014scenario}
Wojciech De~Roeck and Fran{\c{c}}ois Huveneers.
\newblock Scenario for delocalization in translation-invariant systems.
\newblock {\em Physical Review B}, 90(16):165137, 2014.

\bibitem{de2015asymptotic}
Wojciech De~Roeck and Fran{\c{c}}ois Huveneers.
\newblock Asymptotic localization of energy in nondisordered oscillator chains.
\newblock {\em Communications on Pure and Applied Mathematics},
  68(9):1532--1568, 2015.

\bibitem{de2016absence}
Wojciech De~Roeck, Francois Huveneers, Markus M{\"u}ller, and Mauro Schiulaz.
\newblock Absence of many-body mobility edges.
\newblock {\em Physical Review B}, 93(1):014203, 2016.

\bibitem{deutsch1991quantum}
JM~Deutsch.
\newblock Quantum statistical mechanics in a closed system.
\newblock {\em Physical Review A}, 43(4):2046, 1991.

\bibitem{eisert2015quantum}
Jens Eisert, Mathis Friesdorf, and Christian Gogolin.
\newblock Quantum many-body systems out of equilibrium.
\newblock {\em Nature Physics}, 11(2):124--130, 2015.

\bibitem{gornyi2005interacting}
IV~Gornyi, AD~Mirlin, and DG~Polyakov.
\newblock Interacting electrons in disordered wires: Anderson localization and
  low-t transport.
\newblock {\em Physical review letters}, 95(20):206603, 2005.

\bibitem{grover2014quantum}
Tarun Grover and Matthew~PA Fisher.
\newblock Quantum disentangled liquids.
\newblock {\em Journal of Statistical Mechanics: Theory and Experiment},
  2014(10):P10010, 2014.

\bibitem{he2016possibility}
Rong-Qiang He and Zheng-Yu Weng.
\newblock On the possibility of many-body localization in a doped mott
  insulator.
\newblock {\em Scientific Reports}, 6, 2016.

\bibitem{hepp1974classical}
Klaus Hepp.
\newblock The classical limit for quantum mechanical correlation functions.
\newblock {\em Communications in Mathematical Physics}, 35(4):265--277, 1974.

\bibitem{hickey2016signatures}
James~M Hickey, Sam Genway, and Juan~P Garrahan.
\newblock Signatures of many-body localisation in a system without disorder and
  the relation to a glass transition.
\newblock {\em Journal of Statistical Mechanics: Theory and Experiment},
  2016(5):054047, 2016.

\bibitem{huse2014phenomenology}
David~A Huse, Rahul Nandkishore, and Vadim Oganesyan.
\newblock Phenomenology of fully many-body-localized systems.
\newblock {\em Physical Review B}, 90(17):174202, 2014.

\bibitem{huveneers2013drastic}
Fran{\c{c}}ois Huveneers.
\newblock Drastic fall-off of the thermal conductivity for disordered lattices
  in the limit of weak anharmonic interactions.
\newblock {\em Nonlinearity}, 26(3):837, 2013.

\bibitem{imbrie2016many}
John~Z Imbrie.
\newblock On many-body localization for quantum spin chains.
\newblock {\em Journal of Statistical Physics}, 163(5):998--1048, 2016.

\bibitem{imbrie2016review}
John~Z Imbrie, Valentina Ros, and Antonello Scardicchio.
\newblock Review: Local integrals of motion in many-body localized systems.
\newblock {\em arXiv preprint arXiv:1609.08076}, 2016.

\bibitem{kagan1984localization}
Yu~Kagan and LA~Maksimov.
\newblock Localization in a system of interacting particles diffusing in a
  regular crystal.
\newblock {\em Zhurnal Eksperimental’noi i Teoreticheskoi Fiziki},
  87:348--365, 1984.

\bibitem{khemani2016obtaining}
Vedika Khemani, Frank Pollmann, and SL~Sondhi.
\newblock Obtaining highly excited eigenstates of many-body localized
  hamiltonians by the density matrix renormalization group approach.
\newblock {\em Physical Review Letters}, 116(24):247204, 2016.

\bibitem{kim2016localization}
Isaac~H Kim and Jeongwan Haah.
\newblock Localization from superselection rules in translationally invariant
  systems.
\newblock {\em Physical review letters}, 116(2):027202, 2016.

\bibitem{kjall2014many}
Jonas~A Kj{\"a}ll, Jens~H Bardarson, and Frank Pollmann.
\newblock Many-body localization in a disordered quantum ising chain.
\newblock {\em Physical review letters}, 113(10):107204, 2014.

\bibitem{luitz2015many}
David~J Luitz, Nicolas Laflorencie, and Fabien Alet.
\newblock Many-body localization edge in the random-field heisenberg chain.
\newblock {\em Physical Review B}, 91(8):081103, 2015.

\bibitem{mastropietro2016localization}
Vieri Mastropietro.
\newblock Localization in the ground state of an interacting quasi-periodic
  fermionic chain.
\newblock {\em Communications in Mathematical Physics}, 342(1):217--250, 2016.

\bibitem{nandkishore2014many}
Rahul Nandkishore and David~A Huse.
\newblock Many body localization and thermalization in quantum statistical
  mechanics.
\newblock {\em Annual Review of Condensed Matter Physics}, 6:15--38, 2015.

\bibitem{oganesyan2007localization}
Vadim Oganesyan and David~A Huse.
\newblock Localization of interacting fermions at high temperature.
\newblock {\em Physical Review B}, 75(15):155111, 2007.

\bibitem{pal2010many}
Arijeet Pal and David~A Huse.
\newblock Many-body localization phase transition.
\newblock {\em Physical review b}, 82(17):174411, 2010.

\bibitem{papic2015many}
Z~Papi{\'c}, E~Miles Stoudenmire, and Dmitry~A Abanin.
\newblock Many-body localization in disorder-free systems: The importance of
  finite-size constraints.
\newblock {\em Annals of Physics}, 362:714--725, 2015.

\bibitem{pino2016nonergodic}
Manuel Pino, Lev~B Ioffe, and Boris~L Altshuler.
\newblock Nonergodic metallic and insulating phases of josephson junction
  chains.
\newblock {\em Proceedings of the National Academy of Sciences},
  113(3):536--541, 2016.

\bibitem{poschel1993nekhoroshev}
J{\"u}rgen P{\"o}schel.
\newblock Nekhoroshev estimates for quasi-convex hamiltonian systems.
\newblock {\em Mathematische Zeitschrift}, 213(1):187--216, 1993.

\bibitem{potter2015universal}
Andrew~C Potter, Romain Vasseur, and SA~Parameswaran.
\newblock Universal properties of many-body delocalization transitions.
\newblock {\em Physical Review X}, 5(3):031033, 2015.

\bibitem{ros2015integrals}
V~Ros, M~M{\"u}ller, and A~Scardicchio.
\newblock Integrals of motion in the many-body localized phase.
\newblock {\em Nuclear Physics B}, 891:420--465, 2015.

\bibitem{schiulaz2013ideal}
Mauro Schiulaz and M~M{\"u}ller.
\newblock Ideal quantum glass transitions: many-body localization without
  quenched disorder.
\newblock {\em AIP Conf. Proc.}, 1610.

\bibitem{schiulaz2015dynamics}
Mauro Schiulaz, Alessandro Silva, and Markus M{\"u}ller.
\newblock Dynamics in many-body localized quantum systems without disorder.
\newblock {\em Physical Review B}, 91(18):184202, 2015.

\bibitem{seiringer2016decay}
Robert Seiringer and Simone Warzel.
\newblock Decay of correlations and absence of superfluidity in the disordered
  tonks--girardeau gas.
\newblock {\em New Journal of Physics}, 18(3):035002, 2016.

\bibitem{serbyn2013local}
Maksym Serbyn, Z~Papi{\'c}, and Dmitry~A Abanin.
\newblock Local conservation laws and the structure of the many-body localized
  states.
\newblock {\em Physical review letters}, 111(12):127201, 2013.

\bibitem{shapir1982localization}
Yonathan Shapir, Amnon Aharony, and A~Brooks Harris.
\newblock Localization and quantum percolation.
\newblock {\em Physical Review Letters}, 49(7):486, 1982.

\bibitem{singh2016localization}
Rajeev Singh and Efrat Shimshoni.
\newblock Localization due to interaction-enhanced disorder in bosonic systems.
\newblock {\em arXiv preprint arXiv:1610.01395}, 2016.

\bibitem{srednicki1994chaos}
Mark Srednicki.
\newblock Chaos and quantum thermalization.
\newblock {\em Physical Review E}, 50(2):888, 1994.

\bibitem{stolz2016many}
G{\"u}nter Stolz.
\newblock Many-body localization for disordered bosons.
\newblock {\em New Journal of Physics}, 18(3):031002, 2016.

\bibitem{van2015dynamics}
Merlijn van Horssen, Emanuele Levi, and Juan~P Garrahan.
\newblock Dynamics of many-body localization in a translation-invariant quantum
  glass model.
\newblock {\em Physical Review B}, 92(10):100305, 2015.

\bibitem{veselic2005spectral}
Ivan Veseli{\'c}.
\newblock Spectral analysis of percolation hamiltonians.
\newblock {\em Mathematische Annalen}, 331(4):841--865, 2005.

\bibitem{vosk2015theory}
Ronen Vosk, David~A Huse, and Ehud Altman.
\newblock Theory of the many-body localization transition in one-dimensional
  systems.
\newblock {\em Physical Review X}, 5(3):031032, 2015.

\bibitem{yao2014quasi}
N.~Y. Yao, C.~R. Laumann, J.~I. Cirac, M.~D. Lukin, and J.~E. Moore.
\newblock Quasi-many-body localization in translation-invariant systems.
\newblock {\em Phys. Rev. Lett.}, 117:240601, 2016.

\bibitem{vznidarivc2008many}
Marko {\v{Z}}nidari{\v{c}}, Toma{\v{z}} Prosen, and Peter Prelov{\v{s}}ek.
\newblock Many-body localization in the heisenberg x x z magnet in a random
  field.
\newblock {\em Physical Review B}, 77(6):064426, 2008.

\end{thebibliography}

\end{document}